\documentclass[11pt]{article}
\usepackage[T1]{fontenc}
\usepackage[utf8]{inputenc}
\usepackage[
letterpaper,
margin=1in
]{geometry}

\usepackage{setspace}
\usepackage{multirow}
\usepackage{graphicx}
\usepackage{dsfont}
\usepackage{mathtools}
\usepackage{amssymb}
\usepackage{amsthm}
\usepackage{thmtools}
\usepackage{amsfonts}
\usepackage{amsmath}
\usepackage{accents}
\usepackage{bm}
\usepackage{float}
\usepackage[normalem]{ulem}
\usepackage{xcolor}
\definecolor{citegreen}{HTML}{208054}
\definecolor{citeblue}{HTML}{0055cc}
\usepackage[
colorlinks=true,
linkcolor=citeblue,
citecolor=citegreen,
linktocpage=true
]{hyperref}
\usepackage[capitalise]{cleveref}
\usepackage{comment}
\usepackage{orcidlink}

\usepackage{physics}
\setcounter{MaxMatrixCols}{16}
\usepackage[utf8]{inputenc}

\usepackage{graphicx}
\usepackage{subcaption}

\newtheorem{theorem}{Theorem}[section]
\newtheorem*{theorem*}{Theorem}
\newtheorem{definition}[theorem]{Definition}
\newtheorem{remark}[theorem]{Remark}
\newtheorem{fact}[theorem]{Fact}
\newtheorem{proposition}[theorem]{Proposition}
\newtheorem{lemma}[theorem]{Lemma}
\newtheorem{claim}[theorem]{Claim}
\newtheorem{corollary}[theorem]{Corollary}
\newtheorem*{corollary*}{Corollary}

\usepackage{tikz}
\usetikzlibrary{tikzmark}
\usetikzlibrary{shapes,snakes, arrows.meta}
\usepackage{dirtytalk}

\usepackage{pgfplots}
\pgfplotsset{compat=1.13}

\usepackage[style=alphabetic,backref=true,maxnames=99,maxalphanames=4,natbib=true,isbn=false]{biblatex}
\renewbibmacro*{doi+eprint+url}{%
  \iftoggle{bbx:doi}
    {\iffieldundef{url}{\printfield{doi}}{}}
    {}%
  \newunit\newblock
  \iftoggle{bbx:eprint}
    {\usebibmacro{eprint}}
    {}%
  \newunit\newblock
  \iftoggle{bbx:url}
    {\usebibmacro{url+urldate}}
    {}%
}
\DeclareSourcemap{
  \maps[datatype=bibtex]{
    \map{
      \step[fieldset=urldate, null]
    }
  }
}

\DefineBibliographyStrings{english}{
backrefpage = {p.},
backrefpages = {pp.}
}
\addbibresource{citations.bib}

\def\be{\begin{equation}}
\def\ee{\end{equation}}
\def\bea{\begin{eqnarray}}
\def\eea{\end{eqnarray}}
\def\bean{\begin{eqnarray*}}
\def\eean{\end{eqnarray*}}
\def\gsim{\mathrel{\rlap{\lower0.2em\hbox{$\sim$}}\raise0.2em\hbox{$>$}}}
\def\ksim{\mathrel{\rlap{\lower0.2em\hbox{$\sim$}}\raise0.2em\hbox{$<$}}}
\def\kg{\mathrel{\rlap{\lower0.25em\hbox{$>$}}\raise0.25em\hbox{$<$}}}

\newcommand\glnf[1]{$\mathrm{GL}_n\left(\mathbb{F}_2\right)$}
\newcommand{\pmone}{\{\pm 1\}}




\renewcommand{\O}{\mathcal{O}}

\newcommand{\R}{\mathbb{R}}
\newcommand{\C}{\mathbb{C}}

\newcommand{\I}{\mathbb{I}}

\renewcommand{\i}{\mathrm{i}}
\newcommand{\Orth}{\mathrm{O}}
\newcommand{\SO}{\mathrm{SO}}
\newcommand{\SU}{\mathrm{SU}}
\newcommand{\U}{\mathrm{U}}

\let\tr\relax
\DeclareMathOperator{\tr}{tr}

\DeclareMathOperator{\pf}{pf}

\DeclareMathOperator{\LOCC}{LOCC}
\DeclareMathOperator{\LCOCC}{LCOCC}
\renewcommand{\op}[2]{\ket{#1}\!\bra{#2}}
\renewcommand{\ip}[2]{\langle #1 | #2 \rangle}
\newcommand{\melem}[3]{\langle #1 | #2 | #3 \rangle}
\renewcommand{\ev}[2]{\melem{#2}{#1}{#2}}
\renewcommand{\ket}[1]{| #1 \rangle}
\renewcommand{\bra}[1]{\langle #1 |}

\newcommand{\T}{\mathrm{T}}

\renewcommand{\l}[1]{\mathopen{}\left#1}
\renewcommand{\r}[1]{\right#1\mathclose{}}

\newcommand{\BPP}{\mathsf{BPP}}
\newcommand{\BQP}{\mathsf{BQP}}
\newcommand{\sharpP}{\#\mathsf{P}}
\newcommand{\Family}{\mathcal{F}}
\newcommand{\Hadamard}{\mathrm{H}}
\newcommand{\Phase}{\mathrm{S}}
\newcommand{\CZ}{\mathrm{CZ}}
\newcommand{\subspace}[1]{\mathcal{H}(#1)}
\newcommand{\VS}{\mathrm{VS}}
\newcommand{\HS}{\mathrm{HS}}
\newcommand{\edgeops}{\mathcal{E}}

\pagestyle{plain}
\AtBeginDocument{%
  \typeout{*** catcode of &: \the\catcode`\&}%
  \typeout{*** meaning of &: \meaning\&}%
}

\begin{document}

\title{Fermionic Insights into Measurement-Based Quantum Computation: Circle Graph States Are Not Universal Resources}
\author{
Brent Harrison\thanks{Dartmouth College} \,\orcidlink{0000-0003-4056-6888}
\and
Vishnu Iyer\thanks{University of Texas at Austin} \,\orcidlink{0000-0001-8072-1390}
\and
Ojas Parekh\thanks{Sandia National Laboratories} \,\orcidlink{0000-0003-2689-9264}
\and
Kevin Thompson\thanks{Sandia National Laboratories} \,\orcidlink{0000-0001-5669-2200}
\and
Andrew Zhao\thanks{Sandia National Laboratories} \,\orcidlink{0000-0002-0299-0277}
}

\date{May 15, 2026}

\maketitle

\begin{abstract}
Measurement-based quantum computation (MBQC) is a strong contender for realizing quantum computers. A critical question for MBQC is the identification of resource graph states that can enable universal quantum computation. Any such universal family must have unbounded entanglement width, which is equivalent to the ability to produce any circle graph state from the states in the family using only local Clifford operations, local Pauli measurements, and classical communication. Yet, it was not previously known whether or not circle graph states themselves are a universal resource. We show that, in spite of their expressivity, circle graph states are not efficiently universal for MBQC (i.e., assuming $\mathsf{BQP} \neq \mathsf{BPP}$). We prove this by articulating a precise graph-theoretic correspondence between circle graph states and a certain subset of fermionic Gaussian states. This is accomplished by synthesizing a variety of techniques that allow us to handle both stabilizer states and fermionic Gaussian states at the same time. As such, we anticipate that our developments may have broader applications beyond the domain of MBQC as well.
\end{abstract}

\clearpage

\tableofcontents

\clearpage

\section{Introduction}

Measurement-based quantum computation (MBQC) is a promising route to develop practical, large-scale quantum computers \cite{raussendorf_one-way_2001,perdrix_measurement-based_2004,jozsa_introduction_2005,briegel2009measurement,bourassa2021blueprint}. At a high level, MBQC involves the synthesis of so-called resource states such that an appropriately chosen sequence of single-qubit measurements on these states simulates some quantum circuit. A family of resource states is said to be universal if MBQC can simulate \emph{any} quantum computation in this manner. The MBQC paradigm therefore shifts the bulk of the computational effort onto the preparation of such resource states. This approach is enticing because it allows for the possibility of offline resource-state preparation, thereby reducing the online work to the relatively straightforward task of performing (adaptive) single-qubit measurements. Considerable effort has been made toward identifying which resource states allow for universal MBQC \cite{van_den_nest_universal_2006,bremner2009random,gross2009most,daniel2020computational}.

Typically, candidates for resource states are sought within the set of graph states \cite{hein2006entanglement}. These are stabilizer states uniquely specified by a graph, and this structure makes their preparation conceptually straightforward, requiring only controlled-$Z$ gates applied to $\ket{+}$ states. Furthermore, every stabilizer state is local Clifford equivalent (LC-equivalent) to some graph state and vice versa \cite{schlingemann2002stabilizer,hein2004multiparty,van2004graphical}. Thus it is typical to restrict attention to graph states when searching for universal resource states.

Circle graphs constitute a family of graphs that is particularly relevant to this topic. A circle graph is defined by a chord diagram on a circle:~each chord represents a vertex, and two vertices are adjacent if and only if their corresponding chords intersect. Circle graph states (CGSs) appear naturally in MBQC as follows. Two graph states are reachable from each other, using only local Clifford gates, local Pauli measurements, and classical communication, if and only if one graph is a so-called vertex minor of the other \cite{dahlberg2018transforming}. Circle graphs, then, are a canonical example of a graph family closed under vertex minors, and the role they play in vertex-minor theory parallels the role that planar graphs play in graph-minor theory \cite{kim2024vertex}.

Circle graphs are also interesting because they are a simple class of vertex-minor-closed graphs with unbounded rank width \cite{kim2024vertex}. This in turn implies that circle graph states have unbounded entanglement width \cite{van2007classical}. Unbounded entanglement width is known to be a necessary condition for MBQC universality of graph states \cite{van_den_nest_universal_2006}, and a family of graph states has unbounded entanglement width if and only if it has all CGSs as vertex minors \cite{geelen_grid_2023}. Thus circle graphs play a deep role in the characterization of MBQC universality. Finding structurally simple universal resource states is valuable, and so this raises the question of whether circle graphs themselves constitute a universal resource family \cite{raussendord2007talk,raussendorf2025PC}.

To address this question, we leverage an unexpected connection to fermionic physics. Stabilizer states, of which graph states form an important subset, exhibit striking structural parallels to another fundamental class of quantum many-body states, known as fermionic Gaussian states (FGSs). Stabilizer states are defined as the joint eigenstates of commuting Pauli operators and underpin quantum error-correcting codes and the Gottesman--Knill theorem \cite{Gottesman1997,AaronsonGottesman2004}. On the other hand, FGSs are ground and thermal states of quadratic fermionic Hamiltonians, are fully characterized by their one-body correlations, and play a central role in free-fermion models of condensed-matter physics and chemistry \cite{Peschel2003}. Both families admit compact algebraic descriptions (binary symplectic vs.\ skew-symmetric matrices) and support efficient updates under their respective linear dynamical groups (Clifford vs.\ matchgate circuits), enabling efficient classical simulability in many contexts \cite{calderbank1997quantum,bravyi2004lagrangian}.

Through the Jordan--Wigner and related mappings, FGSs can be mapped to stabilizer states and vice versa. Such mappings, however, do not preserve important structure such as locality between the fermions and qubits in a meaningful way. Therefore we instead employ a fermion-to-qubit mapping originally introduced to solve the Kitaev honeycomb model \cite{kitaev2006anyons}, which encodes a qubit into four Majorana fermions. We show that this particular mapping elucidates a precise correspondence between certain families of graph states and of FGSs. Using techniques from graph theory and isotropic systems \cite{bouchet_isotropic_1987,bouchet1988graphic,bouchet_connectivity_1989}, we find that this family of graph states is precisely that of circle graphs. In fact, our family of FGSs is expressive enough to represent all quantum states that are local unitary equivalent (LU-equivalent) to CGSs.

This finding allows us to resolve the open question \cite{raussendord2007talk,raussendorf2025PC} of whether circle graphs are universal resource states for MBQC in the negative:

\begin{theorem*}[Informal]\label{thm:main}
    Circle graph states are not universal for MBQC, unless $\BQP = \BPP$.
\end{theorem*}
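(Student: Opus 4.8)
The plan is to show that MBQC with circle graph states can be efficiently classically simulated; universality of the family would then force $\BQP \subseteq \BPP$, and with the trivial inclusion $\BPP \subseteq \BQP$ this yields $\BQP = \BPP$. The whole argument rests on the correspondence established earlier in the paper: every circle graph state is, up to local unitaries, the qubit image of a fermionic Gaussian state (FGS) under the Kitaev four-Majorana encoding. Since FGSs admit efficient classical simulation under Gaussian operations, the strategy is to show that an entire MBQC protocol on a circle graph state descends to FGS preparation followed by a sequence of adaptive Gaussian measurements.

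First I would pin down the operative notion of efficient universality: a family is universal for MBQC if a classical polynomial-time procedure takes any circuit $C$ to a resource state in the family (of size $\poly(|C|)$) together with an adaptive single-qubit measurement pattern whose outcome distribution reproduces that of $C$. The target is then a $\BPP$ algorithm that samples from this distribution. I would build it by (i) computing, from the chord diagram specifying the circle graph, the covariance matrix of the associated FGS, an explicit and polynomial-time conversion guaranteed by the correspondence, and (ii) simulating the adaptive measurements on this FGS.

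The crux is that each single-qubit measurement remains Gaussian after encoding. Any single-qubit projective measurement is the measurement of an observable $\hat{n}\cdot\vec{\sigma}$ for some unit vector $\hat{n}$, and under the Kitaev encoding each Pauli becomes a quadratic Majorana operator ($X\mapsto\i b^x c$, and likewise for $Y$ and $Z$), so $\hat{n}\cdot\vec{\sigma}\mapsto \i(n_x b^x + n_y b^y + n_z b^z)c$ is again quadratic and hence corresponds to a Gaussian measurement. The local unitaries relating the circle graph state to the FGS only rotate the axis $\hat{n}$ and therefore preserve this property; moreover, because these corrections factorize over qubits, the post-measurement state retains the same ``local-unitary times encoded-FGS'' form after each step. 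Adaptivity is then harmless: simulating a Gaussian measurement yields its outcome probabilities, from which we sample and feed the result forward to select the next basis. Composing (i), (ii), and this measurement loop gives an efficient classical sampler for the protocol's output, hence a $\BPP$ algorithm for any $\BQP$ language.

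I expect the principal obstacle to be careful bookkeeping of the encoding's gauge structure rather than any single hard estimate. The four-Majorana code carries a parity stabilizer defining the physical subspace, and one must verify that state preparation, the local-unitary corrections, and every quadratic-observable measurement commute with this stabilizer, so that the simulated state never leaves the Gaussian manifold (equivalently, so the induced fermionic map is genuinely Gaussian and not merely Gaussian on a subspace). A secondary point is confirming that each reduction -- circle graph to covariance matrix, and one round of Gaussian measurement -- incurs only polynomial overhead, so that the qualitative statement ``universal resource'' sharpens into the quantitative ``efficiently classically simulable'' that the complexity-theoretic conclusion requires.
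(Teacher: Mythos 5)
Your overall strategy (efficient classical simulation implies non-universality unless $\BQP=\BPP$) matches the paper, as does your identification of single-qubit observables with quadratic Majorana operators under the Kitaev encoding. However, there is a genuine gap at the heart of your simulation step, and it is not the gauge ``bookkeeping'' you anticipate. The circle graph state $\ket{G}$ is \emph{not} itself the qubit image of a fermionic Gaussian state: it is the projection of the Gaussian matching state $\ket{\Psi}$ onto the joint $+1$-eigenspace of the \emph{quartic} gauge operators $D_v = -c_v^1c_v^2c_v^3c_v^4$, and $\ket{\Psi}$ is a uniform superposition $\frac{1}{\sqrt{2^{n-1}}}\sum_s\ket{\psi(s)}$ over all $2^{n-1}$ even-parity gauge sectors (\cref{thm:embedded_graph_state}). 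That projection is not a Gaussian operation, and the other sectors host states $\ket{\psi(s)}$ that are only \emph{locally Clifford equivalent} to $\ket{G}$, with the qubit identification itself carrying sector-dependent signs ($\tilde X\cong sX$, \cref{claim:pauli_subspace_mapping}). Consequently, measuring your quadratic observable $\i\,(n_xb^x+n_yb^y+n_zb^z)c$ directly on the Gaussian state $\Psi$ yields statistics averaged over all gauge sectors---i.e., over Pauli-corrected copies of $\ket{G}$ measured along sign-flipped axes---which for non-Pauli bases is not the distribution of $\hat n\cdot\vec\sigma$ on $\ket{G}$. Verifying that your operations commute with the parity stabilizer does not repair this, because the initial Gaussian state is not supported in the codespace to begin with.

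The paper circumvents this by never measuring quadratic observables on $\Psi$ directly. Instead it builds, for each single-qubit state $\ket{\phi}$, a rank-one \emph{Gaussian} representative $\tilde\phi$ that already lies inside the $+1$ gauge sector, satisfying $\tilde\phi\,\frac{\I+D}{2}=\tilde\phi$ (\cref{lem:prod_state_fermions}); overlaps $\tr\bigl(\prod_j\tilde\phi_j\,\Psi\bigr)$ then automatically select the $\ket{G}$ component, up to the computable factor $2^{n-1}$ (\cref{thm:prod_state_overlap}). The remaining difficulty---which your adaptive scheme would also have to face---is partial measurement: summing the two basis representatives gives only the projector $\frac{\I+D_j}{2}$ rather than $\I$, so tracing out unmeasured qubits is nontrivial; this is handled by \cref{lem:psi_4t_equal,lem:pprime=p}, after which marginals reduce to Pfaffians of principal submatrices of $\Gamma_\Phi+\Gamma_\Psi$ (\cref{thm:efficient_partial_prob_formula}), with adaptivity handled by conditional probabilities and rejection sampling rather than explicit state updates. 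Your proposal is missing these ingredients, and without them the simulation computes the wrong distribution.
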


This result is in line with prior works (see \cref{sec:related_work} for a review) showing that MBQC with certain resource states can be simulated efficiently by classical computers. That is, our proof of non-universality is an efficient classical algorithm for MBQC with CGSs. As such, the notion of universality above should be understood as \emph{efficient} universality~\cite{van2007fundamentals}. Core to our proof is the fact that FGSs admit compact classical descriptions related to free-fermion solvability. 

\subsection{Importance of circle graph states}
\label{sec:importance_circle_graphs}
Local operations and classical communication (LOCC) is a widely used model for understanding entanglement, reflecting spatially separated parties who cannot generate shared entanglement. A more restrictive model is obtained by allowing only local \emph{Clifford} operations and classical communication (LCOCC). For a family of states $\Family$, let $\LOCC(\Family)$ and $\LCOCC(\Family)$ consist of all states reachable from $\Family$ using LOCC or LCOCC operations, respectively. One may observe that the states in $\LOCC(\Family)$ or $\LCOCC(\Family)$ cannot exhibit entanglement that was not already present in $\Family$.

Entanglement width is a structural measure of bottleneck entanglement in a multipartite system, reflecting both the amount and the distribution of correlations throughout the system. More precisely, it is the minimum, over all trees of degree at most $3$ whose leaves correspond to the qubits, of the maximum bipartite entanglement entropy across any edge cut of the tree~\cite{van_den_nest_universal_2006}. For graph states, entanglement width elegantly coincides with the associated graph's \emph{rank width}. Entanglement width is an important measure for MBQC since its divergence is necessary for universality~\cite{van_den_nest_universal_2006}, while families of states with bounded entanglement width can be simulated classically~\cite{van2007classical}. Hence, unless $\BQP = \BPP$, such a family cannot be an efficient universal resource for MBQC. We refer the reader to \cref{sec:EntanglementWidth} for a review of entanglement width and related concepts.

Understanding fundamental families with unbounded entanglement width is therefore a natural strategy for identifying universal resources for MBQC. This is where CGSs step in, being a simple class with unbounded entanglement width. Moreover, if a family of graph states $\Family$ has unbounded entanglement width, then $\LCOCC(\Family)$ must contain every CGS. This is implied by the result of Dahlberg and Wehner~\cite{dahlberg2018transforming} (vertex minors and LCOCC are equivalent for graph states), combined with that of Geelen, Kwon, McCarty, and Wollan~\cite{geelen_grid_2023} (a family of graphs has unbounded rank width if and only if it contains all circle graphs as vertex minors). This parallels one of the main results from Robertson and Seymour's celebrated work on graph minors~\cite{robertson1986graph}:~any family of graphs with unbounded tree width must have every planar graph as a minor. Thus, analogous to the way in which planar graphs signify unbounded tree width, circle graphs signify unbounded rank width.    

Our work settles a natural question:~is the family of circle graph states itself a universal resource?

\subsection{Related work}\label{sec:related_work}

\paragraph{Classical simulation of MBQC.} The universality and the simulability of MBQC with certain resource states, although formally distinct notions, are closely operationally related. Assuming $\BQP \neq \BPP$, a family cannot be efficiently universal~\cite{van2007fundamentals} if it can be classically simulated in polynomial time (note that the converse is not necessarily true). In this paper, we focus on the simulation perspective. Nielsen~\cite{nielsen2006cluster} showed that MBQC on 1D cluster states (i.e., graph states on a regular grid) can be simulated efficiently, which was extended by Yoran and Short~\cite{yoran2006classical} to 2D cluster states of logarithmically bounded width. These results were further subsumed by Markov and Shi~\cite{markov2008simulating} and Jozsa~\cite{jozsa2006simulation}, who generalized to any graph state with logarithmic tree width.

Beyond graph states, Shi, Duan, and Vidal~\cite{shi2006classical} ruled out any state admitting an efficient tree tensor network (TTN) description (i.e., each tensor having polynomially bounded rank). Van den Nest, D{\"u}r, Vidal, and Briegel~\cite{van2007classical} built on this result by efficiently constructing such a TTN for any state whose so-called Schmidt-rank width grows at most logarithmically. Like entanglement width, unbounded Schmidt-rank width is also necessary (but not sufficient) for universality.

The simulability of the above cases can be essentially attributed to the fact that those states are constrained in some important parameter (e.g., tree width, Schmidt-rank width). On the other hand, there are not many known examples where the resource family is defined \emph{a priori} irrespective of such constraints. To our knowledge, the best-known example is the surface code, one of the most popular approaches for achieving fault tolerance within the circuit model~\cite{kitaev2003fault,bravyi1998quantum}. Notably, surface code states exhibit entanglement width growth at least linear in system size~\cite{bravyi2007measurement,van2007classicalspin}, implying that they lie outside the class of states described above. Nonetheless, Bravyi and Raussendorf~\cite{bravyi2007measurement} showed the existence a classical simulation algorithm with surface code resource states on a 2D square lattice, assuming that the measured and unmeasured qubits remain connected at each stage of the MBQC. The techniques of Bravyi, Gosset, and Liu~\cite{bravyi2022simulate} later improved this result, generalizing from square lattices to any planar graph and removing the connectivity condition. Goff and Raussendorf~\cite{goff2012classical} generalized in a different direction, showing that efficient simulation is possible for surface code states embedded in any surface of at most logarithmic genus.

Finally, some works have studied the power of MBQC wherein errors have corrupted an otherwise universal resource state. For example, Browne \emph{et al.}~\cite{browne2008phase} considered qubit losses in the 2D cluster state (the canonical example of a universal resource state). They found that, above some critical threshold for the probability that any given qubit is lost, the MBQC process can be classically simulated. In another model, Atallah \emph{et al.}~\cite{atallah2024efficient} considered perturbing the initial $\ket{+}$ states (used to produce the resource graph state) toward diagonal single-qubit states. Again, they demonstrated a perturbation threshold beyond which the MBQC can be efficiently simulated.

\paragraph{Isotropic systems and graph states.} Isotropic systems were introduced by Bouchet \cite{bouchet_isotropic_1987,bouchet1988graphic,bouchet_connectivity_1989}, and recently they have been appreciated in quantum information \cite{dahlberg_counting_2020} to precisely correspond to stabilizer groups.  Many classic results in the theory of stabilizer codes have simple proofs using the theory of isotropic systems, for instance, that stabilizer states are LC-equivalent to graph states \cite{schlingemann2002stabilizer} can easily be derived from Bouchet's results \cite[Section 3]{bouchet1988graphic}.  Certain isotropic systems called \emph{graphic systems} arise from the notion of a $4$-regular multigraph, and the corresponding graph states are naturally associated to circle graphs \cite{bouchet1988graphic}.  This connection lays the foundation for our work to link FGSs (defined with respect to the $4$-regular multigraph, see \cref{sec:matching_states}) with CGSs.

\paragraph{Connections between stabilizer states and FGSs.} The moral similarities between stabilizer states and fermionic Gaussian states have been heavily studied prior. For example, algorithmic ideas for learning near-stabilizer states have been successfully adapted to learning near-Gaussian states \cite{grewal2024efficient,leone2024learning,hangleiter2024bell,mele2025efficient}. More relevantly, Herasymenko, Anshu, Terhal, and Helsen~\cite{herasymenko2024fermionic} used fermion-to-qubit mappings to shed light on a fermionic version of the no low-energy trivial states (NLTS) problem. Their work proves the existence of local fermionic Hamiltonians without low-energy states prepared by arbitrary Gaussian operations and shallow fermionic circuits, which is in strong analogy to a line of work constructing qubit Hamiltonians without low-energy states of low circuit or low stabilizer complexity \cite{anshu2023nlts,coble2023local,coble2023hamiltonians}. There is also an extensive literature on Majorana stabilizer codes, which give a fermionic view of quantum error correction~\cite{kitaev2001unpaired,bravyi2010majorana,vijay2015majorana,chapman2022free,bettaque2025structure}. 

Beyond making connections between these two classes of states, a recent series of works have begun to explore how the two formalisms enabling their classical simulability can be handled together. It has been known since the work of Jozsa and Miyake~\cite{jozsa2008matchgates} that the union of matchgates (equivalent to fermionic Gaussian unitaries) and Clifford gates generates a universal gate set. However, one can ask what restrictions on their combination yield quantum circuits that remain classically tractable;~this was explored initially by Van den Nest~\cite{van2011simulating}, and later more deeply by Projansky, Necaise, and Whitfield~\cite{projansky2025gaussianity}. Kang \emph{et al.}~\cite{kang2025quon} introduced an alternative framework, dubbed the 2D quon language, under which matchgates and Clifford circuits arise from a single unified structure. As an example application, they construct a new class of Clifford--matchgate hybrid matrix product states using this language.

\section{Technical Overview}

Here we provide a summary of the main ideas of this paper and give a high-level overview of the techniques developed.

\subsection{Kitaev's fermion-to-qubit mapping}

At the core of our results is a mapping between fermions on $2n$ modes and $n$-qubit states. This mapping was first introduced by Kitaev to solve the honeycomb spin model~\cite{kitaev2006anyons}, although we provide a different perspective and application of it. Our construction makes use of several graph-theoretic concepts, which we review in \cref{sec:background}.

First, note that one fermionic mode can be associated with two Majorana operators. We will group $n$ pairs of fermionic modes, and in each pair designate the four Majorana operators by $c_j^1, c_j^2, c_j^3, c_j^4$, $j \in [n]$. Together, these four operators are placed on a single node $j \in [n]$ in a graph. Due to the commutation relations of Majorana operators, we can associate the product of any two Majoranas within a vertex $j$ to a Pauli matrix on a corresponding qubit $j$.

However, the Hilbert space of four Majorana operators has dimension $4$, requiring us to identify this qubit within some two-dimensional subspace. Let $D_j = -c_j^1 c_j^2 c_j^3 c_j^4$, which acts as the parity operator of the two fermions on the $j$th node. The qubit subspace will be given by the $+1$-eigenspace of $D$. Projecting into this subspace, we have for example that $-i c_j^1 c_j^4$ maps to $X$ on qubit $j$. A full table of these products is given in \cref{tab:mapping}, and the formal presentation of this mapping is given in \cref{sec:qubit_mapping}.

\begin{table*}[h!]
\begin{center}
\begin{tabular}{|c|c|c|c|c|}
    \hline
    \rule{0pt}{2.5ex} & $c^1$& $c^2$ & $c^3$ & $c^4$  \\
    \hline
    \rule{0pt}{2.5ex} $c^1$ & $\I$  & $iZ$  & $-iY$  & $iX$ \\
    \hline
    \rule{0pt}{2.5ex} $c^2$ & $-iZ$  & $\I$  & $iX$ & $iY$ \\
    \hline
    \rule{0pt}{2.5ex} $c^3$ & $iY$ & $-iX$  & $\I$  & $iZ$ \\
    \hline
    \rule{0pt}{2.5ex} $c^4$ & $-iX$  & $-iY$  & $-iZ$  & $\I$ \\
    \hline
\end{tabular}
\caption{\label{tab:mapping}The ``multiplication table'' for mapping four Majorana operators to a single qubit. Each entry denotes the product of a row Majorana and a column Majorana, after projecting to the $+1$-eigenspace of the operator $D = -c^1 c^2 c^3 c^4$. For example, $c^1c^4 \mapsto iX$.}
\end{center}
\end{table*}

\subsection{Fermionic states on 4-regular multigraphs}

On each node we have four Majorana operators. It is natural to represent these Majorana operators as half-edges, and we connect them to some other half-edge (Majorana) in the graph. This yields a $4$-regular multigraph, i.e., a graph whose nodes all have degree $4$ and where self-loops and parallel edges (multi-edges) are permitted. As each half-edge is connected to exactly one other half-edge, we can equivalently regard this graph as a perfect matching of $4n$ half-edges. And since each half-edge is a Majorana operator, this perfect matching naturally induces an FGS on $2n$ fermion modes. Gaussian states are related to free-fermion systems, and in particular admit efficient classical descriptions.

We will refer to the Gaussian states $\ket{\Psi}$ obtained by our perfect matchings as \emph{matching states}. We show that such a state is a uniform superposition over a set of $2^{n-1}$ stabilizer states, each of which is encoded into the joint $s_j$-eigenspace ($s_j \in \pmone$) of the operators $D_1, \ldots, D_n$. We can therefore single out one particular stabilizer state by projecting $\ket{\Psi}$ onto the joint $+1$-eigenspace, which we recall is our identified $n$-qubit space.\footnote{The sign of one of the edges need to be flipped so that $\Psi$ has the appropriate projection onto the joint $+1$-eigenspace, see \Cref{def:Psi_hs}.} See \cref{thm:embedded_graph_state} for details.

As an example, consider the GHZ state:
\begin{equation}
    \ket{\mathrm{GHZ}} = \frac{\ket{0}^{\otimes n} + \ket{1}^{\otimes n}}{\sqrt{2}}.
\end{equation}
The stabilizer generators of $\ket{\mathrm{GHZ}}$ can be taken as $X_1 \cdots X_n$ and $Z_j Z_{j+1}$ for $j \in [n-1]$. In \cref{fig:maximally-entangled} we give an explicit construction of a perfect matching on $4n$ half-edges that yields the GHZ state. To see this, take the top (or bottom) loop;~following \cref{tab:mapping}, this gives the Pauli operator $X_1 \cdots X_n$. Meanwhile, a loop connecting a pair of edges between any two neighboring nodes gives $Z_j Z_{j+1}$.

\begin{figure}
    \centering
    \includegraphics[width=6in]{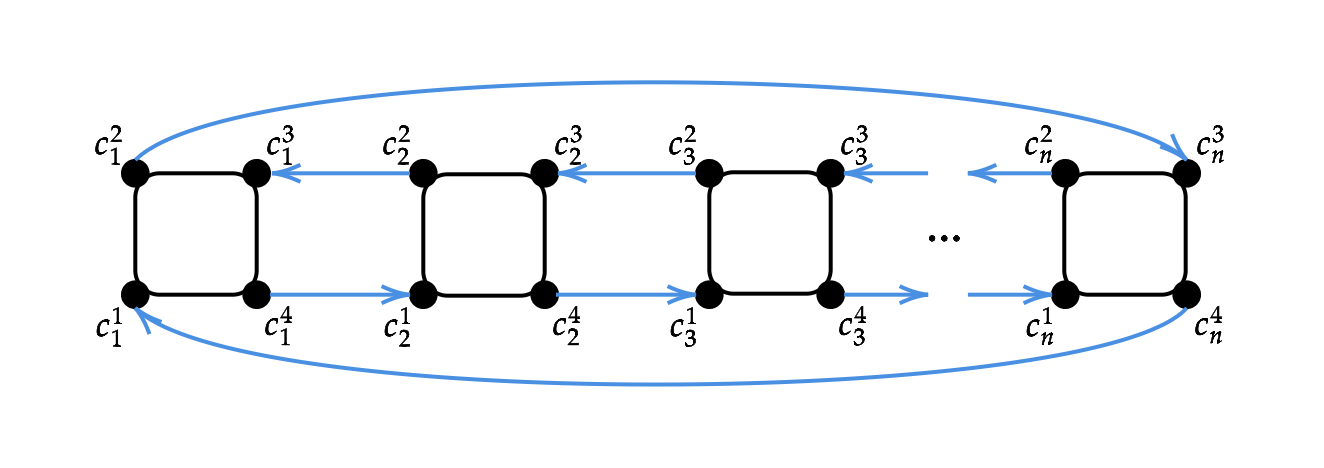}
    \caption{Graphical representation of a GHZ-like state under our mapping. The black squares represent nodes (qubits), and the circular corners correspond to half-edges (Majorana operators). The blue arrows represent edges of a $4$-regular multigraph, with directions given by the orientation of the corresponding Eulerian tour. The top and bottom loops yield the stabilizer $X^{\otimes n}$, while the parallel edges between any two adjacent nodes yield the stabilizer $Z \otimes Z$ (up to sign conventions).}
    \label{fig:maximally-entangled}
\end{figure}

\subsection{Extracting circle graph states from the fermion space}

Next we classify the stabilizer states encoded into the matching state. We show that the $n$-qubit projection is precisely the graph state of a circle graph. This follows from a classic result:~any circle graph can be constructed from the Eulerian tour of some $4$-regular multigraph~\cite{bouchet_connectivity_1989}.\footnote{More precisely, any connected circle graph can be obtained from a $4$-edge connected $4$-regular multigraph.} Note that an Eulerian tour is a (non-simple) cycle through a graph such that every edge is traversed exactly once. This fact enables us to take any $2n$-mode matching state and identify an encoded $n$-qubit CGS. Moreover, we can also efficiently move in the reverse direction:~constructing the perfect matching from the circle graph (see \cref{prop:efficient_chord_diagram}).

We review how this correspondence works at the level of graph theory in \cref{sec:circle_graphs}. To demonstrate the equivalence between the two classes of quantum states, we go a step further and exhibit a concrete representation of the circle graph stabilizers. Recall that the stabilizers of a graph state $\ket{G}$ are generated by the $n$ Pauli operators $S_j = X_j \prod_{k \in \mathcal{N}_G(j)} Z_k$. We can recover these from an Eulerian tour on the $4$-regular multigraph, wherein at each node $j$ we split the tour into two cycles $C_j^1, C_j^2$. To each cycle $C_j^a$ we define the operators $\tilde{S}_j^a$ as the product of Majorana operators (half-edges) traversed along the cycle. Under Kitaev's fermion-to-qubit mapping, we show that both $\tilde{S}_j^1, \tilde{S}_j^2 \mapsto S_j$ within the target $n$-qubit subspace. This is presented rigorously as \cref{thm:cycles_to_stabilizers}.

Note that the results of \cite{bouchet1988graphic} already imply that every circle graph stabilizer group can be obtained in this manner. Our contribution is to 1)~exhibit an explicit matrix representation in terms of the usual Pauli matrices, 2)~show how the cycles on the fermion graph map to these stabilizers, and 3)~articulate how the matching state directly encodes the CGSs.

\subsection{Simulating MBQC}

Having established this fermionic structure, we can prove our main result in \Cref{sec:classical_sim}:~Arbitrary single-qubit measurements on a circle graph state $\ket{G}$ can be efficiently simulated. We accomplish this using the algorithm of \cite{bravyi2022simulate} which only relies on two assumptions:~(1) that we can sample $x \in \{0, 1\}^n$ from the distribution $|\ip{x}{G}|^2$, and (2) that we can compute $|\melem{x}{U_1 \otimes \cdots \otimes U_n}{G}|^2$ for any sequence of single-qubit unitaries $U_j$. In our setting, (1) is standard by stabilizer simulation techniques~\cite{AaronsonGottesman2004}, while (2) is shown by us below. 

We have already outlined above how $\ket{G}$ is encoded in the fermionic space as the matching state $\Psi$. Next we show how to represent any single-qubit state $\ket{\phi}$ in the same fermionic space. Using Kitaev's mapping, we can define a state $\ket{\tilde{\phi}}$ on four Majorana modes whose expectations agree with the Bloch vector of $\ket{\phi}$:~$\ev{(-i c^1 c^4)}{\tilde{\phi}} = \ev{(-i c^2 c^3)}{\tilde{\phi}} = \ev{X}{\phi}$, etc. We prove that this state has all the properties we desire:~it is Gaussian, it is a $+1$-eigenstate of $D$, and its projection onto that eigenspace is precisely $\ket{\phi}$. Extending to product states over $n$ qubits is straightforward:~the state $\op{\tilde{\Phi}}{\tilde{\Phi}} = \prod_{j \in [n]} \op{\tilde{\phi}_j}{\tilde{\phi}_j}$ is also Gaussian, and it projects to $\ket{\Phi} = \bigotimes_{j \in [n]} \ket{\phi_j}$ in the joint $+1$-eigenspace of all $D_1, \ldots, D_n$.

Because we have shown that the projection of $\ket{\Psi}$ onto that $n$-qubit subspace is $\frac{1}{\sqrt{2^{n-1}}} \ket{G}$, we find that $|\ip{\Phi}{G}|^2 = 2^{n-1} |\ip{\tilde{\Phi}}{\Psi}|^2$. But $\ket{\tilde{\Phi}}$ and $\ket{\Psi}$ are both Gaussian, so a standard formula~\cite{bravyi2017complexity} allows us to express this overlap in terms of a Pfaffian:
\begin{equation}\label{eq:gaussian_overlap}
    \abs*{\ip{\tilde{\Phi}}{\Psi}}^2 = \frac{1}{4^n} \pf(\Gamma_\Phi + \Gamma_\Psi),
\end{equation}
where $\Gamma_\Phi, \Gamma_\Psi$ are the $4n \times 4n$ covariance matrices (see \cref{def:cov_mat}) of $\Phi$ and $\Psi$, respectively. The Pfaffian, in turn, can be computed in $\O(n^3)$ time~\cite{wimmer2012algorithm}.

\section{Discussion}

Through our connection between circle graph states and fermionic Gaussian states, we have shown that MBQC with circle graph resource states can be efficiently simulated on classical computers. Beyond the scope of MBQC, we also anticipate broader applications since our correspondence applies more generally between FGSs and many-qubit systems (e.g., beyond stabilizer states). Free-fermion solvability, which was the key to obtaining tractable formulas, has also recently been used in novel ways along with graph-theoretic techniques to find ground states of qubit Hamiltonians \cite{chapman2023unified}. Our work provides another example of this kind of phenomenon. For example, our results imply that the qubit states obtained from FGSs by Kitaev's mapping consist of at least both CGSs and the ground states of the honeycomb model. Identifying other notable classes of states within this framework is a promising direction for future work.

Along these lines, it is important to point out that the prior results on efficiently simulating surface code resource states~\cite{bravyi2007measurement,goff2012classical,bravyi2022simulate} can be fundamentally traced back to free-fermion solvability as well. This follows from how \cite{bravyi2007measurement} reduce the overlap between a surface code state and a product state to the partition function of a 2D Ising model with complex weights. The 2D Ising model, in turn, famously admits an analytical solution initially due to Onsager~\cite{onsager1944crystal}. This classic result has been rederived by others from a variety of different angles~\cite{kaufman1949crystal,kac1952combinatorial,kasteleyn1961statistics,temperley1961dimer,barahona1982computational}, but in particular the seminal paper by Schultz, Mattis, and Lieb~\cite{schultz1964two} showed that the problem can be reduced entirely to that of free fermions. Indeed, this provides an account for why Pfaffians are so prevalent in the surface code. In follow-up work dealing with higher-genus surfaces, this fermionic/matchgate connection is outlined even more explicitly~\cite{goff2012classical}. Given that our results also prove MBQC simulability via free fermions, it would be worthwhile to probe how deeply this connection runs.

From the standpoint of MBQC universality, some important open questions remain. One is the Geelen--McCarty conjecture~\cite{mccarty2021local}, which states that all proper vertex-minor-closed families of graph states can be classically simulated for MBQC. This seems plausible, given that it would be a generalization of our results. Another is the hardness for simulating resource states containing circle-graph-forbidden vertex minors. Our results indicate that such graph states might necessarily contain non-Gaussianity in the fermionic representation, which may be quantified by Gaussian rank or extent~\cite{dias2024classical,reardonSmith2024improved,cudby2023gaussian}. Finally, it would be interesting to characterize under what circumstances the marginal probabilities of CGSs can be efficiently computed. This is motivated by analogy to surface code resource states, wherein despite generic $\sharpP$-hardness~\cite{bravyi2022simulate}, an efficient algorithm exists under certain graph-connectivity constraints~\cite{bravyi2007measurement}.

\paragraph{Note added.} After the preprint of our manuscript was published, the work of Hahn, McCarty, Nautrup, and Claudet~\cite{hahn2026structure} reproduced our main complexity-theoretic result via an alternative proof. Rather than a connection to free fermions, their work is entirely graph-theoretic, reducing to surface code states with quadratic qubit overhead. In contrast, our algorithm only incurs a constant overhead in the number of degrees of freedom. They also sharpen our statement about  unbounded entanglement width of CGSs from $\omega(1)$ to $\Omega(\sqrt{n})$.

\section{Background} \label{sec:background}

We denote the imaginary unit by $i \equiv \sqrt{-1}$. For an integer $k \geq 1$, we define the set $[k] \coloneqq \{1, \ldots, k\}$. We write $\I$ to denote the identity operator on any vector space, which will be clear from context. Denote the unitary and orthogonal groups as $\U(d) \coloneqq \{U \in \C^{d \times d} : UU^\dagger = \I\}$ and $\Orth(d) \coloneqq \{O \in \R^{d \times d} : OO^\T = \I\}$, respectively. Given a subset $S$ of a group, $\langle S \rangle$ is the subgroup generated by all elements of $S$. For a graph $G$, we write $V(G)$ and $E(G)$ for its vertex and edge sets, respectively.

\subsection{Graph states}

Let us review some basic facts about systems of qubits and graph states. First, we introduce the Pauli algebra.

\begin{definition}
    The \emph{Pauli matrices} on a single qubit are
    \begin{equation}
        X = \begin{pmatrix}
            0 & 1\\
            1 & 0
        \end{pmatrix}, \quad
        Y = \begin{pmatrix}
            0 & -i\\
            i & 0
        \end{pmatrix}, \quad
        Z = \begin{pmatrix}
            1 & 0\\
            0 & -1
        \end{pmatrix}.
    \end{equation}
\end{definition}

\begin{fact}\label{fact:pauli_algebra}
    The algebra (over $\C$) of Pauli matrices on a single qubit are generated by the conditions $X^2 = Y^2 = Z^2 = \I$ and $XYZ = i\I$.
\end{fact}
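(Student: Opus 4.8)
The plan is to show that the abstract associative $\C$-algebra presented by generators $X, Y, Z$ subject to the relations $X^2 = Y^2 = Z^2 = \I$ and $XYZ = i\I$ is isomorphic to the algebra $M_2(\C)$ of $2 \times 2$ complex matrices, which is precisely the associative algebra generated by the concrete Pauli matrices (they span $M_2(\C)$ and are closed under multiplication up to scalars). I would establish the isomorphism by pinning down the dimension of the abstract algebra from both sides.

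First I would extract the full multiplication table from the two defining relations alone. Left-multiplying $XYZ = i\I$ by $X$ and using $X^2 = \I$ gives $YZ = iX$, while right-multiplying by $Z$ gives $XY = iZ$. Left-multiplying $YZ = iX$ by $Y$ yields $YX = -iZ$, so $YX = -XY$; then right-multiplying $XY = iZ$ by $X$ and substituting $YX = -XY$ gives $ZX = iY$. Altogether these produce the anticommutation relations $\{X,Y\} = \{Y,Z\} = \{Z,X\} = 0$ and express every product of two distinct generators as $\pm i$ times the third.

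Next I would give the spanning (upper-bound) argument by induction on word length: every monomial in $X, Y, Z$ reduces to a scalar multiple of one of $\I, X, Y, Z$, since any adjacent pair of equal generators collapses via $X^2 = \I$ (etc.), and any adjacent pair of distinct generators collapses to $\pm i$ times a single generator via the product relations above, and in either case the word strictly shortens. Hence the quotient algebra is spanned as a $\C$-vector space by $\{\I, X, Y, Z\}$, so its dimension is at most $4$. For the matching lower bound, the explicit Pauli matrices satisfy all the relations and $\{\I, X, Y, Z\}$ are linearly independent in $M_2(\C)$ (they form a basis), so the canonical surjection from the abstract algebra onto the matrix algebra has $4$-dimensional image. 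Combining the bounds forces the abstract algebra to have dimension exactly $4$ and the surjection to be an isomorphism onto $M_2(\C)$.

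The main obstacle is making the word-reduction argument genuinely rigorous — i.e., confirming that the rewriting rules collapse every monomial to the claimed four-element spanning set without any hidden relation surfacing that would shrink the algebra below dimension $4$. This is handled cleanly by the length-induction above, since each rewriting step strictly decreases word length and the only words of length $0$ or $1$ are exactly the spanning elements; crucially, the linear \emph{independence} of those four elements is imported from the concrete matrix representation rather than proved intrinsically, which sidesteps the need to verify confluence of the rewriting system directly.
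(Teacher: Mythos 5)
The paper states this as an unproved background \emph{Fact} and supplies no argument of its own, so there is nothing to compare against; your proposal stands on its own. It is correct and complete: deriving $YZ=iX$, $XY=iZ$, $YX=-iZ$, $ZX=iY$ (and their companions) from the two relations, using them to rewrite every monomial down to a scalar multiple of one of $\I,X,Y,Z$, and then importing linear independence of these four elements from the concrete $2\times 2$ matrix representation correctly pins the presented algebra at dimension $4$ and identifies it with $M_2(\C)$. The one point worth tightening is that you only display four of the six ordered products of distinct generators; the remaining two ($XZ=-iY$ and $ZY=-iX$) follow just as easily (e.g., $XZ = X(-iXY) = -iY$), and you need all six for the word-reduction step to apply to an arbitrary adjacent pair. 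Your observation that importing linear independence from the representation obviates any confluence check on the rewriting system is exactly the right way to close the argument.
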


\begin{fact}
    The set of $n$-qubit Pauli operators $\mathcal{P}_n \coloneqq \{\I, X, Y, Z\}^{\otimes n}$ is a complete orthogonal operator basis for $(\C^2)^{\otimes n}$, obeying the trace-orthogonality condition $\tr(P^\dagger Q) = 2^n \delta_{PQ}$ for all $P, Q \in \mathcal{P}_n$.
\end{fact}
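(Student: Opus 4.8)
The plan is to prove the trace-orthogonality condition first, since completeness will then follow immediately from a dimension count. I would begin with a single qubit. Each Pauli matrix is Hermitian, so $P^\dagger = P$ and hence $\tr(P^\dagger Q) = \tr(PQ)$ for $P, Q \in \{\I, X, Y, Z\}$. When $P = Q$, the relations $X^2 = Y^2 = Z^2 = \I$ from \cref{fact:pauli_algebra} give $PQ = \I$, so $\tr(PQ) = 2$. When $P \neq Q$, I would show $\tr(PQ) = 0$: if one factor is $\I$ this reduces to the tracelessness of a single nonidentity Pauli, and if both are nonidentity their product is $\pm i$ times the remaining Pauli (for instance $XY = iZ$, obtained by right-multiplying $XYZ = i\I$ by $Z$ and using $Z^2 = \I$), which again reduces to tracelessness of one nonidentity Pauli.

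The tracelessness of $X, Y, Z$ I would extract from the abstract algebra rather than the explicit matrix entries, which is the one point that needs slight care. The relations of \cref{fact:pauli_algebra} imply the anticommutation $ZX = -XZ$ (derivable from $XY = iZ$, $YX = -iZ$, and $Y^2 = \I$), from which $ZXZ = -X$ and therefore $\tr(X) = \tr(ZXZ^{-1}) = \tr(-X) = -\tr(X)$, forcing $\tr(X) = 0$; the arguments for $Y$ and $Z$ are symmetric. This establishes $\tr(P^\dagger Q) = 2\delta_{PQ}$ on one qubit. I would then lift to $n$ qubits using multiplicativity of the trace under tensor products: writing $P = P_1 \otimes \cdots \otimes P_n$ and $Q = Q_1 \otimes \cdots \otimes Q_n$, and using $(A \otimes B)^\dagger = A^\dagger \otimes B^\dagger$ together with $\tr(A \otimes B) = \tr(A)\tr(B)$, I obtain
\begin{equation}
    \tr(P^\dagger Q) = \prod_{j=1}^n \tr(P_j^\dagger Q_j) = \prod_{j=1}^n 2\delta_{P_j Q_j} = 2^n \delta_{PQ},
\end{equation}
which is exactly the claimed condition.

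Finally, completeness follows from orthogonality by counting dimensions. There are $4^n$ elements in $\mathcal{P}_n$; each is nonzero, since its squared Hilbert--Schmidt norm is $2^n$, and the orthogonality just established makes them pairwise orthogonal and hence linearly independent. Since the space of operators on $(\C^2)^{\otimes n}$ has complex dimension $(2^n)^2 = 4^n$, these $4^n$ linearly independent operators span it and therefore form a basis. I do not expect a genuine obstacle in this argument; the only mildly delicate step is deducing the tracelessness of the nonidentity Paulis purely from the algebraic relations of \cref{fact:pauli_algebra}, which the anticommutation trick above handles cleanly without invoking the explicit matrices.
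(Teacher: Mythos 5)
Your proof is correct and complete. The paper states this as a background fact without proof, so there is no in-paper argument to compare against; your argument is the standard one (single-qubit orthogonality, tensor-product multiplicativity of the trace, then a dimension count for completeness) and every step checks out, including the derivation of the anticommutation relations and hence tracelessness from the abstract relations $X^2=Y^2=Z^2=\I$ and $XYZ=i\I$. The only stylistic wrinkle is that you invoke Hermiticity of the Paulis from their explicit matrix form while deriving tracelessness purely algebraically; since the paper's definition gives the explicit matrices, you could equally well read off $\tr(X)=\tr(Y)=\tr(Z)=0$ directly and shorten the argument, but your algebra-only route is valid and arguably more robust.
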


The $n$-qubit Pauli operator which acts as $W \in \{X, Y, Z\}$ on qubit $j$ and trivially on all other qubits may be denoted by $W_j$. Next, we define stabilizer groups and states.

\begin{definition}
    Let $\mathcal{S} \subset \{\pm 1\} \cdot \mathcal{P}_n$ be a set of $n$ mutually commuting, algebraically independent Pauli operators. The group $\langle\mathcal{S}\rangle$ is called a \emph{stabilizer group}, and the simultaneous $+1$-eigenstate of all $2^n$ elements of $\langle\mathcal{S}\rangle$ is said to be its \emph{stabilizer state}.
\end{definition}

Every stabilizer state is reachable from $\ket{0}^{\otimes n}$ by applying some Clifford circuit, defined as follows.

\begin{definition}
    Let $\mathcal{P}_n^* \coloneqq \langle \mathcal{P}_n \rangle$ be the $n$-qubit Pauli group and $\mathcal{C}_n^* \coloneqq \{C \in \U(2^n) : C^\dagger \mathcal{P}_n^* C = \mathcal{P}_n^*\}$ its normalizer. We define the \emph{$n$-qubit Clifford group} as $\mathcal{C}_n \coloneqq \mathcal{C}_n^* / \U(1)$.
\end{definition}

\begin{fact}
    The Clifford group can be generated by the gates
    \begin{equation}
        \Hadamard = \frac{1}{\sqrt{2}} \begin{pmatrix}
            1 & 1\\
            1 & -1
        \end{pmatrix}, \quad
        \Phase = \begin{pmatrix}
            1 & 0\\
            0 & i
        \end{pmatrix}, \quad
        \CZ = \begin{pmatrix}
            1 & 0 & 0 & 0\\
            0 & 1 & 0 & 0\\
            0 & 0 & 1 & 0\\
            0 & 0 & 0 & -1
        \end{pmatrix}.
    \end{equation}
\end{fact}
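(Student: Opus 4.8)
The plan is to identify $\mathcal{C}_n$ as an extension of the binary symplectic group by the Pauli group, and then check that $\Hadamard$, $\Phase$, and $\CZ$ generate both the kernel and the quotient of this extension. First I would confirm that all three gates actually lie in $\mathcal{C}_n$ by computing their conjugation action on Pauli generators: $\Hadamard X \Hadamard = Z$ and $\Hadamard Z \Hadamard = X$; $\Phase X \Phase^\dagger = Y$ and $\Phase Z \Phase^\dagger = Z$; and $\CZ$ fixes $Z_1, Z_2$ while sending $X_1 \mapsto X_1 Z_2$ and $X_2 \mapsto Z_1 X_2$. Since each map sends Pauli operators to Pauli operators, the gates normalize $\mathcal{P}_n^*$ and belong to $\mathcal{C}_n$; the substantive claim is the reverse inclusion.

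To organize the reverse inclusion I would pass to the symplectic representation. Writing a Pauli operator modulo phase as $X^a Z^b$ with $a,b \in \F_2^n$, the group $\mathcal{P}_n^*/\U(1)$ is isomorphic to $\F_2^{2n}$, and the commutation relation is encoded by the symplectic form $\omega\big((a,b),(a',b')\big) = a \cdot b' + a' \cdot b \bmod 2$. Conjugation by any Clifford preserves commutation, so it descends to a linear symplectic map on $\F_2^{2n}$, giving a homomorphism $\mu \colon \mathcal{C}_n \to \Sp(2n, \F_2)$. Its kernel consists of the Cliffords that fix every Pauli up to phase, which is exactly $\mathcal{P}_n^*/\U(1)$ (the sign pattern of such a Clifford is itself realized by a unique Pauli, after which it centralizes all Paulis and is scalar), so we obtain a short exact sequence $1 \to \mathcal{P}_n^*/\U(1) \to \mathcal{C}_n \xrightarrow{\mu} \Sp(2n, \F_2) \to 1$. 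It therefore suffices to show that $G \coloneqq \langle \Hadamard, \Phase, \CZ \rangle$ (i) contains the kernel $\mathcal{P}_n^*/\U(1)$ and (ii) surjects under $\mu$ onto $\Sp(2n, \F_2)$; a standard diagram chase then forces $G = \mathcal{C}_n$, since any $C \in \mathcal{C}_n$ can be matched by some $g \in G$ with $\mu(g) = \mu(C)$, whence $Cg^{-1} \in \ker\mu \subseteq G$.

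Claim (i) is a quick computation: $\Phase^2 = Z$ yields each $Z_j$, conjugating by $\Hadamard$ yields each $X_j$, and products generate all of $\mathcal{P}_n^*/\U(1)$. Claim (ii) is the crux. Reading off the conjugation rules above, $\mu(\Hadamard_j)$ swaps the coordinates $a_j \leftrightarrow b_j$, $\mu(\Phase_j)$ is the transvection $b_j \mapsto b_j + a_j$, and $\mu(\CZ_{jk})$ sends $b_k \mapsto b_k + a_j$ and $b_j \mapsto b_j + a_k$; conjugating $\CZ$ by $\Hadamard$ additionally gives the CNOT-type map, so $G$ realizes the full repertoire of elementary symplectic row and column operations. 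The hard part will be proving that these elementary maps generate all of $\Sp(2n, \F_2)$. I would do this by a symplectic Gaussian elimination: given $M \in \Sp(2n, \F_2)$, use the coupling and swap generators to transform the image of $X_1$ into $X_1$ itself, then fix the image of $Z_1$ to $Z_1$ (possible because it anticommutes with $X_1$ and commutes with the already-normalized directions), after which the symplectic condition guarantees that the remaining action is supported on qubits $2, \dots, n$; induction on $n$ then completes the reduction of $M$ to the identity. Assembling (i) and (ii) with the short exact sequence gives $G = \mathcal{C}_n$, as desired.
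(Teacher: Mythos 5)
The paper states this as a background \emph{Fact} and offers no proof of its own, so there is nothing internal to compare against; your argument is the standard one found in the stabilizer-formalism literature and is essentially correct. The structure---verify the three gates normalize $\mathcal{P}_n^*$, exhibit the short exact sequence $1 \to \mathcal{P}_n^*/\U(1) \to \mathcal{C}_n \to \Sp(2n,\F_2) \to 1$, generate the kernel via $\Phase^2 = Z$ and Hadamard conjugation, and surject onto the quotient via symplectic Gaussian elimination---is sound, and your symplectic images of $\Hadamard$, $\Phase$, and $\CZ$ are all computed correctly. Two places deserve slightly more care if you were to write this out in full. First, the claim that a Clifford fixing every Pauli up to sign has its sign pattern ``realized by a unique Pauli'' needs the observation that the sign pattern is a homomorphism $\F_2^{2n} \to \{\pm 1\}$, i.e., a linear functional, and that nondegeneracy of the symplectic form guarantees every such functional arises as $v \mapsto (-1)^{\omega(u,v)}$ for some $u$, which is exactly conjugation by the Pauli labeled $u$. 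Second, the Gaussian-elimination step you flag as ``the hard part'' is indeed where the real work lives: one must check that the elementary operations available (coordinate swaps $a_j \leftrightarrow b_j$, local transvections, and the CNOT/CZ couplings) suffice to map an arbitrary symplectic basis vector to $e_1$ and then to normalize its symplectic partner, handling the case distinctions on whether the relevant coordinates vanish. This is routine but not free; with it filled in, the proof is complete.
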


\begin{fact}
    Let $\ket{\psi}$ be an $n$-qubit stabilizer state with stabilizer group $\langle\mathcal{S}\rangle$. There exists some $C \in \mathcal{C}_n$ such that $\ket{\psi} = C\ket{0}^{\otimes n}$, $\mathcal{S} = \{C Z_j C^\dagger : j \in [n]\}$, and
    \begin{equation}
        \op{\psi}{\psi} = \prod_{S \in \mathcal{S}} \l( \frac{\I + S}{2} \r).
    \end{equation}
\end{fact}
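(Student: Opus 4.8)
The plan is to reduce everything to the computational basis state $\ket{0}^{\otimes n}$, whose stabilizer group is generated by $Z_1, \ldots, Z_n$, and then transport the conclusion by a single Clifford conjugation. The base case is a one-qubit computation: since $\frac{\I + Z}{2} = \op{0}{0}$, taking the tensor product over the qubits gives $\prod_{j \in [n]} \l(\frac{\I + Z_j}{2}\r) = \bigotimes_{j \in [n]} \op{0}{0}$, the projector onto $\ket{0}^{\otimes n}$. Everything then hinges on producing a Clifford that carries the standard generators onto a chosen generating set $S_1, \ldots, S_n$ of $\langle\mathcal{S}\rangle$.

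The key step is therefore to show that there exists $C \in \mathcal{C}_n$ with $C Z_j C^\dagger = S_j$ for every $j \in [n]$. I would argue this through the symplectic representation of the Pauli group: associate to each Pauli operator (modulo phase) its vector in $\F_2^{2n}$, under which operator commutation corresponds to vanishing of the symplectic form. The hypotheses that the $S_j$ are mutually commuting and algebraically independent translate to their vectors $v_1, \ldots, v_n$ being linearly independent and pairwise symplectic-orthogonal, i.e.\ forming a basis of a Lagrangian subspace; the same holds for the vectors $e_1, \ldots, e_n$ of $Z_1, \ldots, Z_n$. By Witt's extension theorem (equivalently, transitivity of $\Sp(2n, \F_2)$ on symplectic bases), there is a symplectic transformation sending $e_j \mapsto v_j$. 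Since the Clifford gates $\Hadamard, \Phase, \CZ$, which generate $\mathcal{C}_n$, act on symplectic vectors as generators of $\Sp(2n, \F_2)$, the Clifford group surjects onto $\Sp(2n, \F_2)$, and this transformation lifts to some $C_0 \in \mathcal{C}_n$ with $C_0 Z_j C_0^\dagger = \epsilon_j S_j$ for signs $\epsilon_j \in \pmone$.

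It remains to correct the signs. I would choose a Pauli operator $P$ whose commutation pattern with the $S_j$ is prescribed by the $\epsilon_j$: namely, $P$ should anticommute with $S_j$ exactly when $\epsilon_j = -1$. Because the $v_j$ are linearly independent and the symplectic form is nondegenerate, the linear map $w \mapsto (\langle w, v_1\rangle, \ldots, \langle w, v_n\rangle)$ from $\F_2^{2n}$ to $\F_2^n$ is surjective, so such a $P$ exists. Setting $C = P C_0$ (a product of Cliffords, hence Clifford) yields $C Z_j C^\dagger = \epsilon_j\, P S_j P^\dagger = \epsilon_j^2 S_j = S_j$, giving $\mathcal{S} = \{C Z_j C^\dagger : j \in [n]\}$. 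Each $S_j$ then stabilizes $C\ket{0}^{\otimes n}$, since $S_j C\ket{0}^{\otimes n} = C Z_j \ket{0}^{\otimes n} = C\ket{0}^{\otimes n}$, so $C\ket{0}^{\otimes n}$ is fixed by all of $\langle\mathcal{S}\rangle$; by uniqueness of the stabilizer state it equals $\ket{\psi}$ up to a global phase absorbed into $C$.

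To finish, conjugate the base-case identity, using that conjugation is a ring homomorphism and that $C\l(\frac{\I + Z_j}{2}\r)C^\dagger = \frac{\I + C Z_j C^\dagger}{2}$:
\begin{equation*}
    \op{\psi}{\psi} = C \l( \prod_{j \in [n]} \frac{\I + Z_j}{2} \r) C^\dagger = \prod_{j \in [n]} \frac{\I + C Z_j C^\dagger}{2} = \prod_{j \in [n]} \frac{\I + S_j}{2} = \prod_{S \in \mathcal{S}} \l( \frac{\I + S}{2} \r).
\end{equation*}
The main obstacle is the key step: establishing that the Clifford group surjects onto $\Sp(2n, \F_2)$, together with the sign correction. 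Both are standard, but the surjectivity in particular rests on verifying that $\Hadamard$, $\Phase$, and $\CZ$ realize a generating set of symplectic transformations, which is the substantive content underlying the existence of $C$.
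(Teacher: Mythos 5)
Your proposal is correct. The paper states this as a standard background fact and gives no proof of its own, so there is nothing to compare against; your argument---reducing to $\ket{0}^{\otimes n}$ via $\prod_{j}\frac{\I+Z_j}{2}=\op{0}{0}^{\otimes n}$, lifting the symplectic map $e_j\mapsto v_j$ (Witt extension / transitivity of $\Sp(2n,\F_2)$ on Lagrangian bases, surjectivity of $\mathcal{C}_n\to\Sp(2n,\F_2)$), and fixing the residual signs by a Pauli chosen through the surjective map $w\mapsto(\langle w,v_1\rangle,\ldots,\langle w,v_n\rangle)$---is the standard textbook route and all steps check out, including the final conjugation of the projector identity.
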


Graph states are a special class of stabilizer states. Since stabilizer states are uniquely associated with their stabilizer group (modulo global phases), we can define a graph state in two equivalent ways.

\begin{definition}
    Let $G$ be a simple graph. On each vertex, we place a qubit. Define the \emph{graph state} as
    \begin{equation}
        \ket{G} \coloneqq \prod_{\{v, w\} \in E(G)} \CZ_{vw} \ket{+}^{\otimes |V(G)|},
    \end{equation}
    where $\ket{+} \coloneqq \Hadamard \ket{0} = \frac{\ket{0} + \ket{1}}{\sqrt{2}}$ and $\CZ_{vw}$ is the $\CZ$ gate between qubits $v$ and $w$.
\end{definition}

Clearly, $\ket{G}$ is a stabilizer state because it is generated by a Clifford circuit (Hadamard on each vertex, $\CZ$ on each edge). Its stabilizer group is generated in a graphical manner as well, which gives an alternative definition.

\begin{definition}
    Let $G$ be a simple graph. To each vertex $v \in V(G)$, associate the Pauli operator
    \begin{equation}
        S_v \coloneqq X_v \prod_{w \in \mathcal{N}_G(v)} Z_w
    \end{equation}
    where $\mathcal{N}_G(v) \coloneqq \{w \in V(G) : \{v, w\} \in E(G)\}$ is the neighborhood of $v$ in $G$. Then define the generating set $\mathcal{S}_G \coloneqq \{S_v : v \in V(G)\}$.
\end{definition}

\begin{fact}
    $\langle \mathcal{S}_G \rangle$ is the stabilizer group of the graph state $\ket{G}$.
\end{fact}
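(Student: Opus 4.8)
The plan is to use the definition of $\ket{G}$ as a Clifford circuit applied to $\ket{+}^{\otimes n}$ and to track how the generating stabilizers of $\ket{+}^{\otimes n}$ transform. Writing $U_G \coloneqq \prod_{\{v,w\} \in E(G)} \CZ_{vw}$, we have $\ket{G} = U_G \ket{+}^{\otimes n}$. The product state $\ket{+}^{\otimes n}$ is stabilized by the group $\langle X_v : v \in V(G) \rangle$, since $X \ket{+} = \ket{+}$, and these $n$ generators are manifestly commuting and independent. Conjugating by $U_G$ therefore sends the stabilizer group of $\ket{+}^{\otimes n}$ to that of $\ket{G}$, so it suffices to show $U_G X_v U_G^\dagger = S_v$ for each $v$.

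First I would establish the single-gate conjugation rule $\CZ_{vw} X_v \CZ_{vw} = X_v Z_w$ (and symmetrically in $v, w$), together with the fact that $\CZ$ is diagonal and hence commutes with every $Z$ operator and with any $X_u$ supported on a qubit $u \notin \{v, w\}$. Then, in computing $U_G X_v U_G^\dagger$, only the $\CZ$ gates incident to $v$ act nontrivially: each $\CZ_{vw}$ with $w \in \mathcal{N}_G(v)$ introduces a factor of $Z_w$, and because these $Z_w$ factors commute with all the remaining $\CZ$ gates, they simply accumulate without interference. This yields $U_G X_v U_G^\dagger = X_v \prod_{w \in \mathcal{N}_G(v)} Z_w = S_v$, whence $S_v \ket{G} = U_G X_v \ket{+}^{\otimes n} = U_G \ket{+}^{\otimes n} = \ket{G}$, so every $S_v$ stabilizes $\ket{G}$.

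To upgrade ``$\langle \mathcal{S}_G \rangle$ stabilizes $\ket{G}$'' to ``$\langle \mathcal{S}_G \rangle$ \emph{is} the stabilizer group,'' I would argue that the $S_v$ are mutually commuting and algebraically independent, so that $\langle \mathcal{S}_G \rangle$ attains the full order $2^n$ expected of an $n$-qubit stabilizer group, which in turn possesses a unique $+1$-eigenstate. Both properties are inherited for free from the corresponding properties of the $X_v$ under the Clifford automorphism $P \mapsto U_G P U_G^\dagger$ of the Pauli group. Alternatively, one may verify them directly: commutativity holds because an edge $\{u, v\} \in E(G)$ contributes exactly two anticommuting overlaps (from $X_u$ against $Z_u$ in $S_v$, and from $X_v$ against $Z_v$ in $S_u$), whose signs cancel, while a non-edge contributes none; and independence holds because the $X$-support of $\prod_{v \in T} S_v$ is exactly $T$, so no nonempty product can equal $\pm \I$.

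The main content lies in the conjugation computation of the second step, as this is the only place where the graph structure $G$ enters; it is nonetheless routine given the $\CZ$--$X$ commutation rule. The subtler conceptual point, and the one I would state explicitly, is the final identification: the claim that we have obtained \emph{the} stabilizer group—rather than merely a subgroup that fixes $\ket{G}$—rests on the independence and order-counting argument together with the uniqueness of the stabilizer state of a maximal stabilizer group.
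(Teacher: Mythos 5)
The paper states this as a background \textbf{Fact} and offers no proof of its own, so there is nothing to compare against; your argument is the standard one and is correct. Both halves are sound: the conjugation computation $U_G X_v U_G^\dagger = S_v$ correctly uses $\CZ_{vw} X_v \CZ_{vw} = X_v Z_w$ and the fact that diagonal gates commute with the accumulated $Z_w$ factors, and your closing step rightly observes that commutativity and independence of the $S_v$ (whether inherited via the Clifford automorphism or checked directly by the two-anticommutation cancellation and the $X$-support argument) are exactly what is needed to conclude that $\langle \mathcal{S}_G \rangle$ has order $2^n$ and hence a unique stabilizer state, which must be $\ket{G}$.
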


Although graph states are a subset of stabilizer states, every stabilizer state is locally Clifford equivalent to some graph state.

\begin{fact}
    For every $n$-qubit stabilizer state $\ket{\psi}$, there exists some graph $G$ on $n$ vertices such that $\ket{\psi} = (C_1 \otimes \cdots \otimes C_n) \ket{G}$ for some local Clifford gates $C_1, \ldots, C_n \in \mathcal{C}_1^*$.
\end{fact}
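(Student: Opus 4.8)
The plan is to work in the binary symplectic (stabilizer tableau) representation, where exhibiting a local Clifford equivalence with a graph state becomes a linear-algebra normal-form computation over $\F_2$. First I would represent the stabilizer group $\langle\mathcal{S}\rangle$ of $\ket{\psi}$ by an $n \times 2n$ binary matrix $[X \mid Z]$: each generator becomes a row, with its $X$-support recorded in the left block and its $Z$-support in the right block (discarding phases). Independence of the generators says the rows are linearly independent, and mutual commutativity translates into the isotropy relation $X Z^\T + Z X^\T = 0$ over $\F_2$; equivalently, the row space is a Lagrangian subspace of the symplectic space $\F_2^{2n}$. In this language a graph state $\ket{G}$ is exactly a tableau of the form $[I \mid \Gamma]$ with $\Gamma = \Gamma^\T$ and zero diagonal (its $v$-th row being $S_v = X_v \prod_{w \in \mathcal{N}_G(v)} Z_w$), so the goal is to reduce $[X \mid Z]$ to this shape using only operations realizable by local Cliffords together with changes of generating set. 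The two admissible moves are: (i)~replacing the generating set by another generating set of the same group, which acts as an arbitrary invertible $\F_2$ row operation on the tableau and leaves the state untouched; and (ii)~single-qubit Cliffords, which act as local symplectic transformations on each pair of columns $\{X_j, Z_j\}$ --- in particular $\Hadamard$ on qubit $j$ swaps the two columns, while $\Phase$ on qubit $j$ adds column $X_j$ into column $Z_j$.

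The heart of the argument is a lemma: by applying Hadamards alone one can make the $X$-block invertible. I would prove this by finding a \emph{transversal basis}, i.e.\ a choice of one column from each pair $\{X_j, Z_j\}$ whose $n$ chosen columns are linearly independent; applying $\Hadamard$ to exactly those qubits where the $Z$-column was chosen then swaps each selected column into the $X$-block. Existence of such a transversal is a matroid (Rado / system-of-distinct-representatives) condition: it suffices that for every $J \subseteq [n]$ the $2|J|$ columns $\{X_j, Z_j : j \in J\}$ have rank at least $|J|$. This is where isotropy does the work. If that rank were at most $|J| - 1$, then the space of stabilizer vectors vanishing on all coordinates of $J$ would have dimension at least $n - |J| + 1$; but such vectors are supported only on the $2(n - |J|)$ coordinates of the remaining qubits and stay pairwise symplectically orthogonal, forming an isotropic subspace of a $2(n - |J|)$-dimensional symplectic space, whose dimension cannot exceed $n - |J|$ --- a contradiction. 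Hence the rank bound holds for all $J$ and the transversal basis exists.

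Once $X$ is invertible I would left-multiply the tableau by $X^{-1}$ (a change of generators, move (i)) to reach $[I \mid \Gamma]$ with $\Gamma = X^{-1} Z$; the isotropy relation, pushed through this row operation, forces $\Gamma = \Gamma^\T$. It remains to clear the diagonal: applying $\Phase$ on qubit $j$ adds the $X$-column $e_j$ into the $Z$-column $j$, which flips precisely the entry $\Gamma_{jj}$ and nothing else, so a product of phase gates sets the diagonal to zero while preserving symmetry. The tableau is now $[I \mid \Gamma]$ for the adjacency matrix $\Gamma$ of a simple graph $G$, so the transformed state shares the unsigned stabilizer generators of $\ket{G}$. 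Finally I would fix the signs: each generator appears as $\epsilon_v S_v$ with $\epsilon_v \in \{\pm 1\}$, and since $Z_v$ anticommutes with $S_v$ but commutes with every $S_{v'}$ for $v' \neq v$, conjugating by the local Pauli $\prod_{v : \epsilon_v = -1} Z_v$ flips exactly the offending signs and yields $\ket{G}$ exactly. Composing all of these moves --- the Hadamards, phase gates, and Pauli corrections are each single-qubit Cliffords, while the row operations are physically trivial --- produces a single local Clifford $C_1 \otimes \cdots \otimes C_n$ carrying $\ket{\psi}$ to $\ket{G}$, which is the claim after inverting.

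I expect the main obstacle to be the invertibility lemma of the second paragraph: the bookkeeping of moves (i)--(ii), the symmetrization of $\Gamma$, and the diagonal and sign cleanup are all routine, but the guarantee that local Cliffords suffice to make $X$ full rank is the one place where the maximal isotropy of the stabilizer is genuinely used, and getting the Rado rank condition to follow cleanly from the isotropic-dimension bound is the crux.
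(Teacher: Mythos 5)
Your argument is correct and complete. Note, however, that the paper does not prove this statement at all: it is listed as a background \emph{Fact} and attributed to the literature (Schlingemann, Hein--Eisert--Briegel, Van den Nest--Dehaene--De Moor; the paper also remarks that it follows from Bouchet's theory of isotropic systems). What you have written is essentially the standard proof from those references: pass to the binary symplectic tableau $[X \mid Z]$, use Hadamards to make the $X$-block invertible, row-reduce to $[I \mid \Gamma]$ with $\Gamma$ symmetric by isotropy, kill the diagonal with phase gates, and repair signs with local Paulis. The one place your write-up deviates stylistically is the invertibility lemma: you invoke Rado's theorem on independent transversals and verify its rank condition via the bound on isotropic subspaces, whereas the usual treatments select the transversal greedily by Gaussian elimination. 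Both work; your version isolates cleanly the single point where maximal isotropy of the stabilizer is used, at the cost of importing a matroid-theoretic black box. All the individual steps check out, including the subtle ones: the restriction of the kernel to the complementary qubits is injective and isotropic for the restricted (still non-degenerate) form, the phase gate flips only $\Gamma_{jj}$ so symmetry is preserved, and conjugation by $\prod_{v:\epsilon_v=-1} Z_v$ corrects exactly the offending signs because $Z_v$ anticommutes with $S_v$ and commutes with $S_{v'}$ for $v'\neq v$.
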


\subsection{Circle graphs}\label{sec:circle_graphs}

\begin{figure}
    \centering    \includegraphics[width=0.65\linewidth]{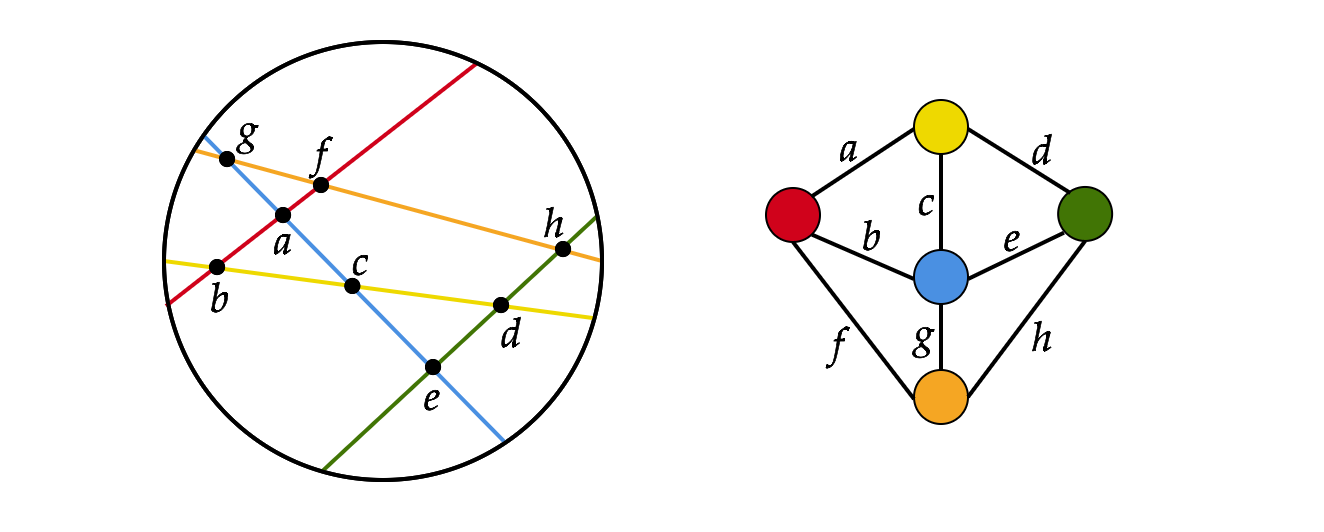}
    \caption{A chord diagram (left) and its corresponding circle graph (right). Chords and their intersections in the chord diagram correspond to vertices and edges, respectively, in the circle graph. Note that we label chords by color and intersections by letters.}
    \label{fig:Circle_graph}
\end{figure}

In this paper, we are centrally concerned with graph states over a special class of graphs, called circle graphs. We review the properties of this class of graphs and its connection to $4$-regular multigraphs, which have been studied extensively in~\cite{bouchet1988graphic}.

We begin by defining a circle graph as the intersection graph of a chord diagram.

\begin{definition}
    Consider a set of chords on a circle. A \emph{circle graph} $G$ is a simple graph where $V(G)$ is the set of chords, and two vertices have an edge in $E(G)$ if and only if their corresponding chords intersect.
\end{definition}

See \cref{fig:Circle_graph} for an example of a chord diagram and its corresponding circle graph. A \emph{circle graph state} (CGS) is then simply a graph state $\ket{G}$ where $G$ is a circle graph.

We can equivalently view a circle graph as the alternance graph of some double-occurrence word. Let us define these terms below.

\begin{definition}
    Let $V$ be a set. A \emph{word} $A$ on $V$ is an ordered sequence of elements from $V$ (the ``letters''), possibly with repetition.
\end{definition}

When the set of letters is clear from context, we may omit specifying it. We will view the chords of the circle as letters, and construct a word given by the order of chords encountered as we traverse the circumference of the circle. Thus it is useful to introduce a notion of cyclic equivalence for words.

\begin{definition}
    Let $A$ be a word and $\widetilde{A}$ be the word whose letters are the letters of $A$ in reverse order. We say that two words $A, B$ are equivalent ($A \sim B$) if and only if $B$ is some cyclic permutation of $A$ or $\widetilde{A}$. We denote the equivalence class of $A$ with respect to this relation by
    \begin{equation}
        [A] \coloneqq \{B : B \sim A\}.
    \end{equation}
\end{definition}

Next, we define the notion of a double-occurrence word.

\begin{definition}
    A \emph{double-occurrence word} is a word $A$ with letters from some set $V$ such that each letter $v \in V$ occurs exactly twice in $A$.
\end{definition}

For example, the word ``$aabcbcdeed$'' is a double-occurrence word. Because each letter in a double-occurrence word appears exactly twice, we can talk about whether or not those two occurrences interleave, or alternate, each other.

\begin{definition}
    Let $A$ be a word on $V$. An \emph{alternance} of $[A]$ is a pair of distinct letters $(v, w) \in V \times V$ such that a double-occurrence word of the form $\ldots v \ldots w \ldots v \ldots w \ldots$ is in $[A]$.
\end{definition}

Note that since $A \sim \widetilde{A}$, if $(v, w)$ is an alternance of $A$, so too is $(w, v)$. The alternance graph of a double-occurrence word is then the graph whose edges correspond to letters that alternate each other in the word.

\begin{definition}
    Let $A$ be a double-occurrence word on $V$. The \emph{alternance graph} $G$ of $A$ is a simple graph with vertices $V(G) = V$ and edges $E(G) = \{\{v, w\} \subseteq V(G) : (v, w) \text{ is an alternance of } [A]\}$.
\end{definition}

Constructing a double-occurrence word from a chord diagram as described above, one can see that two letters in the word are an alternance if and only if their corresponding chords intersect. Thus circle graphs are precisely alternance graphs.

\begin{fact}
    A graph $G$ is a circle graph if and only if it is the alternance graph of some double-occurrence word.
\end{fact}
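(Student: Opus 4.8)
The plan is to establish a correspondence between chord diagrams and double-occurrence words (modulo the equivalence $\sim$), and then to show that under this correspondence, two chords intersect precisely when the two corresponding letters form an alternance. The entire argument reduces to one geometric observation together with some bookkeeping, so I would organize it into a forward direction (circle graph $\Rightarrow$ alternance graph) and its converse.

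For the forward direction, I would take a chord diagram and, without loss of generality, assume the $2|V|$ chord endpoints are distinct points on the circle (one may perturb the endpoints slightly, as this leaves the interior crossings unchanged). Labeling each chord by a letter of $V$ and traversing the circumference in a fixed direction from an arbitrary starting point, I record the letter of each endpoint as it is encountered. Since every chord contributes exactly two endpoints, this produces a double-occurrence word $A$ on $V$. A different choice of starting point yields a cyclic permutation of $A$, and reversing the direction of traversal yields $\widetilde{A}$; hence the class $[A]$ is well-defined independently of these choices, which is exactly why the equivalence relation $\sim$ is the correct one.

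The crux is the following geometric fact: two chords with endpoints $\{v_1, v_2\}$ and $\{w_1, w_2\}$ cross in the interior of the disk if and only if their four endpoints occur in the interleaved cyclic order $v\,w\,v\,w$ around the circle, rather than the ``parallel'' order $v\,v\,w\,w$ or the ``nested'' order $v\,w\,w\,v$. This interleaving condition is precisely the statement that $(v,w)$ is an alternance of $[A]$. Consequently the intersection graph of the chord diagram coincides with the alternance graph of $A$, establishing that every circle graph is an alternance graph. For the converse, given a double-occurrence word $A$ on $V$, I would place $2|V|$ points in order around a circle and label them by reading off the letters of $A$; for each $v \in V$, its two occurrences mark two points, which I join by a chord. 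This yields a chord diagram whose associated word is $A$ by construction, and the same geometric fact shows that two chords intersect if and only if the corresponding letters alternate, so the resulting circle graph is exactly the alternance graph of $A$.

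The main obstacle is really just carefully justifying the geometric equivalence — that an interior crossing of two chords is equivalent to the cyclic interleaving of their four endpoints — which can be argued by noting that two chords fail to cross exactly when one of the arcs cut off by the first chord contains both endpoints of the second. Everything else amounts to verifying that the class $[A]$ faithfully records the chord diagram up to rotation and reflection of the circle, which is routine.
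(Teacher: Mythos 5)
Your proposal is correct and follows the same route the paper takes: the paper justifies this fact by the one-line observation that reading chord endpoints around the circle yields a double-occurrence word in which two letters alternate precisely when the corresponding chords cross, and your write-up simply fleshes out that bijection (including the converse construction and the standard interleaving-versus-nesting argument for when two chords intersect). No gaps; the only point worth a footnote is the implicit convention that chord endpoints are distinct, which your perturbation remark already covers.
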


One can check this from \cref{fig:Circle_graph}, where the induced double-occurrence word is $RBOYRGBYOG$ ($R = \text{red}$, $B = \text{blue}$, etc.). For example, $R$ and $B$ alternate in the word, and they cross in the chord diagram at point $a$. Meanwhile, $R$ and $G$ do not alternate in the word, and we see that they also do not intersect in the chord diagram.

\subsection{Eulerian tours of 4-regular multigraphs}

Every circle graph can be obtained by traversing an Eulerian tour of a $4$-regular multigraph. This result will be essential to connect circle graphs with fermionic states. Here we review the concepts necessary to understand this graph-theoretic result.

\begin{definition}
    A half-edge in a graph is an incidence of an edge on a vertex.
\end{definition}

Thus each edge in a graph corresponds to two half-edges:~one for each vertex it connects. This is useful for graphs where edges may be parallel or loop onto the same vertex.

\begin{definition}
    A \emph{multigraph} is a graph $F$ that contains self-loops and multi-edges. To distinguish its edges, each $e \in E(F)$ is described by two half-edges, $e = \{h_v^\alpha, h_w^\beta\}$ where $v, w \in V(F)$ and $\alpha \in [\deg(v)], \beta \in [\deg(w)]$ are labels to distinguish different half-edges on the same vertex.
\end{definition}

\begin{definition}
    A graph $F$ is \emph{$d$-regular} if every $v \in V(F)$ has $\deg(v) = d$.
\end{definition}

Hence a $4$-regular multigraph is a graph where each vertex contains exactly $4$ half-edges. Next, we define the Eulerian tour.

\begin{definition}\label{def:tour}
    A \emph{tour} $T$ on a graph $F$ is an alternating sequence of vertices and edges,
    \begin{equation}
        T = v_1 e_1 v_2 e_2 \ldots v_\ell e_\ell v_{\ell+1},
    \end{equation}
    such that each edge $e_j = \{h_{v_j}^\alpha, h_{v_{j+1}}^\beta\}$ is not repeated and $v_{\ell+1} = v_1$.  An \emph{Eulerian tour} is a tour that covers every edge of $F$.
\end{definition}

\begin{remark}
    When a tour is not empty, we may refer to it as a \emph{cycle}, in the sense that it induces an even-degree subgraph.
\end{remark}

$4$-regular multigraphs are special with regard to Eulerian tours. Without loss of generality we assume a connected multigraph, as otherwise we can work with each connected component at a time.

\begin{fact}
    Let $F$ be a connected $4$-regular multigraph. Then $F$ is guaranteed to have at least one Eulerian tour, and in every Eulerian tour of $F$, each vertex is traversed exactly twice (except for the first vertex $v_1$, which is visited thrice by the convention of \cref{def:tour}).
\end{fact}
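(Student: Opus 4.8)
The plan is to establish the two assertions separately:~existence of an Eulerian tour, and the visit-count structure of \emph{every} such tour. For existence I would invoke the classical argument behind Euler's theorem specialized to multigraphs (Hierholzer's constructive method). Starting from $v_1$, greedily extend a trail by traversing previously unused edges;~because every vertex has even degree $4$, whenever the trail enters a vertex other than its start there is always an unused half-edge by which to leave, so the trail can get stuck only back at $v_1$, yielding a closed tour. If this tour does not yet cover all edges, connectivity of $F$ guarantees that some vertex $u$ already on the tour is incident to an unused edge;~by the same parity argument one builds a closed subtour based at $u$ from unused edges and splices it into the existing tour at $u$. Iterating until no edges remain produces an Eulerian tour. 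The only care needed is to phrase the parity step in terms of half-edges, so that self-loops (which consume two half-edges at one vertex) and parallel edges are handled uniformly.

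For the visit-count claim I would run a half-edge incidence count that applies to \emph{any} Eulerian tour, not merely the constructed one. Fix $T = v_1 e_1 \cdots e_\ell v_{\ell+1}$ with $v_{\ell+1} = v_1$. Since $T$ traverses each edge exactly once, $\ell = |E(F)|$, and the handshake identity $\sum_{v} \deg(v) = 4|V(F)| = 2|E(F)|$ gives $\ell = 2|V(F)|$, so the vertex sequence has $\ell + 1 = 2|V(F)| + 1$ positions. Each traversal $v_j\, e_j\, v_{j+1}$ uses exactly one half-edge at $v_j$ (outgoing) and one at $v_{j+1}$ (incoming), and because every edge is traversed, each of the $4|V(F)|$ half-edges of $F$ is used exactly once in this way. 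Consequently each \emph{interior} occurrence of a vertex $v$, i.e.\ as $v_j$ with $2 \le j \le \ell$, accounts for precisely two half-edges at $v$ (the incoming one from $e_{j-1}$ and the outgoing one from $e_j$), while the boundary positions $v_1$ and $v_{\ell+1}$ each account for exactly one.

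Combining these counts finishes the argument. If $v \neq v_1$, all of its occurrences are interior, so $2 \cdot (\text{number of occurrences of } v) = \deg(v) = 4$, giving exactly two occurrences. If $v = v_1$, the two boundary positions cover $2$ half-edges, leaving $\deg(v_1) - 2 = 2$ half-edges to be covered by interior occurrences at two half-edges each, hence exactly one interior occurrence;~together with the two boundary positions this is three occurrences, matching the stated convention. As a sanity check, the global tally is consistent:~$2(|V(F)| - 1) + 3 = 2|V(F)| + 1 = \ell + 1$. I expect the main obstacle to be purely a matter of bookkeeping rather than of ideas:~ensuring the half-edge accounting remains valid in the presence of self-loops and multi-edges, and correctly treating $v_1 = v_{\ell+1}$ as a single vertex that nonetheless occupies two distinct positions in the vertex sequence.
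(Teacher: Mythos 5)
Your proof is correct: the paper states this as a background Fact without proof, and your argument (Hierholzer's construction for existence, plus the half-edge/degree count $2\cdot(\text{occurrences}) = \deg(v) = 4$ for the visit multiplicities) is precisely the classical justification the paper implicitly relies on. The half-edge bookkeeping you flag for self-loops and parallel edges indeed goes through, since each interior occurrence of a vertex consumes exactly one incoming and one outgoing half-edge regardless of whether those belong to the same (loop) edge.
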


Such a tour therefore induces a double-occurrence word whose letters correspond to each vertex as it is traversed in the tour, disregarding the final visit to the starting vertex. Formally, this word is the vertex sequence of the tour.

\begin{definition}
    Let $T = v_1 e_1 v_2 e_2 \ldots v_\ell e_\ell v_{\ell+1}$ be a tour on $F$. The \emph{vertex sequence} $\VS(T)$ of $T$ is the following word on $V(F)$:
    \begin{equation}
        \VS(T) \coloneqq v_1 v_2 \ldots v_\ell.
    \end{equation}
\end{definition}

\begin{fact}
    Let $F$ be a $4$-regular multigraph and $T$ an Eulerian tour. The vertex sequence $\VS(T)$ is a double-occurrence word on $V(F)$.
\end{fact}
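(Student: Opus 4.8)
The plan is to prove this by a half-edge counting argument, leaning on the preceding fact that every vertex of a connected $4$-regular multigraph is traversed exactly twice in any Eulerian tour, with the sole exception of the starting vertex $v_1$, which is visited thrice under the convention $v_{\ell+1} = v_1$. Since an Eulerian tour is a single connected sequence covering all edges, its very existence forces $F$ to be connected (a $4$-regular multigraph has no isolated vertices), so that prior fact applies directly and I need only translate its vertex-visit count into a statement about the word $\VS(T)$.

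The core reasoning I would make explicit is as follows. Each vertex of $F$ carries exactly four half-edges, and because an Eulerian tour traverses every edge exactly once, every half-edge is consumed exactly once over the course of the tour. Each interior passage through a vertex uses precisely two of its half-edges, one upon arrival and one upon departure, so a vertex visited only at interior positions must be visited exactly $4/2 = 2$ times. This accounts for every vertex other than $v_1$. For $v_1$ itself, the tour departs once at the beginning (one half-edge) and arrives once at the end (a second half-edge), so the cyclic boundary consumes two half-edges; the remaining two are then spent in exactly one interior visit, placing $v_1$ at position $1$, one interior position, and position $\ell + 1$, in agreement with the prior fact.

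I would then finish by passing from the full tour to $\VS(T) = v_1 v_2 \cdots v_\ell$, which by definition drops the terminal vertex $v_{\ell+1} = v_1$. Dropping this last occurrence reduces the count of $v_1$ from three to two while leaving every other vertex at two, so each letter of $V(F)$ appears exactly twice in $\VS(T)$, which is precisely the definition of a double-occurrence word. As a consistency check, one can note that $\VS(T)$ has length $\ell = 2\,\abs{V(F)}$ (the number of edges in a $4$-regular multigraph), exactly matching $\abs{V(F)}$ distinct letters each appearing twice.

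The only subtlety worth stating carefully, and thus the main (if modest) obstacle, is the bookkeeping around $v_1$: one must verify that the single occurrence removed by the $v_{\ell+1} = v_1$ convention is exactly the spurious third visit, rather than one of the two genuine visits that should survive in the word. Once that is pinned down, the result follows immediately from the degree count, so I do not anticipate any deeper difficulty.
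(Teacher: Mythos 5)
Your argument is correct. The paper states this as a \texttt{Fact} without proof (it follows immediately from the preceding \texttt{Fact} that every vertex is traversed exactly twice, with $v_1$ visited thrice only by the $v_{\ell+1}=v_1$ convention), and your half-edge counting derivation, including the observation that the existence of an Eulerian tour forces connectivity and the check that dropping $v_{\ell+1}$ removes exactly the spurious third occurrence of $v_1$, is a sound and complete justification of exactly the reasoning the paper implicitly relies on.
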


Thus since $T$ induces a double-occurrence word on the vertices of $F$, there exists a circle graph $G$, with $V(G) = V(F)$, which is the alternance graph of $\VS(T)$. There is a very elegant way to see the connectivity of a circle graph from its presentation as a $4$-regular multigraph. This involves the notion of a split of an Eulerian tour.

\begin{definition}\label{def:split}
    Let $T = v_1 e_1 v_2 e_2 \ldots v_\ell e_\ell v_{\ell+1}$ be an Eulerian tour. The \emph{split} of $T$ at vertex $v = v_j = v_k$ is the graph operation that transforms $T$ into the two cycles $C_v^1 = \ldots v_{j-1} e_{j-1} v_j e_k v_{k+1} e_{k+1} v_{k+2} \ldots$ and $C_v^2 = \ldots v_{k-1} e_{k-1} v_k e_j v_{j+1} e_{j+1} v_{j+2} \ldots$\,.
\end{definition}

\begin{fact}
    Let $F$ be a $4$-regular multigraph, $T$ an Eulerian tour, and $G$ the alternance graph of $\VS(T)$. Consider the split of $T$ at $v \in V(F) = V(G)$. Then for all $w \in V(G)$, $\{v, w\} \in E(G)$ if and only if $w$ appears exactly once in each cycle $C_v^1, C_v^2$.
\end{fact}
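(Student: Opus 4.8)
The plan is to track where each vertex $w \neq v$ lands after the split and to match this against the combinatorial definition of an alternance. First I would fix notation: since $\VS(T)$ is a double-occurrence word, $v$ occurs at exactly two positions, say $v = v_j = v_k$ with $j < k$. These two occurrences partition the remaining letters of the (cyclic) word into two arcs: the \emph{inner} arc consisting of $v_{j+1} \cdots v_{k-1}$, and the \emph{outer} arc consisting of $v_{k+1} \cdots v_\ell v_1 \cdots v_{j-1}$, the latter using the wraparound $v_{\ell+1} = v_1$. Every letter of $\VS(T)$ other than the two $v$'s lies in exactly one of these two arcs.

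Next I would read off the vertex sequences of the two cycles directly from \cref{def:split}. The cycle $C_v^1 = \cdots v_{j-1} e_{j-1} v_j e_k v_{k+1} \cdots$ visits $v$ once and then traverses exactly the letters of the outer arc before returning; symmetrically, $C_v^2 = \cdots v_{k-1} e_{k-1} v_k e_j v_{j+1} \cdots$ visits $v$ once and then traverses exactly the letters of the inner arc. This step is pure bookkeeping: the split reroutes the tour so that one visit to $v$ is glued to the arc ``inside'' its two occurrences and the other visit to the arc ``outside'' them. Consequently, the multiset of letters of $C_v^1$ (resp.\ $C_v^2$) is precisely $\{v\}$ together with the outer (resp.\ inner) arc.

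Now the characterization follows by a case analysis on the two occurrences of a fixed $w \neq v$. Since $w \neq v$, neither occurrence sits at position $j$ or $k$, so each lies in either the inner or the outer arc. If both occurrences lie in the inner arc, then $w$ appears twice in $C_v^2$ and not at all in $C_v^1$; symmetrically if both lie in the outer arc. In either degenerate case $w$ does \emph{not} appear exactly once in each cycle. Hence $w$ appears exactly once in each of $C_v^1, C_v^2$ if and only if one occurrence of $w$ lies in the inner arc and the other in the outer arc---equivalently, exactly one occurrence of $w$ falls strictly between the two occurrences of $v$. The final step is to recognize that this ``exactly one occurrence between'' condition is precisely the statement that the occurrences of $v$ and $w$ interleave as $\cdots v \cdots w \cdots v \cdots w \cdots$ in the cyclic word, which is the definition of $(v,w)$ being an alternance of $[\VS(T)]$, hence of $\{v,w\} \in E(G)$. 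Chaining these equivalences yields the claim in both directions.

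The only real subtlety---and where I would take the most care---is the cyclic and reversal bookkeeping: alternance is defined on the equivalence class $[\VS(T)]$, which is closed under cyclic rotation and reversal, so I must confirm that ``exactly one occurrence of $w$ strictly between the two occurrences of $v$'' is invariant under the choice of which arc is called inner versus outer, and under reading the word backwards. Since interchanging the two arcs (or reversing the word) preserves the partition of $w$'s two occurrences across the arcs, the interleaving condition is well-defined on $[\VS(T)]$, so no genuine difficulty arises. Edge cases, such as a self-loop at $v$ forcing the inner arc to be empty (so that $v$ is isolated in $G$), are handled uniformly by the same count, since then every $w \neq v$ has both occurrences in the outer arc.
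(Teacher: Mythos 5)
Your proof is correct. The paper states this Fact without proof (it is background material going back to Bouchet), but your inner/outer-arc partition argument is essentially the same case analysis the paper later carries out when proving the closely analogous \cref{lem:cycle_properties}, just phrased at the level of the vertex sequence rather than the half-edge sequence.
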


Given any $4$-regular multigraph, it is straightforward to draw some Eulerian tour and hence construct an associated circle graph. It turns out that the reverse direction---constructing a $4$-regular multigraph from a given circle graph---can be accomplished efficiently as well.

\begin{proposition}\label{prop:efficient_chord_diagram}
    Let $G$ be a simple graph. There exists an efficient algorithm to decide if $G$ is a circle graph, and if so, that also constructs a $4$-regular multigraph $F$ and an Eulerian tour $T$ on $F$ such that $G$ is the alternance graph of $\VS(T)$.
\end{proposition}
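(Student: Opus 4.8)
The plan is to separate the statement into two tasks: \emph{recognition} (decide whether $G$ is a circle graph) and \emph{construction} (produce the witness $(F,T)$ when it is). For recognition I would appeal to a known polynomial-time circle graph recognition algorithm, such as those of Bouchet or of Spinrad. On a ``yes'' instance, these algorithms return an explicit chord diagram realizing $G$ as its intersection graph; reading the chord endpoints around the circle converts this into a double-occurrence word $w = w_1 w_2 \cdots w_{2n}$ on $V(G)$ whose alternance graph is $G$. By the earlier Fact that circle graphs are exactly the alternance graphs of double-occurrence words, such a $w$ exists precisely when $G$ is a circle graph, so a ``no'' answer correctly certifies non-membership.

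The substantive constructive step is turning $w$ into a $4$-regular multigraph carrying a distinguished Eulerian tour. I would set $V(F) = V(G)$, arrange the $2n$ positions of $w$ cyclically, and add, for each consecutive pair $(w_i, w_{i+1})$ (indices taken mod $2n$), an edge $e_i$ joining the vertices $w_i$ and $w_{i+1}$. Since each letter $v$ occurs exactly twice in $w$, say at positions $i$ and $j$, the edges incident to $v$ are precisely $e_{i-1}, e_i, e_{j-1}, e_j$, so $\deg(v) = 4$ and $F$ is a $4$-regular multigraph (self-loops and multi-edges arise naturally when occurrences are adjacent or coincide on a pair of letters). The sequence $T = w_1 e_1 w_2 e_2 \cdots w_{2n} e_{2n} w_1$ then traverses each of the $2n$ edges exactly once and returns to its start, so it is an Eulerian tour in the sense of \cref{def:tour}, with $\VS(T) = w_1 \cdots w_{2n} = w$. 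The half-edge labels required by the multigraph definition are assigned by recording, for each edge incident to $v$, which of the two occurrences of $v$ it meets and whether it is the incoming or outgoing edge of the tour at that visit.

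Combining the steps proves the claim: on a ``yes'' instance, the alternance graph of $\VS(T) = w$ equals $G$ by the recognition algorithm's guarantee, so $(F,T)$ is a valid witness. For efficiency, the recognition algorithm runs in polynomial time and dominates the cost, while the word-to-multigraph conversion and the tour readout are linear in $2n$; hence the whole procedure is efficient.

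I expect the only genuine difficulty to lie in the recognition step, which is a nontrivial classical result that I would cite rather than reprove. The construction of $(F,T)$ from $w$ is essentially bookkeeping, and its correctness follows immediately from the definitions of vertex sequence, alternance graph, and Eulerian tour already established in this section; all the real work is hidden inside the circle graph recognition subroutine.
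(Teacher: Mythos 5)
Your proposal is correct and follows essentially the same route as the paper: cite a polynomial-time circle graph recognition algorithm (the paper uses the near-linear-time algorithm of Gioan et al., with Bouchet's and Spinrad's as earlier alternatives) to obtain a chord diagram, and then convert the resulting double-occurrence word into the $4$-regular multigraph and Eulerian tour. The paper leaves the word-to-multigraph conversion implicit as ``easy,'' whereas you spell it out; your explicit construction (edges between cyclically consecutive letters, degree $4$ from the two occurrences of each letter) is the intended one and is correct.
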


The currently best-known algorithm is due to \cite{gioan2014practical}, which runs in time essentially linear in $|V(G)| + |E(G)|$. In particular, it builds a representation of a chord diagram to check whether $G$ is circle or not. Assuming the chord diagram is valid, we can use it to easily construct the associated tour and multigraph. Earlier works that also give polynomial-time algorithms for circle graph recognition, albeit with higher scalings, include~\cite{naji1985reconnaissance,bouchet1987reducing,gabor1989recognizing,spinrad1994recognition}.

\subsection{Fermionic Gaussian states}\label{Sec: GaussianFermionicStates}

Here we review the necessary background for fermionic systems, and free-fermion (Gaussian) states in particular. We refer the reader to \cite{bravyi2004lagrangian} for a more thorough exposition.

Consider a system of $N$ fermionic modes, which lives in a Hilbert space (called Fock space) of dimension $2^N$. The algebra of operators over this space can be generated by $2N$ Majorana operators.

\begin{definition}
    Define operators $c_1, \ldots, c_{2N}$ on $\C^{2^N}$ which obey the canonical anticommutation relation:
    \begin{equation}
        c_j c_k + c_k c_j = 2 \delta_{jk} \I.
    \end{equation}
    Furthermore for any $Q \subseteq [2N]$, we define the monomials
    \begin{equation}
        c_Q \coloneqq \prod_{j \in Q} c_j,
    \end{equation}
    where we take the convention that indices increase from left to right in the product. We call all such operators \emph{Majorana operators}, and $Q$ is said to be the \emph{support} of $c_Q$.
\end{definition}

More generally, we will say that the support of an operator $A = \sum_{Q \subseteq [2N]} A_Q c_Q$ is $\bigcup_{Q : A_Q \neq 0} Q$. We will later see that this notion is basis-dependent. Note that any operator can be decomposed in this way, with $A_Q = 2^{-N} \tr(c_Q^\dagger A)$. This is a consequence of the following fact.

\begin{fact}
    The Majorana operators form a complete orthogonal operator basis,
    \begin{equation}
        \tr(c_Q^\dagger c_R) = 2^N \delta_{QR}.
    \end{equation}
    In fact, $\{c_Q : Q \subseteq [2N]\}$ obeys precisely the $N$-qubit Pauli algebra.
\end{fact}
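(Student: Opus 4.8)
The plan is to derive everything from the canonical anticommutation relation (CAR) $c_jc_k + c_kc_j = 2\delta_{jk}\I$ alone, working in the standard Hermitian Majorana convention $c_j^\dagger = c_j$. First I would record the immediate consequences: setting $j=k$ gives $c_j^2 = \I$, so each $c_j$ is a Hermitian involution and hence unitary, while for $j \neq k$ the Majoranas anticommute, $c_jc_k = -c_kc_j$. It follows that every monomial $c_Q$ is unitary, and that reversing the order of its $|Q|$ pairwise-anticommuting factors costs an explicit sign, so $c_Q^\dagger = (-1)^{|Q|(|Q|-1)/2}\,c_Q = \pm c_Q$.

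The technical heart is the tracelessness lemma: $\tr(c_Q) = 2^N$ when $Q = \emptyset$ and $\tr(c_Q) = 0$ otherwise. I would prove this by conjugation. Conjugation by a single Majorana $c_m$ (note $c_m^{-1} = c_m$) is an algebra automorphism sending $c_j \mapsto -c_j$ for $j \neq m$ and fixing $c_m$; applying it to a nonempty $c_Q$ gives $c_m c_Q c_m = (-1)^{|Q|-1}c_Q$ when $m \in Q$ and $c_m c_Q c_m = (-1)^{|Q|}c_Q$ when $m \notin Q$. Since $c_m^2 = \I$, cyclicity of the trace yields $\tr(c_Q) = \tr(c_m c_Q c_m)$ in either case, so $\tr(c_Q)$ equals a sign times itself. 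For any nonempty $Q$ one of the two choices produces the sign $-1$—the $m \in Q$ choice when $|Q|$ is even, the $m \notin Q$ choice (possible since $Q \neq [2N]$ forces such an $m$ to exist, and $Q=[2N]$ has even cardinality covered by the first case) when $|Q|$ is odd—forcing $\tr(c_Q) = 0$.

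With the lemma established, orthogonality and completeness follow quickly. Using the CAR to bring together and cancel shared indices (each contributing $c_j^2 = \I$), the product $c_Q^\dagger c_R$ collapses up to an explicit sign to $c_{Q \triangle R}$, where $\triangle$ is symmetric difference, while for $Q=R$ unitarity gives $c_Q^\dagger c_Q = \I$. Hence $\tr(c_Q^\dagger c_R)$ vanishes unless $Q \triangle R = \emptyset$, i.e.\ $Q = R$, where it equals $\tr(\I) = 2^N$, proving $\tr(c_Q^\dagger c_R) = 2^N\delta_{QR}$. For completeness I would count: there are $2^{2N} = (2^N)^2$ subsets $Q \subseteq [2N]$, matching $\dim$ of the operator space on $\C^{2^N}$; pairwise-orthogonal nonzero operators are linearly independent, so the monomials form a basis.

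For the final assertion that $\{c_Q\}$ obeys precisely the $N$-qubit Pauli algebra, I would exhibit the Jordan--Wigner generators $c_{2j-1} = \big(\prod_{k<j} Z_k\big)X_j$ and $c_{2j} = \big(\prod_{k<j} Z_k\big)Y_j$, and verify directly from $X^2=Y^2=Z^2=\I$, $XYZ = i\I$ (\cref{fact:pauli_algebra}) that they satisfy the CAR. This makes $c_Q \mapsto$ (signed Pauli string) a $\ast$-algebra isomorphism: generators map to generators respecting the defining relations, and the induced bijection between the $2^{2N}$ monomials and the $4^N$ signed $N$-qubit Paulis matches both the multiplication (up to phases) and the trace-orthogonality structure established above. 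I expect the main obstacle to be careful bookkeeping of anticommutation signs—in the reversal formula for $c_Q^\dagger$, in the cancellation of repeated indices in $c_Q^\dagger c_R$, and in confirming the parity case analysis of the tracelessness lemma exhausts every nonempty $Q$; once the Jordan--Wigner generators are checked against the CAR, the Pauli-algebra identification itself is routine.
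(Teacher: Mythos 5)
Your proof is correct and complete: the tracelessness lemma via conjugation by a single Majorana (with the even/odd case split on $|Q|$), the reduction of $c_Q^\dagger c_R$ to $\pm c_{Q\triangle R}$, the dimension count $2^{2N}=(2^N)^2$, and the Jordan--Wigner realization together establish every claim in the statement. The paper states this as unproved background, so there is no argument to compare against; your only implicit addition is the Hermiticity convention $c_j^\dagger = c_j$, which the paper's definition omits but clearly intends (it is needed for $\Gamma_\rho$ to be real and for the orthogonality relation as written).
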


Being isomorphic to Pauli operators, Majorana operators also either commute or anticommute. Specifically, we have the following relation.

\begin{fact}\label{fact:majorana_commute}
    For $Q, R \subseteq [2N]$,
    \begin{equation}
        c_Q c_R = (-1)^{|Q| \cdot |R| + |Q \cap R|} c_R c_Q.
    \end{equation}
\end{fact}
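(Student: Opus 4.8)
The plan is to reduce the general relation to the elementary identities among single generators and then accumulate signs factor by factor. From the canonical anticommutation relation $c_j c_k + c_k c_j = 2\delta_{jk}\I$ I read off the only two inputs I need: setting $j = k$ gives $c_j^2 = \I$, and for $j \neq k$ it gives the anticommutation $c_j c_k = -c_k c_j$.

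First I would establish a single-generator lemma: for every $k \in [2N]$ and every $Q \subseteq [2N]$,
\begin{equation}
    c_Q\, c_k = (-1)^{|Q| - [k \in Q]}\, c_k\, c_Q,
\end{equation}
where $[k \in Q] \in \{0,1\}$ is the indicator of membership (note $(-1)^{|Q|-[k\in Q]}$ and $(-1)^{|Q|+[k\in Q]}$ have the same parity, so either form is acceptable). The cleanest verification is a short induction on $|Q|$: writing $c_Q = c_{Q'} c_m$ with $m$ the largest index of $Q$, one case has $m \neq k$, where a single anticommutation flips $c_m$ past $c_k$ and the inductive hypothesis handles $c_{Q'}$; the other case has $m = k$, where $c_k^2 = \I$ collapses a pair and the bookkeeping produces exactly the $[k\in Q]$ correction. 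Informally, moving $c_k$ from one side of $c_Q$ to the other, it anticommutes with each of the $|Q| - [k\in Q]$ factors whose index differs from $k$, while the single coincident factor (present iff $k \in Q$) contributes no sign.

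Next I would iterate this lemma over the factors of $c_R$. Writing $c_R = c_{k_1} c_{k_2} \cdots c_{k_r}$ with $r = |R|$ (and the $k_b$ the distinct elements of $R$), I transport each $c_{k_b}$ in turn from the right of $c_Q$ to its left; because the already-moved generators sit harmlessly to the far left, each step multiplies by the single-generator sign $(-1)^{|Q| - [k_b \in Q]}$. After all $r$ moves the entire block $c_R$ has been brought to the front, giving
\begin{equation}
    c_Q\, c_R = (-1)^{\, r|Q| \,-\, \sum_{b=1}^{r} [k_b \in Q]} \; c_R\, c_Q .
\end{equation}
I then read off the exponent: $r|Q| = |Q|\cdot|R|$, and $\sum_{b=1}^{r}[k_b \in Q]$ counts the elements of $R$ lying also in $Q$, i.e.\ $|Q \cap R|$. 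Since $-|Q\cap R|$ and $+|Q\cap R|$ agree in parity, this is precisely $(-1)^{|Q|\cdot|R| + |Q\cap R|}$, as claimed.

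The only genuine subtlety---and the step I would watch most carefully---is the treatment of the shared support $Q \cap R$. When an index belongs to both sets, the single-generator sign changes from $(-1)^{|Q|}$ to $(-1)^{|Q|-1}$, and it is exactly the accumulation of these corrections over all common indices that yields the $|Q \cap R|$ term. I would double-check the ordering convention in $c_Q = \prod_{j\in Q} c_j$ to be sure no extra sign is introduced when a generator meets its duplicate (the $m = k$ case of the induction), but beyond this sign bookkeeping the statement is an immediate consequence of the two base relations $c_j^2 = \I$ and $c_j c_k = -c_k c_j$.
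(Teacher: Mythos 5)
Your proof is correct. The paper states this as a \emph{Fact} without proof (it is a standard consequence of the canonical anticommutation relations), and your argument---establishing the single-generator sign $(-1)^{|Q|-[k\in Q]}$ and then accumulating it over the $|R|$ factors of $c_R$, with the $|Q\cap R|$ term arising from the coincident generators---is exactly the standard sign-counting verification one would supply, consistent with the paper's left-to-right ordering convention for $c_Q$.
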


In a system of fermions, parity superselection rules dictate that any physical observable or density operator is supported on only even-degree Majorana monomials. Thus we are primarily concerned with both $|Q|$ and $|R|$ being even, in which case $c_Q$ and $c_R$ commute if and only if their supports intersect an even number of times. We also have that the support of $c_Q c_R$ is equal to the symmetric difference $Q \triangle R$.

Majorana operators can be rotated into each other by a Bogoliubov transformation called a fermionic Gaussian unitary.

\begin{definition}\label{def:FGU}
    Let $O \in \Orth(2N)$. A unitary operator $U_O$ obeying
    \begin{equation}
        U_O^\dagger c_j U_O = \sum_{k \in [2N]} O_{jk} c_k
    \end{equation}
    for all $j \in [2N]$ is called a \emph{fermionic Gaussian unitary}.
\end{definition}

Next we define fermionic Gaussian states. A natural physics-based definition is as the thermal and ground states of free-fermion Hamiltonians. An equivalent definition prescribes the form of the density operator explicitly.

\begin{definition}\label{def:FGS}
    Let $\rho$ be a density operator. We say $\rho$ is a \emph{fermionic Gaussian state} (FGS) if there exists $O \in \Orth(2N)$ and a sequence of real numbers $\lambda_1, \ldots, \lambda_N \in [-1, 1]$ such that
    \begin{equation}\label{eq:gaussian_state_defn}
        \rho = U_O \prod_{j \in [N]} \l( \frac{\I - i \lambda_j c_{2j-1} c_{2j}}{2} \r) U_O^\dagger.
    \end{equation}
\end{definition}

One immediately sees that an FGS is pure if and only if all $|\lambda_j| = 1$. A convenient formalism when working with Gaussian states is the notion of a covariance matrix, which is of polynomial size.

\begin{definition}\label{def:cov_mat}
    The \emph{covariance matrix} of an $N$-mode state $\rho$ is the $2N \times 2N$ real, skew-symmetric matrix $\Gamma_{\rho}$ whose entries are defined as
    \begin{equation}
        [\Gamma_\rho]_{jk} \coloneqq -\frac{i}{2} \tr([c_j, c_k] \rho).
    \end{equation}
\end{definition}

If we have the covariance matrix of some state $\rho$, then following \cref{def:FGU} the covariance matrix of the rotated state $U_O \rho U_O^\dagger$ for any orthogonal $O$ can be found as:
\begin{equation}
    \Gamma_{U_O \rho U_O^\dagger} = O \Gamma_\rho O^\T.
\end{equation}
We can completely characterize covariance matrices as follows.

\begin{fact}\label{fact:cov_mat_decomp}
    A matrix $\Gamma \in \R^{2N \times 2N}$ is the covariance matrix of some state $\rho$ if and only if $\Gamma = -\Gamma^\T$ and $\Gamma \Gamma^\T \preceq \I$. Any such matrix $\Gamma = \Gamma_\rho$ can be brought to canonical form
    \begin{equation}
        \Gamma_\rho = O \bigoplus_{j \in [N]} \begin{pmatrix}
            0 & \lambda_j\\
            -\lambda_j & 0
        \end{pmatrix} O^\T, \quad \text{where } \lambda_j \in [-1, 1] \text{ and } O \in \Orth(2N),
    \end{equation}
    and furthermore one can choose $O$ and $\lambda_j$'s to be the same as in \cref{eq:gaussian_state_defn} when $\rho$ is Gaussian. In particular, $\rho$ is pure and Gaussian if and only if $\Gamma_\rho$ is skew-symmetric and orthogonal ($\Gamma \Gamma^\T = \I$).
\end{fact}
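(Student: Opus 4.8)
The plan is to read off reality and skew-symmetry directly from \cref{def:cov_mat}, derive the spectral bound from positivity of $\rho$, invoke the real canonical form of skew-symmetric matrices, and handle the reverse direction and purity by explicit computation on the state of \cref{eq:gaussian_state_defn}. Reality and skew-symmetry are immediate: $[c_j,c_k]=-[c_k,c_j]$ gives $\Gamma=-\Gamma^\T$, and since each $c_j$ is Hermitian, $[c_j,c_k]$ is anti-Hermitian, so $\tr([c_j,c_k]\rho)$ is purely imaginary and the prefactor $-i/2$ returns a real number.

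For the constraint I would package the two-point functions into the Hermitian matrix $M$ with entries $M_{jk}=\tr(c_jc_k\rho)$. Using $c_jc_k=\delta_{jk}\I+\tfrac12[c_j,c_k]$ gives $M=\I+i\Gamma$. Positivity of $M$ is the physical input: testing against an arbitrary $A=\sum_j v_jc_j$ yields $v^\dagger M v=\tr(A^\dagger A\,\rho)\ge0$ for all $v\in\C^{2N}$, because $A^\dagger A\succeq0$ and $\rho\succeq0$. Since $i\Gamma$ is Hermitian with eigenvalues $\pm\lambda_j$ (where $\pm i\lambda_j$ are the eigenvalues of the real skew matrix $\Gamma$), the condition $\I+i\Gamma\succeq0$ is precisely $|\lambda_j|\le1$, i.e.\ $\Gamma\Gamma^\T\preceq\I$. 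The block-diagonal canonical form is then the standard Youla/real-Schur decomposition of a real skew-symmetric matrix, and the bound just derived confines each $\lambda_j$ to $[-1,1]$.

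For the reverse direction I would take any skew-symmetric $\Gamma$ with $\Gamma\Gamma^\T\preceq\I$, bring it to canonical form, and build $\rho$ as in \cref{eq:gaussian_state_defn}. Legitimacy holds because $ic_{2j-1}c_{2j}$ squares to $\I$ and hence has eigenvalues $\pm1$, so each factor has nonnegative spectrum $(1\mp\lambda_j)/2$ exactly when $|\lambda_j|\le1$, with total trace one by the factorized structure. Correctness follows from a short block computation giving $\tr(c_{2j-1}c_{2j}\sigma)=i\lambda_j$, hence the block $\bigl(\begin{smallmatrix}0&\lambda_j\\-\lambda_j&0\end{smallmatrix}\bigr)$, after which conjugation by $U_O$ sends $\Gamma_\sigma\mapsto O\Gamma_\sigma O^\T=\Gamma$ via $\Gamma_{U_O\rho U_O^\dagger}=O\Gamma_\rho O^\T$. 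Finally $\tr(\rho^2)=\prod_j(1+\lambda_j^2)/2\le1$, with equality iff every $|\lambda_j|=1$, which is exactly $\Gamma\Gamma^\T=\I$; thus a Gaussian state is pure iff its covariance matrix is orthogonal, and conversely an orthogonal $\Gamma$ forces $\lambda_j=\pm1$ and a pure state, the unique state with that covariance matrix.

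The main obstacle is the forward spectral bound: the key move is recognizing that the correlators assemble into $\I+i\Gamma$ and that testing positivity against linear combinations of Majoranas converts the abstract positivity of $\rho$ into the eigenvalue constraint $|\lambda_j|\le1$. By comparison the Youla decomposition is off-the-shelf linear algebra and the reverse construction is a direct block computation; a minor additional subtlety is arguing, in the purity claim, that orthogonality of $\Gamma$ pins down the state uniquely among all states, not merely among Gaussian ones.
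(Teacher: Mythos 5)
The paper itself states this as a background \emph{Fact} with no proof (it is standard material, implicitly deferred to the cited reference on fermionic linear optics), so there is no in-paper argument to compare against; what matters is whether your self-contained proof is sound, and it is. Each step checks out: $M=\I+i\Gamma$ follows from $c_jc_k=\delta_{jk}\I+\tfrac12[c_j,c_k]$ and the paper's normalization of $\Gamma_\rho$; positivity of $v^\dagger Mv=\tr(A^\dagger A\rho)$ gives $\I+i\Gamma\succeq 0$, which for a real skew-symmetric $\Gamma$ with spectrum $\{\pm i\lambda_j\}$ is exactly $\Gamma\Gamma^\T=-\Gamma^2\preceq\I$; the Youla normal form gives the canonical block decomposition; and the reverse construction via \cref{eq:gaussian_state_defn} together with $\tr(c_{2j-1}c_{2j}\sigma)=i\lambda_j$ and $\Gamma_{U_O\rho U_O^\dagger}=O\Gamma_\rho O^\T$ establishes sufficiency and the "same $O$, $\lambda_j$" claim. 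The purity computation $\tr(\rho^2)=\prod_j(1+\lambda_j^2)/2$ is also correct. The one place you assert rather than prove is the uniqueness claim needed for the "if" direction of the final sentence: that an orthogonal skew-symmetric $\Gamma$ is the covariance matrix of only one state (hence that state is the pure Gaussian one). You rightly flag this; the missing argument is short: rotate to canonical form with $\lambda_j=\pm1$, note that each $\frac{\I-i\lambda_j c_{2j-1}c_{2j}}{2}$ is then a genuine projector on which any state with covariance $\Gamma$ has unit expectation (since $\tr(c_{2j-1}c_{2j}\rho)=i\lambda_j$), so any such state is supported on the intersection of their images, which has dimension $2^N/2^N=1$. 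With that filled in, the proof is complete.
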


Thus the coefficients $\lambda_j$ in a Gaussian state are the \emph{Williamson eigenvalues} of its covariance matrix. The final point of \cref{fact:cov_mat_decomp} gives a necessary and sufficient condition on the covariance matrix of a pure Gaussian state, but not for mixed ones. Instead, a more general condition for $\rho$ to be Gaussian involves Wick's theorem.

\begin{fact}\label{fact:wick}
    A state $\rho$ is Gaussian if and only if its covariance matrix obeys Wick's theorem:
    \begin{equation}
        \tr(c_Q \rho) = i^{|Q|/2} \pf\l(\Gamma_\rho[Q]\r) \quad \forall Q \subseteq [2N],
    \end{equation}
    where $\Gamma_\rho[Q]$ is the principal submatrix of $\Gamma_\rho$ corresponding to rows and columns $j \in Q$.
\end{fact}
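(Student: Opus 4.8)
The plan is to prove the two directions separately, with essentially all of the content residing in the forward (``only if'') implication; the reverse implication then follows almost for free. For the reverse direction, suppose $\rho$ obeys the stated Wick identity, and let $\Gamma_\rho$ be its covariance matrix. By \cref{fact:cov_mat_decomp} there exists a genuine Gaussian state $\sigma$ whose covariance matrix equals $\Gamma_\rho$. Granting the forward direction applied to $\sigma$, we get $\tr(c_Q\sigma) = i^{|Q|/2}\pf(\Gamma_\rho[Q]) = \tr(c_Q\rho)$ for every $Q\subseteq[2N]$. Since the Majorana monomials form a complete orthogonal operator basis, any operator $A$ is determined by the numbers $\tr(c_Q^\dagger A)$; as $c_Q^\dagger = \pm c_Q$, the equality of all these traces forces $\rho=\sigma$, so $\rho$ is Gaussian. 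Thus it remains only to show that every Gaussian state obeys Wick's theorem.

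For the forward direction I would first treat the canonical case $O=\I$ of \cref{def:FGS}, where $\rho_0=\prod_{j\in[N]}\tfrac{1}{2}(\I - i\lambda_j c_{2j-1}c_{2j})$ factorizes across the mode pairs $\{2j-1,2j\}$. Then $\tr(c_Q\rho_0)$ factorizes into single-pair contributions, the trace of any monomial touching a pair in exactly one index vanishes, and every odd-size $Q$ gives zero by parity; the surviving $Q$ are unions of full pairs, each contributing a factor $i\lambda_j$ (consistent with the $|Q|=2$ normalization fixed by \cref{def:cov_mat}). One checks this matches $i^{|Q|/2}\pf(\Gamma_{\rho_0}[Q])$ using that the Pfaffian of a block-diagonal skew-symmetric matrix is the product of its block Pfaffians, with odd-size principal submatrices vanishing. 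To pass to general $O$, I would recast the identity in basis-free form: for vectors $v_1,\dots,v_m$ and $c(v)\coloneqq\sum_k v_k c_k$, the claim reads $\tr(c(v_1)\cdots c(v_m)\,\rho)=i^{m/2}\pf[\,v_a^\T\Gamma_\rho v_b\,]_{a,b}$. By \cref{def:FGU}, conjugation sends $c(v)\mapsto c(O^\T v)$, and using $\Gamma_\rho = O\Gamma_{\rho_0}O^\T$ one finds $\tr(c(v_1)\cdots c(v_m)\,U_O\rho_0 U_O^\dagger)=i^{m/2}\pf[(O^\T v_a)^\T\Gamma_{\rho_0}(O^\T v_b)]=i^{m/2}\pf[v_a^\T\Gamma_\rho v_b]$. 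Since every Gaussian state is of the form $U_O\rho_0U_O^\dagger$, the base case propagates to all Gaussian $\rho$; specializing the $v_a$ to standard basis vectors $e_{q_a}$ recovers the principal-submatrix statement.

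I expect the main obstacle to be combinatorial bookkeeping in the base case rather than any deep difficulty: collapsing repeated Majorana indices, tracking the reordering signs that align $\tr(c_Q\rho_0)$ with the perfect-matching sign convention of the Pfaffian, and confirming the $i$-phases and $\lambda_j$ factors. The basis-free reformulation is the key device here, since it makes the covariance step entirely sign-transparent and avoids having to analyze how a \emph{principal-submatrix} Pfaffian behaves under the congruence $\Gamma\mapsto O\Gamma O^\T$ (which, unlike the full Pfaffian, is not governed simply by $\det O$ once one restricts to $Q$). As an alternative that bypasses the base-case combinatorics, one could compute the Grassmann characteristic function $\tr(\rho\,e^{\sum_j \eta_j c_j})$, recognize it as a Gaussian in the anticommuting variables $\eta_j$, and read off its coefficients as Pfaffians by definition; however, this requires Grassmann calculus not set up in the paper, so I would keep the operator-algebraic argument as the primary route.
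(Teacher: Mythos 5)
The paper records this statement as a background \textbf{Fact}, citing the free-fermion literature, and supplies no proof of its own, so there is no internal argument to compare against; judging your proposal on its merits, the overall architecture is sound --- the reverse direction via \cref{fact:cov_mat_decomp} plus completeness of the Majorana operator basis is fine, and ``canonical base case plus orthogonal rotation'' is the standard route --- but the rotation step as you have set it up contains a genuine gap. The basis-free identity you propose, $\tr(c(v_1)\cdots c(v_m)\rho)=i^{m/2}\pf[v_a^\T\Gamma_\rho v_b]_{a,b}$, is \emph{false} for general tuples of vectors: take $m=2$ and $v_1=v_2=e_1$, so the left side is $\tr(c_1c_1\rho)=\tr(\rho)=1$ while the right side vanishes because $v_a^\T\Gamma_\rho v_b=[\Gamma_\rho]_{11}=0$. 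Consequently you cannot obtain the identity for the rotated vectors $O^\T e_{q_a}$ by verifying it on standard basis vectors and extending multilinearly: multilinear extension requires the identity on \emph{all} tuples of basis vectors, including those with repeated entries, and that is exactly where it fails. Your base case (distinct, increasing indices for the canonical $\rho_0$) therefore does not propagate through the conjugation step as written, even though the target identity happens to be true for the orthonormal tuples you ultimately need.

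The fix is standard but is precisely the nontrivial combinatorial content you deferred to ``bookkeeping'': prove the base case in the stronger form $\tr(c_{k_1}\cdots c_{k_{2m}}\rho_0)=\pf(M)$ with $M_{ab}=\tr(c_{k_a}c_{k_b}\rho_0)=\delta_{k_ak_b}+i[\Gamma_{\rho_0}]_{k_ak_b}$ for $a<b$ (antisymmetrized), valid for \emph{arbitrary}, possibly repeated, index tuples. Equivalently, the correct basis-free Gram entry is $v_a^\T v_b+i\,v_a^\T\Gamma_\rho v_b$ rather than $i\,v_a^\T\Gamma_\rho v_b$. This version really is multilinear in $(v_1,\ldots,v_m)$ on both sides, so it does extend from the standard basis to all vectors; under conjugation by $U_O$ the Gram entries transform to $(OO^\T)_{q_aq_b}+i(O\Gamma_{\rho_0}O^\T)_{q_aq_b}=\delta_{q_aq_b}+i[\Gamma_\rho]_{q_aq_b}$, and for distinct indices $q_a$ the Kronecker deltas drop out, recovering $i^{|Q|/2}\pf(\Gamma_\rho[Q])$. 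Establishing this stronger base case --- e.g., by induction on $m$, anticommuting $c_{k_1}$ through the product to produce the Pfaffian row expansion, or via the Grassmann generating function you mention as an alternative --- is where the real work of Wick's theorem lies; once it is in place, the rest of your outline, including the reverse direction, goes through.
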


Above, $\pf(A)$ is the Pfaffian of a skew-symmetric matrix $A$, which can be computed in $\O(N^3)$ time~\cite{wimmer2012algorithm}. Pfaffians are closely related to the determinant via the identity
\begin{equation}
    \pf(A)^2 = \det(A).
\end{equation}

\section{Fermionic Matching States}\label{sec:matching_states}

\subsection{Identifying qubits within subspaces of fermions}\label{sec:qubit_mapping}

We now aim to give an explicit representation of the encoded qubits. The fermion-to-qubit mapping that achieves this was first introduced by Kitaev~\cite{kitaev2006anyons} to solve his honeycomb spin model.

\begin{definition}[Kitaev mapping]
    For each $v \in [n]$, define the (suggestively named) operators
    \begin{equation}\label{eq:qubit_mapping}
        \tilde{X}_v \coloneqq -ic_v^1 c_v^4, \quad \tilde{Y}_v \coloneqq -ic_v^2 c_v^4, \quad \tilde{Z}_v \coloneqq -ic_v^3 c_v^4.
    \end{equation}
    Also define the \emph{local gauge operators}
    \begin{equation}
        D_v \coloneqq -c_v^1 c_v^2 c_v^3 c_v^4.
    \end{equation}
\end{definition}

Being isomorphic to Pauli operators, each $D_v$ splits the fermionic Hilbert space in half according to their $\pm 1$-eigenspaces. We can identify a qubit within each of these subspaces. This is shown by the following statement, temporarily dropping subscripts for ease of presentation. We refer to \cref{tab:mapping} for a visual description of the mapping within the $+1$-eigenspace.

\begin{claim}\label{claim:pauli_subspace_mapping}
    Let $\subspace{s} \coloneqq \{ \ket{\psi} : D\ket{\psi} = s\ket{\psi} \}$ be the $s$-eigenspace of $D$, where $s \in \pmone$. Within $\subspace{s}$, we have the equivalences 
    \begin{equation}\label{eq:pauli_equiv}
        \tilde{X} \cong sX, \quad \tilde{Y} \cong sY, \quad \tilde{Z} \cong sZ,
    \end{equation}
    as well as
    \begin{equation}\label{eq:qubit_mapping_aux}
        -ic^2 c^3 \cong sX, \quad -ic^1 c^3 \cong -sY, \quad -ic^1 c^2 \cong sZ.
    \end{equation}
\end{claim}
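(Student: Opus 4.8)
The plan is to treat $\{\tilde X,\tilde Y,\tilde Z\}$ as an abstract single-qubit Pauli triple acting on the two-dimensional space $\subspace{s}$, and to exhibit one orthonormal basis of $\subspace{s}$ in which these operators are \emph{simultaneously} equal to $sX,sY,sZ$. First I would record two structural facts. Since $D=-c^1c^2c^3c^4$ is a Hermitian involution ($D^\dagger=D$ and $D^2=\I$, by direct CAR manipulation) given by a single nontrivial Majorana monomial, it is traceless; hence its $\pm1$-eigenspaces each have dimension $2$ inside the $4$-dimensional Fock space of the two modes, so $\subspace{s}$ is a genuine qubit space. Second, each of $\tilde X,\tilde Y,\tilde Z$ commutes with $D$: by \cref{fact:majorana_commute} two even-weight monomials commute iff their supports meet in an even number of indices, and $\supp(\tilde X)=\{1,4\}$, $\supp(\tilde Y)=\{2,4\}$, $\supp(\tilde Z)=\{3,4\}$ each intersect $\supp(D)=\{1,2,3,4\}$ in exactly two indices. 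Consequently all three operators preserve $\subspace{s}$ and restrict to honest operators there.

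Next I would check that the restricted operators satisfy the defining relations of the Pauli algebra from \cref{fact:pauli_algebra}. A short anticommutator computation gives $\tilde X^2=\tilde Y^2=\tilde Z^2=\I$ together with the key product identity $\tilde X\tilde Y\tilde Z=iD$. On $\subspace{s}$ the right-hand side equals $is\I$, which is precisely the value of $(sX)(sY)(sZ)=s^3\,i\I=si\I$. Thus $\tilde X,\tilde Y,\tilde Z$ obey exactly the relations $\tilde W^2=\I$ and $\tilde X\tilde Y\tilde Z=si\I$ defining $sX,sY,sZ$.

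The main point, and the step I expect to be the real obstacle, is upgrading these relations into a single simultaneous change of basis rather than three unrelated unitary equivalences. I would do this constructively. Because $\tilde Z$ restricts to a traceless involution on the two-dimensional $\subspace{s}$, it has one $+1$- and one $-1$-eigenvector; let $\ket{0}_s$ be the unit eigenvector with $\tilde Z\ket{0}_s=s\ket{0}_s$ and set $\ket{1}_s\coloneqq s\,\tilde X\ket{0}_s$. Using $\tilde X^2=\I$ and the anticommutation $\tilde X\tilde Z=-\tilde Z\tilde X$ (immediate from the CAR, since $\tilde X\tilde Z=c^1c^3=-\tilde Z\tilde X$), one verifies $\tilde Z\ket{1}_s=-s\ket{1}_s$, $\tilde X\ket{0}_s=s\ket{1}_s$, and $\tilde X\ket{1}_s=s\ket{0}_s$, so in the basis $\{\ket{0}_s,\ket{1}_s\}$ we have $\tilde Z=sZ$ and $\tilde X=sX$ on the nose. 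Finally $\tilde Y$ is forced: from $\tilde X\tilde Y\tilde Z=is\I$ and involutivity one gets $\tilde Y=is\,\tilde X\tilde Z$, which in this basis is $is\,(sX)(sZ)=is\,XZ=s\,Y$ (using $s^2=1$ and $XZ=-iY$). This establishes $\tilde X\cong sX$, $\tilde Y\cong sY$, $\tilde Z\cong sZ$ in a common basis, proving \eqref{eq:pauli_equiv}.

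For the second family \eqref{eq:qubit_mapping_aux} I would derive three further exact Majorana identities by the same bookkeeping, multiplying each listed pair by the complementary pair to reproduce $c^1c^2c^3c^4=-D$ and using $(c^ac^b)^{-1}=-c^ac^b$:
\[
-ic^2c^3=\tilde X D,\qquad -ic^1c^3=-\tilde Y D,\qquad -ic^1c^2=\tilde Z D.
\]
Restricting to $\subspace{s}$, where $D$ acts as $s\I$, these show the three auxiliary operators equal $s\tilde X$, $-s\tilde Y$, and $s\tilde Z$ there; combining with the identification of \eqref{eq:pauli_equiv} then yields the second family of Pauli equivalences, recovering \cref{tab:mapping} at $s=1$. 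The only subtlety to keep in view is bookkeeping the $s$-dependence carefully when passing $D$ through the generators, since it is the product-of-signs invariant fixed by $\tilde X\tilde Y\tilde Z=iD$ (and its analogue for the auxiliary triple) that determines which sign assignments are simultaneously realizable in one basis.
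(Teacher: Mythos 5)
Your proposal is correct and follows essentially the same route as the paper's proof: check that $\tilde X,\tilde Y,\tilde Z$ square to the identity and satisfy $\tilde X\tilde Y\tilde Z=iD\cong is\I$, then obtain the auxiliary operators by multiplying through by the gauge $D\cong s\I$. Your explicit construction of a single common basis realizing all three equivalences, and your closing caution about tracking the $s$-dependent signs when combining $-ic^2c^3\cong s\tilde X$ with $\tilde X\cong sX$ (a point the paper's final displayed step also passes over, and which is only delicate for $s=-1$, a case never used), simply make explicit what the paper leaves implicit.
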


\begin{remark}[Equivalence within subspaces]
    The notation $A \cong B$ for operators $A, B$ indicates that there is a projector $\Pi$ such that $\Pi A \Pi = B \oplus \bm{0}$, where $\bm{0}$ is the zero operator on the image of $\I - \Pi$. We use this equivalence analogously for state vectors (modulo some global phase). The subspace within which the equivalence holds will be clear from context.
\end{remark}

\begin{proof}[Proof (of \cref{claim:pauli_subspace_mapping})]
    We aim to show that the operators obey the Pauli algebra (\cref{fact:pauli_algebra}) within the subspace. By inspection, $\tilde{X}$, $\tilde{Y}$, and $\tilde{Z}$ square to the identity. It remains to show the condition $XYZ = i\I$. Observe that
    \begin{equation}
        \tilde{X} \tilde{Y} \tilde{Z} = -i c^1 c^2 c^3 c^4.
    \end{equation}
    But within $\subspace{s}$, we have $D = -c^1 c^2 c^3 c^4 \cong s\I$. Thus
    \begin{equation}
        \tilde{X} \tilde{Y} \tilde{Z} \cong is\I,
    \end{equation}
    which agrees with the identity $XYZ = i\I \implies (sX)(sY)(sZ) = is\I$.

    To show \cref{eq:qubit_mapping_aux}, multiply by $sD \cong \I$:
    \begin{equation}
        -i c^2 c^3 \cong -i c^2 c^3 \cdot sD = is c^2 c^3 c^1 c^2 c^3 c^4 = -is c^1 c^4 \cong sX.
    \end{equation}
    The other two equivalences can be shown by a similar calculation.
\end{proof}

\subsection{Construction of the state from a multigraph}\label{sec:construction_of_matching_state}

Let $F$ be a $4$-regular multigraph on $n$ vertices. Throughout, we assume that $F$ is loop-free and connected. The former assumption is valid without loss of generality, because any self-loop on some vertex $v$ in $F$ immediately implies that $v$ is disconnected from all other vertices in the corresponding circle graph $G$. The latter assumption simplifies our presentation, as we may consider each connected component one at a time. Note that two disjoint components in $F$ give rise to at least two disjoint components in $G$, implying that the graph state $\ket{G}$ is separable across that bipartition.

A $4$-regular multigraph naturally induces a perfect matching on $4n$ half-edges. We can further assign each edge with a direction, given by the orientation of some Eulerian tour $T$ on $F$. (Recall that the circle graph is obtained from $F$ by drawing a particular Eulerian tour.) This perfect matching structure then naturally induces a class of FGSs, which we describe here.

Half-edges in $F$ will be labeled as $h_v^\alpha$ where $v$ indicates one of the incident vertices $\alpha\in [4]$ is some labeling of the four half-edges on $v$.  Given a tour $T$ of the multigraph $F$ we define the half-edge sequence, $\HS(T)$, as the sequence of half-edges traced out by the tour using the following labeling convention for the $\alpha$ parameters.

\begin{definition}[Half-edge sequence of a tour]\label{def:half_edge_labels}
Fix some vertex $v_1$ as the starting point of the Eulerian tour $T$ and let it have vertex label $1$.  Set the first half-edge traversed in the tour to be $h_{1}^4$ and the last edge traversed in the tour to be $h_{1}^3$.  Let the half-edge corresponding to the first return of the tour to $v_1$ have label $h_{1}^1$ and the following edge be labeled $h_{1}^2$.  For the remaining vertices we follow the tour and use the lowest currently unused $v, \alpha$ labels for each half-edge we meet.  For example, a tour which has vertex sequence $\VS(T) = vwvw$ will have
    \begin{equation}
        \HS(T) = h_v^4 h_w^1 h_w^2 h_v^1 h_v^2 h_w^3 h_w^4 h_v^3.
    \end{equation}
\end{definition}

Now let us define a fermionic density operator by identifying each half-edge with a Majorana operator, and each commuting projector with an edge in the tour.

\begin{definition}[Matching states]\label{def:Psi_hs}
    Let $T$ be an Eulerian tour of a $4$-regular multigraph $F$ on $n$ vertices with half-edge sequence $\HS(T)=h_{1}^4 h_{2}^1 h_{2}^2 \ldots \,$. Let $E = \{(h_v^\alpha, h_w^\beta)\}_{v,w;\alpha,\beta}$ be the directed edges traversed in the tour where $h_v^\alpha$ comes before $h_w^\beta$.  For each half-edge $h_v^\alpha$ define a Majorana operator $c_v^\alpha$.  Define the \emph{matching state} $\Psi$ stabilized by $i c_1^4 c_2^1$ and by $-i c_i^j c_k^l$ for all $(h_i^j,h_k^l)\in E\setminus (h_1^4, h_2^1)$:
    \begin{equation}\label{eq:matching_state_defn}
        \Psi \coloneqq \l( \frac{\I + i c_1^4 c_2^1}{2} \r) \prod_{(h_v^\alpha, h_w^\beta) \in E\setminus (h_1^4, h_2^1)} \l( \frac{\I - i c_v^\alpha c_w^\beta}{2} \r).
    \end{equation}
\end{definition}

Clearly, this state is a stabilizer state because of the perfect matching structure:~all edges hit an even number of Majoranas and have disjoint supports, so the operators all commute and are independent group generators. Because $\Psi$ is a pure state, we will occassionally write $\ket{\Psi}$ such that $\Psi = \op{\Psi}{\Psi}$. 

\begin{remark}[Sign convention]
    One can imagine almost all the stabilizers of $\Psi$ as generated by $-i c_i^j c_k^l$ where the tour $T$ traverses the edge $h_i^j \rightarrow h_k^l$.  Only the sign of the first operator is flipped relative to the direction of the tour.  The reason for this is to ensure that the projection of $\Psi$ onto the subspace of $D_v=+1$ for all $v$ yields the desired graph state.  We could have chosen to flip any odd number of edges and achieved the same result.  Essentially for defining the stabilizer state, we are flipping a single edge in the directed perfect matching corresponding to the Eulerian tour.  
\end{remark}

\begin{definition}[Edge operators]\label{def:edge_ops}
    Given a graph $F$ and a tour $T$, the stabilizers of $\Psi$ described in \Cref{def:Psi_hs} will be referred to as the \emph{edge operators} $\edgeops$.
\end{definition}

Now we state some simple but important facts about $\Psi$.

\begin{claim}
    The matching state $\Psi$ is a pure Gaussian state, with covariance matrix equal to the skew-adjacency matrix of the directed perfect matching:
    \begin{equation}
       [\Gamma_{\Psi}]_{h_v^\alpha, h_w^\beta} = \begin{cases}
           1 & (h_v^\alpha, h_w^\beta) \in E \setminus (h_1^4, h_2^1),\\
           -1 & (h_v^\alpha, h_w^\beta) = (h_1^4, h_2^1),\\
           -[\Gamma_{\Psi}]_{h_w^\beta, h_v^\alpha} & (h_w^\beta, h_v^\alpha) \in E,\\
           0 & \text{else}.
       \end{cases}
    \end{equation}
\end{claim}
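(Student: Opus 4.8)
The plan is to compute the covariance matrix $\Gamma_\Psi$ entrywise directly from the stabilizer structure of $\Psi$, and then to read off both purity and Gaussianity from the characterization of covariance matrices in \cref{fact:cov_mat_decomp}. The only input needed for the entry computation is that $\Psi$ is a pure stabilizer state, which is immediate from \cref{def:Psi_hs}: the $2n$ edge operators $\edgeops$ are Hermitian, square to $\I$, have pairwise disjoint two-element supports, and hence mutually commute (by \cref{fact:majorana_commute}) and are independent. Since there are $2n$ of them on $2N = 4n$ Majorana modes (i.e.\ $N = 2n$ modes), they generate a full stabilizer group and $\Psi = \prod_{g \in \edgeops} \l(\frac{\I + g}{2}\r)$ is a well-defined pure state.

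For the entries I would start from \cref{def:cov_mat}. The diagonal vanishes since $[c_j, c_j] = 0$, and for $j \neq k$ distinct Majoranas anticommute, so $[c_j, c_k] = 2 c_j c_k$ and
\[
    [\Gamma_\Psi]_{h_v^\alpha, h_w^\beta} = -i\,\tr\l( c_v^\alpha c_w^\beta\, \Psi \r) = -i\, \ev{c_v^\alpha c_w^\beta}{\Psi}.
\]
To evaluate this I would expand $\Psi = \frac{1}{2^N} \sum_{R \subseteq \edgeops} \prod_{g \in R} g$ and use trace-orthogonality of Majorana monomials: each term $\prod_{g \in R} g$ is a phase times $c_{Q_R}$, where $Q_R$ is the disjoint union of the supports of the generators in $R$, so $\tr(c_v^\alpha c_w^\beta\, \Psi)$ is nonzero only when some $R$ satisfies $Q_R = \{h_v^\alpha, h_w^\beta\}$. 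Because the supports are disjoint pairs, this forces $|R| = 1$ with that single generator supported exactly on $\{h_v^\alpha, h_w^\beta\}$ --- i.e.\ the pair must be an edge. This disposes of the ``else'' case, giving $0$. For an edge, $\pm i c_v^\alpha c_w^\beta$ lies in the stabilizer group with a known sign, so the value follows from its stabilizer eigenvalue: a generator $-i c_v^\alpha c_w^\beta$ (the generic tour direction) gives $\ev{c_v^\alpha c_w^\beta}{\Psi} = i$ and hence entry $+1$, while the flipped first edge $+i c_1^4 c_2^1$ gives $\ev{c_1^4 c_2^1}{\Psi} = -i$ and hence entry $-1$; skew-symmetry (equivalently $c_w^\beta c_v^\alpha = -c_v^\alpha c_w^\beta$) produces the reversed entries. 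This reproduces exactly the four cases in the claim.

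Finally, the matrix just obtained has exactly one nonzero entry, equal to $\pm 1$, in each row and column (each half-edge belongs to exactly one edge of the matching), so $\Gamma_\Psi$ is a signed permutation matrix and therefore orthogonal; it is skew-symmetric by construction. Invoking \cref{fact:cov_mat_decomp}, a state whose covariance matrix is skew-symmetric and orthogonal is pure and Gaussian, which completes the claim. I expect the main hazard to be the sign bookkeeping --- tracking the factor of $i$ relating $c_v^\alpha c_w^\beta$ to the Hermitian stabilizer $\pm i c_v^\alpha c_w^\beta$, and in particular verifying that the deliberately flipped first edge in \cref{def:Psi_hs} is precisely what yields the lone $-1$ entry --- together with making the ``support must be a single edge'' argument airtight so that all non-matching off-diagonal entries are rigorously zero.
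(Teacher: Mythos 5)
Your proof is correct, but it runs in the opposite logical direction from the paper's. The paper's argument is structural and very short: it observes that the product of commuting projectors in \cref{def:Psi_hs} is carried to the canonical form of \cref{eq:gaussian_state_defn} (with all $\lambda_j \in \pmone$) by the permutation of Majorana modes that sends each directed edge to a canonical pair $(2j-1,2j)$, so $\Psi$ is pure Gaussian by \cref{def:FGS}, and only then reads off the covariance matrix from \cref{def:cov_mat}. You instead compute $\Gamma_\Psi$ entrywise first --- expanding $\Psi$ over the stabilizer group and using trace-orthogonality plus the disjointness of the generators' supports to kill all non-edge entries and fix the signs of the edge entries --- and then invoke the converse direction of \cref{fact:cov_mat_decomp} (skew-symmetric and orthogonal $\Rightarrow$ pure Gaussian) to conclude. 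Both routes are sound; the paper's is more economical, while yours buys a fully explicit verification of every entry, including the lone $-1$ coming from the deliberately flipped first edge, and a self-contained check that $\Gamma_\Psi$ is a signed permutation matrix. Your sign bookkeeping is right: with $[\Gamma_\Psi]_{jk} = -i\,\tr(c_j c_k \Psi)$ for $j \neq k$, a stabilizer $-i c_v^\alpha c_w^\beta$ gives $\ev{c_v^\alpha c_w^\beta}{\Psi} = i$ and hence entry $+1$, and the flipped generator $+i c_1^4 c_2^1$ gives entry $-1$, exactly as claimed.
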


\begin{proof}
    Recall \cref{def:FGS} for the definition of a Gaussian state. \cref{eq:matching_state_defn} can be brought to this form by a permutation matrix $O \in \Orth(4n)$ which bijectively maps each directed edge $(h_v^\alpha, h_w^\beta)$ to the canonical pair $(2j-1, 2j)$.  The resulting density matrix has the form of \Cref{eq:gaussian_state_defn} with $\lambda_j \in \pmone$ for all $j$. The covariance matrix follows by \cref{def:cov_mat}. 
\end{proof}

\begin{claim}\label{claim:prod_of_gauges}
    The matching state $\Psi$ is in the $+1$-eigenspace of the global parity operator $\prod_{j=1}^n D_j$.
\end{claim}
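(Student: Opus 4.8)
The plan is to show that the product $\prod_{j=1}^n D_j$ acts as $+1$ on $\Psi$ by rewriting this product as a product of the edge operators that stabilize $\Psi$. Recall that each $D_j = -c_j^1 c_j^2 c_j^3 c_j^4$ is the product of the four Majorana operators sitting on vertex $j$, so $\prod_{j=1}^n D_j = (-1)^n \prod_{j=1}^n c_j^1 c_j^2 c_j^3 c_j^4$ is (up to sign) the product of \emph{all} $4n$ Majorana operators, each appearing exactly once. The key structural observation is that the perfect matching underlying $\Psi$ also partitions these same $4n$ half-edges into $2n$ pairs, and each pair $(h_v^\alpha, h_w^\beta) \in E$ corresponds to an edge operator $\pm i c_v^\alpha c_w^\beta$ that stabilizes $\Psi$. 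Thus the global parity operator and the product of all edge operators are built from the same underlying set of Majoranas, and I expect them to agree up to an overall sign that I will need to pin down.

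Concretely, I would first compute the product of all edge operators $\prod_{e \in E} (\pm i c_v^\alpha c_w^\beta)$. Since $\Psi$ is the $+1$-eigenstate of each edge operator, it is also a $+1$-eigenstate of their product; so if I can show this product equals $\prod_{j=1}^n D_j$, the claim follows immediately. The product of edge operators is $i^{2n}$ (with one sign flip from the special edge $(h_1^4, h_2^1)$) times a product of all $4n$ Majoranas in some order determined by the matching. To compare this with $\prod_{j=1}^n D_j$, which groups the Majoranas by vertex, I would reorder the Majorana factors using the anticommutation relations (\cref{fact:majorana_commute}), carefully tracking the sign incurred by the permutation. The factor $i^{2n} = (-1)^n$ from the edge operators should combine with the sign-reordering to reproduce the $(-1)^n$ appearing in $\prod_{j=1}^n D_j$, and the single flipped edge contributes one extra sign that must also be reconciled.

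The main obstacle will be carefully bookkeeping the sign from reordering $4n$ anticommuting Majorana operators from their ``matching'' order into their ``by-vertex'' order, together with the factors of $i$ and the single flipped sign on the edge $(h_1^4, h_2^1)$. I expect the cleanest route is to use the fact that the half-edge labeling in \cref{def:half_edge_labels} is induced by an Eulerian tour, which imposes rigid structure on how half-edges are paired; in particular, the tour visits each vertex exactly twice, so the four half-edges on each vertex split into two pairs that are consecutive along the tour. This Eulerian structure is precisely what makes the global sign work out, and the special treatment of the single flipped edge in \cref{def:Psi_hs} (as emphasized in the sign-convention remark) is exactly designed so that the total sign lands on $+1$ rather than $-1$. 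An alternative, possibly slicker approach avoids explicit sign-chasing entirely: since $\Psi$ is Gaussian with a known covariance matrix $\Gamma_\Psi$, I could instead invoke Wick's theorem (\cref{fact:wick}) to evaluate $\tr(\prod_{j} D_j \cdot \Psi)$ as a Pfaffian of a principal submatrix of $\Gamma_\Psi$, namely the full $4n \times 4n$ skew-adjacency matrix of the directed matching. For a perfect matching, this Pfaffian factorizes as a product over edges and should evaluate to $+1$, confirming that $\Psi$ lies in the $+1$-eigenspace; I would keep this as a backup in case the direct sign-chase proves too delicate.
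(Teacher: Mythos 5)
Your proposal is correct and follows essentially the same route as the paper: multiply all $2n$ edge operators in tour order, note that the phase factor is $i(-i)^{2n-1}=(-1)^{n-1}$, and use the fact that the Eulerian tour makes the two Majoranas from each vertex visit adjacent in the product so that they can be commuted through as pairs, reassembling $(-1)^n\prod_j c_j^1c_j^2c_j^3c_j^4=\prod_j D_j$. Your backup via Wick's theorem is also consistent with the paper, which notes $\pf(\Gamma_\Psi)=1$ as an equivalent formulation of the claim.
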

\begin{proof}
    The state $\Psi$ is stabilized by the product of all the edge operators in $\edgeops$. Let $A$ be the operator defined by multiplying all edge operators in the order that they appear in the tour $T$.  Since a $4$-regular graph on $n$ vertices must have $2n$ edges, we can factor out the phases signs to obtain an overall factor of $i(-i)^{2n-1}=(-1)^{n-1}$.  By our edge-numbering convention for all vertices $v\neq 1$, $c_v^1$ and $c_v^2$ appear only once in $A$ and are adjacent.  Hence we may commute the pair $c_v^1 c_v^2$ through any other Majorana operator in $A$. The same holds for $c_v^3$ and $c_v^4$. In particular, if the vertices are numbered $1, \ldots, n$ we may commute pairs of operators through to obtain:
    \begin{equation}
        A=(-1)^{n-1} c_1^4 c_1^1 c_1^2 c_1^3 \prod_{j=2}^n c_j^1 c_j^2 c_j^3 c_j^4=(-1)^n \prod_{j=1}^n c_j^1 c_j^2 c_j^3 c_j^4=\prod_j D_j.
    \end{equation}
    Equivalently, this also implies that $\pf(\Gamma_\Psi) = 1$ by Wick's theorem (\cref{fact:wick}).
\end{proof}

Finally, we remark that our description so far has been in terms of the multigraph $F$ and tour $T$. However because we can find these efficiently given a circle graph $G$ via \cref{prop:efficient_chord_diagram}, we can efficiently construct $\Psi$ starting from $G$ as well.

\subsection{Graph state stabilizers from cycles}\label{sec:stabilizers_from_cycles}

Now we show how to derive graph state stabilizers from $\Psi$. Cycles on the graph $F$ derived from the tour $T$ will be used to construct cycle operators.  Using the encoding of \Cref{sec:qubit_mapping} these cycle operators will correspond to stabilizers of the graph state.  To begin, we recall the notion of splitting a tour, which was defined in \cref{def:split}. Let us restate it here, adapting the notation to our labeling convention for half-edges.

\begin{definition}[Split of a tour;~fixed labeling convention]
    Let $F$ be a connected $4$-regular multigraph and $T$ be an Eulerian tour on $F$. The \emph{split} of $T$ at a vertex $v \in V(F)$ is the graph operation that replaces the transitions $h_v^1 h_v^2$ and $h_v^3 h_v^4$ in $T$ with $h_v^1 h_v^4$ and $h_v^3 h_v^2$.
\end{definition}

Recall that this operation splits the tour into two cycles, say $C_v^1$ and $C_v^2$, which preserve the orientation of $T$. The following properties of these cycles will be crucial.

\begin{lemma}\label{lem:cycle_properties}
     Let $F$ be a connected $4$-regular multigraph and $T$ be an Eulerian tour on $F$. Let $G$ be the circle graph induced by $F$ and $T$. The split of $T$ at vertex $v \in V(F)$ yields two cycles $C_v^1$ and $C_v^2$ with the following properties:
     \begin{enumerate}
        \item If $w \in V(F)$ is not adjacent to $v$ within $G$, then $w$ appears twice in $\VS(C_v^1)$ and zero times in $\VS(C_v^2)$ (or vice versa).
        \item If $w \in V(F)$ is adjacent to $v$ within $G$, or $w = v$, then it appears exactly once in each cycle.
        \item For each $w \neq v$ which appears in a cycle, the transitions $h_w^1 h_w^2$ and/or $h_w^3 h_w^4$ are traversed. Otherwise on vertex $w = v$, either the transition $h_v^1 h_v^4$ or $h_v^3 h_v^2$ are traversed.
    \end{enumerate}
\end{lemma}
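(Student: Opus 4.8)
The plan is to make the split completely explicit at the level of half-edges and then read off all three properties from the resulting arc decomposition, invoking the alternance criterion already recorded as a Fact above for the vertex counts.

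First I would fix the cyclic presentation of $T$ determined by $v$. By the labeling convention of \Cref{def:half_edge_labels}, the two passes through $v$ use the transitions $h_v^1 h_v^2$ and $h_v^3 h_v^4$, so the tour may be written cyclically as
\[
T = h_v^1\, h_v^2\, P\, h_v^3\, h_v^4\, Q,
\]
where $P$ is the arc traversed between leaving $v$ via $h_v^2$ and returning via $h_v^3$, and $Q$ is the arc traversed between leaving via $h_v^4$ and returning via $h_v^1$. Replacing these transitions by $h_v^1 h_v^4$ and $h_v^3 h_v^2$ (the split), I would trace the two resulting closed walks directly: following $h_v^1 \to h_v^4$ and then $Q$ returns to $h_v^1$, giving a cycle $C_v^1$ whose non-$v$ portion is exactly $Q$; following $h_v^3 \to h_v^2$ and then $P$ returns to $h_v^3$, giving a cycle $C_v^2$ whose non-$v$ portion is exactly $P$. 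Thus each cycle passes through $v$ exactly once, which settles the $w = v$ case of property~2, and since $C_v^1$ uses the transition $h_v^1 h_v^4$ while $C_v^2$ uses $h_v^3 h_v^2$, it also settles the $w = v$ case of property~3.

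For $w \neq v$, the crucial observation is that the split only rewires the transitions at $v$: each of the two passes through $w$, carrying its original transitions $h_w^1 h_w^2$ and $h_w^3 h_w^4$, survives intact and lands in whichever cycle ($Q \subseteq C_v^1$ or $P \subseteq C_v^2$) contains that occurrence. This immediately yields property~3 for $w \neq v$, since any cycle containing a pass through $w$ traverses the associated original transition. For the counts in properties~1 and~2, I would invoke the Fact above that $\{v,w\} \in E(G)$ iff $w$ appears exactly once in each cycle; together with the fact that $\VS(T)$ is a double-occurrence word (so $w$ is visited exactly twice in total, a count the split preserves for $w \neq v$), the adjacent case gives one appearance per cycle (property~2) and the non-adjacent case forces both appearances into the same cycle and none in the other (property~1).

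I expect no substantive obstacle: the only delicate point is the cyclic bookkeeping in the first step—correctly identifying the arcs $P, Q$ and confirming that, by the alternance definition, $v$ and $w$ alternate in $\VS(T)$ precisely when the two occurrences of $v$ separate the two occurrences of $w$ (equivalently, one of them lies in $P$ and the other in $Q$). This last equivalence is exactly what the cited Fact encodes, so should I wish to avoid the citation I could reprove the counts directly from the arc membership of $w$'s two occurrences, with no extra difficulty.
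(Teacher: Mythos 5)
Your proposal is correct and follows essentially the same route as the paper: decompose the tour cyclically at the two passes through $v$ into arcs $P$ and $Q$, observe that the split sends one arc to each cycle with the transitions $h_v^1 h_v^4$ and $h_v^3 h_v^2$ respectively, and read off the occurrence counts of $w$ from whether its two visits alternate with those of $v$. The only cosmetic difference is that the paper derives the counts directly from the alternance structure of the double-occurrence word (treating the first vertex's wrap-around as a separate case) rather than citing the background Fact, but your noted fallback argument via arc membership is exactly that derivation.
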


\begin{proof}
    Since $F$ is a $4$-regular graph and since the tour $T$ exits a vertex immediately after it enters (since $F$ has no self loops) all vertex transitions in $T$ must be the form $h_v^1 \rightarrow h_v^2$ or $h_v^3 \rightarrow h_v^4$.  We also note that since the graph is $4$-regular that the transtions $h_v^1 \rightarrow h_v^2$ and $h_v^3 \rightarrow h_v^4$ each appear once in $\HS(T)$.  By convention, for an arbitrary vertex $v$ the half-edge $h_v^3$ comes after $h_v^2$ in $\HS(T)$.  Hence taking the transition from $h_v^3$ to $h_v^2$ provides a cycle inside the tour:
\begin{equation}
\HS(T)=\ldots\,
h_{v}^{2}\tikzmark{hv2l} \,\,\,\tikzmark{hv2r}
\,\ldots\,
\tikzmark{hv3}h_{v}^{3}
\,h_{v}^{4}\,\ldots
\end{equation}
\begin{tikzpicture}[remember picture,overlay]
\draw[->,thick,bend right=30]
([yshift=2ex]pic cs:hv3) to ([yshift=2ex]pic cs:hv2l);
\draw[->,thick]
([yshift=-1.5ex,xshift=0.5ex]pic cs:hv2l)
  -- 
([yshift=-1.5ex,xshift=0.5ex]pic cs:hv2r);
\end{tikzpicture}
Similarly, for all vertices except the first $h_v^4$ will always come after $h_v^2$ and $h_v^3$ so the transition from $h_v^1$ to $h_v^4$  will always create a cycle which ``wraps around'' the tour:

\begin{equation}
\HS(T)=\tikzmark{ell1l}\ldots\tikzmark{ell1r}\;
\tikzmark{hv1l}h_{v}^{1}\tikzmark{hv1r}\;
h_{v}^{2}\;\ldots\;h_{v}^{3}\;
\tikzmark{hv4l}h_{v}^{4}\tikzmark{hv4r}\;
\tikzmark{ell2l}\ldots\tikzmark{ell2r}
\end{equation}

\begin{tikzpicture}[remember picture,overlay]
\draw[->,thick,bend left=30]
([yshift=2.5ex]pic cs:hv1r) to ([yshift=2.5ex]pic cs:hv4l);

\draw[->,thick]
([yshift=-1.5ex]pic cs:ell1l) -- ([yshift=-1.5ex]pic cs:ell1r);

\draw[->,thick]
([yshift=-1.5ex]pic cs:ell2l) -- ([yshift=-1.5ex]pic cs:ell2r);
\end{tikzpicture}
For the first vertex $h_1^4$ comes before $h_1^1$, so the cycle which takes transition $h_1^1 \rightarrow h_1^4$ does not wrap around the tour but still splits $T$ into edge disjoint cycles:
\begin{equation}
\HS(T)= h_{1}^{4} \tikzmark{m4l} \,\,\,\tikzmark{m4r}
\;\ldots\;
\tikzmark{m1}h_{1}^{1}\;h_{1}^{2}\;\ldots
\end{equation}
\begin{tikzpicture}[remember picture,overlay]
\draw[->,thick,bend right=30]
([yshift=2ex]pic cs:m1) to ([yshift=2ex]pic cs:m4l);

\draw[->,thick]
([yshift=-1.5ex]pic cs:m4l) -- ([yshift=-1.5ex]pic cs:m4r);
\end{tikzpicture}
Let us define $C_v^1$ to always be the cycle with transition $h_v^1 \rightarrow h_v^4$ and $C_v^2$ to always be the cycle with transition $h_v^3 \rightarrow h_v^2$.  Note that cyclically shifting the half-edge sequence does not change $C_v^1$ or $C_v^2$.  Now we prove the three points.  

Fix $v$.  If $w$ is not adjacent to $v$ then we may shift the half-edge sequence so that it is of the form
\begin{equation*}
    \ldots h_v^i h_v^{i+1} \ldots h_w^j h_w^{j+1} \ldots h_w^k h_w^{k+1} \ldots h_v^\ell h_v^{\ell+1} \ldots,
\end{equation*}
where either $h_v^{i+1} =h_v^2$ and $h_v^\ell = h_v^3$ or $h_v^{i+1} =h_v^4$ and $h_v^\ell = h_v^1$.  In the former case $C_v^2$ contains both $w$ transitions while in the later $C_v^1$ contains both, proving point $1$.

To prove point 2, suppose $w$ is adjacent to $v$. Then we may shift the half-edge sequence to be of the form 
\begin{equation*}
    \ldots h_v^i h_v^{i+1} \ldots h_w^j h_w^{j+1} \ldots h_v^k h_v^{k+1} \ldots h_w^\ell h_w^{\ell+1} \ldots.
\end{equation*}
Again either $h_v^{i+1} =h_v^2$ and $h_v^k = h_v^3$ in which case $C_v^2 $ contains only the transition $h_w^j \rightarrow h_w^{j+1}$ or $h_v^{i+1} =h_v^4$ and $h_v^k = h_v^1$ in which case $C_v^1$ contains only $h_w^j \rightarrow h_w^{j+1}$. In either case $C_v^1$ and $C_v^2$ contain one $w$ transition each.  Note also that if $w=v$ then $C_v^1 $ and $C_v^2$ each contain one transition corresponding to the split.  

Point $3$ follows immediately from the previous arguments.
\end{proof}

Some stabilizers of $\Psi$ can be written in terms of the cycles $\{C_v^1, C_v^2 : v \in [n]\}$. Recall that the set of edge operators $\edgeops$ are the stabilizers of $\Psi$ defined in \Cref{def:Psi_hs}.  Define the operators:
\begin{equation}\label{eq:cycle_stabilizer}
    \tilde{S}_v^a \coloneqq \prod_{(h_u^\alpha, h_w^\beta) \in C_v^a} \edgeops(h_u^\alpha, h_w^\beta),
\end{equation}
where $a = 1, 2$ and we abuse notation slightly by writing ``$(h_u^\alpha, h_w^\beta) \in C_v^a$'' to mean the directed edges (not transitions) appearing in the cycle $C_v^a$ and $\edgeops(\cdot{}, \cdot{})$ to mean the stabilizer associated with edge $(h_u^\alpha, h_w^\beta)$. Because each $\edgeops( h_u^\alpha, h_w^\beta)$ stabilizes $\Psi$, so too does any arbitrary product. However, the cycle stabilizers $\tilde{S}_v^a$ are insufficient to generate the stabilizer group of $\Psi$. This can be seen by the fact that cycles are closed under symmetric differences, while \cref{eq:matching_state_defn} contains all subsets of edges of $F$, which includes non-cycles.

We can now derive the CGS stabilizers from the matching state $\Psi$. Recall that graph state stabilizers take the form
\begin{equation}
    S_v = X_v \prod_{w \in \mathcal{N}_G(v)} Z_w.
\end{equation}
In the following theorem, we show that $\tilde{S}_v^1$ and $\tilde{S}_v^2$ are equal up to a sign within the joint eigenspaces of all $D_1, \ldots, D_n$, and furthermore they act as $\pm S_v$. In particular, our choice of convention is such that we get $S_v$ in the joint $+1$-eigenspace.

\begin{theorem}\label{thm:cycles_to_stabilizers}
    Let $F$ be a connected $4$-regular multigraph, $T$ an Eulerian tour on $F$, and $G$ the corresponding circle graph. Let $\tilde{S}_v^a$ be the cycle stabilizers as defined in \cref{eq:cycle_stabilizer}. Let $\subspace{s_1, \ldots, s_n}$ be the $2^n$-dimensional subspace wherein $D_v$ acts as $s_v\I$. Then for all $v \in [n]$, $\tilde{S}_v^1 \cong \pm \tilde{S}_v^2$ inside $\subspace{s_1, \ldots, s_n}$, and we can label the cycles such that $\tilde{S}_v^2 \cong S_v$ within the subspace $\subspace{1, \ldots, 1}$.
\end{theorem}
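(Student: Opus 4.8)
The plan is to prove the two assertions separately, since the first is essentially immediate while the second requires a careful accounting of phases. Throughout, I work inside a fixed joint eigenspace $\subspace{s_1,\ldots,s_n}$ and freely use the mapping of \cref{claim:pauli_subspace_mapping} together with the cycle structure of \cref{lem:cycle_properties}.

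For the relation $\tilde{S}_v^1 \cong \pm \tilde{S}_v^2$, the key observation is that the two cycles $C_v^1, C_v^2$ partition the edges of the tour. Because the edge operators in $\edgeops$ have pairwise disjoint support and hence commute, and because each is a Hermitian involution, both $\tilde{S}_v^1$ and $\tilde{S}_v^2$ are Hermitian involutions, and the product $\tilde{S}_v^1 \tilde{S}_v^2$ is just the product of all edge operators taken in some order. By \cref{claim:prod_of_gauges} this product equals $\prod_{j} D_j$, so right-multiplying by $(\tilde{S}_v^2)^{-1} = \tilde{S}_v^2$ gives $\tilde{S}_v^1 = \left(\prod_j D_j\right)\tilde{S}_v^2$. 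Restricting to $\subspace{s_1,\ldots,s_n}$, where $\prod_j D_j \cong \left(\prod_j s_j\right)\I$, yields $\tilde{S}_v^1 \cong \left(\prod_j s_j\right)\tilde{S}_v^2 = \pm\tilde{S}_v^2$, as claimed.

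For the identification $\tilde{S}_v^2 \cong S_v$ in $\subspace{1,\ldots,1}$, the plan is to expand $\tilde{S}_v^2$ as an ordered product of Majorana operators and regroup it by vertex. Writing each edge operator as $\pm i$ times a Majorana pair, the product along $C_v^2$ becomes $\sigma(-i)^{m}$ times the ordered product of all half-edge Majoranas traversed, where $m$ is the number of edges in $C_v^2$ and $\sigma \in \pmone$ records whether the single sign-flipped edge of \cref{def:Psi_hs} lies in $C_v^2$. By \cref{lem:cycle_properties} the half-edges at each visited vertex occur as consecutive transition pairs: the split vertex $v$ contributes $c_v^3 c_v^2$; each neighbor $w \in \mathcal{N}_G(v)$ contributes one pair $c_w^1 c_w^2$ or $c_w^3 c_w^4$; and each non-neighbor occurring in $C_v^2$ contributes both such pairs. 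After a single cyclic shift to bring the two $v$-half-edges adjacent (costing a sign $(-1)^{2m-1} = -1$), all pairs commute and can be grouped by vertex, and \cref{claim:pauli_subspace_mapping} sends $c_v^3 c_v^2 \cong -i s_v X_v$, each neighbor pair to $i s_w Z_w$, and each non-neighbor quadruple $c_w^1 c_w^2 c_w^3 c_w^4 = -D_w \cong -s_w\I$.

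The remaining work is to verify that all phases cancel. Using $m = 1 + |\mathcal{N}_G(v)| + 2N_2$, where $N_2$ counts the non-neighbors occurring in $C_v^2$, one has $(-i)^m = (-i)^{1+|\mathcal{N}_G(v)|}(-1)^{N_2}$, so the factors of $\pm i$ from the edge coefficients exactly annihilate those produced by the mapping while the two copies of $(-1)^{N_2}$ cancel, leaving $\tilde{S}_v^2 \cong \sigma\, X_v\prod_{w \in \mathcal{N}_G(v)} Z_w = \sigma S_v$. I expect the principal obstacle to be pinning down $\sigma$: one must show the flipped edge always lies in $C_v^1$ rather than $C_v^2$, forcing $\sigma = +1$. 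This is exactly where the labeling convention of \cref{def:half_edge_labels} enters, since $C_v^1$ carries the transition $h_v^1 \to h_v^4$ joining the first-visit-incoming half-edge to the second-visit-outgoing half-edge; its complementary arc is precisely the portion of the tour that wraps through the origin $h_1^4$, which therefore contains the flipped edge. Establishing this structural fact, together with the signs of the cyclic shift and of the $v$-pair, is the delicate part; the phase cancellation that follows is routine. As a consistency check, the same computation applied to $C_v^1$ (for which the $v$-pair is $c_v^1 c_v^4 \cong i s_v X_v$ and $\sigma = -1$) also yields $+S_v$, so the labeling prescribed for $C_v^2$ indeed realizes $\tilde{S}_v^2 \cong S_v$ in the all-$+1$ subspace.
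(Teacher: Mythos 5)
Your proof is correct and follows essentially the same route as the paper's: the first claim via $\tilde{S}_v^1\tilde{S}_v^2 = \prod_j D_j$ (\cref{claim:prod_of_gauges}), and the second by expanding $C_v^2$ into edge operators, regrouping the Majoranas by vertex via \cref{lem:cycle_properties}, and invoking the labeling convention to show $C_v^2$ never traverses the flipped first edge. The only difference is cosmetic bookkeeping: the paper keeps each $-i$ attached to a per-vertex pair so that every factor maps directly to $\tilde{X}_v$, $s_w Z_w$, or $D_w$ with no residual phase to chase, whereas you extract $(-i)^m$ globally and verify the cancellation by hand (which does check out).
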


\begin{proof}
Let us define $C_v^2$ as the cycle taking the transition $h_v^3 \rightarrow h_v^2$ for all $v$.  The edge operators commute and by \Cref{claim:prod_of_gauges} the product of the edge operators is equal to the product of the gauges $\prod_i D_i$.  Hence within any fixed subspace of $\prod_i D_i$ we have that $\tilde{S}_v^1 =\pm \tilde{S}_v^2$.  In the space $\subspace{1, \ldots, 1}$ we have that $\tilde{S}_v^1 = \tilde{S}_v^2$ and we can demonstrate $\tilde{S}_v^1=\tilde{S}_v^2 \cong S_v$ in a subspace by examining only $\tilde{S}_v^2$.  By our convention  $h_v^3$ will always come after $h_v^2$ in the tour and $C_v^2$ will never traverse the first edge of the tour (which has a flipped stabilizer).  If $C_v^2$ has $k$ vertex transitions in addition to the transition at $v$, then WLOG $\tilde{S}_v^2$ has the form
\begin{equation}
\begin{split}
    \tilde{S}_v^2 &=(-i c_v^2 c_{q_1}^{j_1}) (-i c_{q_1}^{j_1+1} c_{q_2}^{j_2})(-ic_{q_2}^{j_2+1} c_{q_3}^{j_3}) \cdots (-i c_{q_{k-1}}^{j_{k-1}+1} c_{q_k}^{j_k})(-ic_{q_k}^{j_k+1} c_v^3)\\
    &=-i c_v^2 \prod_{\ell=1}^k (-i c_{q_\ell}^{j_\ell} c_{q_\ell}^{j_\ell+1}) c_v^3\\
    &=(-i c_v^2 c_v^3)\prod_{\ell=1}^k (-i c_{q_\ell}^{j_\ell} c_{q_\ell}^{j_\ell+1}),
\end{split}
\end{equation}
where in the final line we used the fact that $C_v^2$ visits $v$ exactly once so the transitions $h_{q_\ell}^{j_\ell} \rightarrow h_{q_\ell+1}^{j_\ell+1} $ all occur on different vertices and hence  $c_v^3$ commutes with all $-i c_{q_\ell}^{j_\ell} c_{q_\ell}^{j_\ell+1}$.  Again by our labeling convention, $-i c_{q_\ell}^{j_\ell} c_{q_\ell}^{j_\ell+1}$ will always either be $-i c_{q_\ell}^{1} c_{q_\ell}^{2}$ or $-i c_{q_\ell}^{3} c_{q_\ell}^{4}$ (allowing for the possibility that both terms appear in the product).  

By \Cref{lem:cycle_properties}, if $w$ is adjacent to $v$ in the circle graph, then either $-i c_{w}^{1} c_{w}^{2}$ or $-i c_{w}^{3} c_{w}^{4}$ appear, but not both.  This equates to a $\tilde{Z}_w$ operator.  Also by \Cref{lem:cycle_properties}, if $w$ is not adjacent to $v$, then $C_v^2$ picks up either zero or two instances of $w$. Hence either no operators from $w$ show up (equating to $\I$) or $-i c_{w}^{1} c_{w}^{2}$ and $-i c_{w}^{3} c_{w}^{4}$ both appear (equating to $D_w$).  Note that $(-i c_v^2 c_v^3)=\tilde{X}_v$ always appears in $\tilde{S}_v^2$.  Hence, in the subspace corresponding to $D_w=+1$ for all $w$, we obtain $\tilde{S}_v^2 \cong S_v$.
\end{proof}

\subsection{Explicit representation of encoded circle graph states}\label{sec:circle_graph_encoding}

In \cref{sec:stabilizers_from_cycles} we showed that the stabilizers of the CGS are encoded into the matching state. The following theorem elucidates this encoding at the level of matching state itself, which will inform us how to use $\Psi$ to directly compute properties of $\ket{G}$.

\begin{theorem}\label{thm:embedded_graph_state}
    Let $G$ be a circle graph on $n$ vertices and $\Psi = \op{\Psi}{\Psi}$ be the matching state defined via \Cref{def:Psi_hs}.  Then,
    \begin{equation}
        \ket{\Psi} = \frac{1}{\sqrt{2^{n-1}}} \sum_{\substack{s_1, \ldots, s_n \in \{\pm 1\} \\ \prod_{j \in [n]} s_j = 1}} \ket{\psi(s_1, \ldots, s_n)},
    \end{equation}
    where each $\ket{\psi(s_1, \ldots, s_n)}$ is an eigenstate of all local gauge operators $D_j$ with eigenvalue $s_j$. In particular, we can choose $\ket{\psi(1, \ldots, 1)} \cong \ket{G}$ under the fermion-to-qubit mapping of \cref{claim:pauli_subspace_mapping}.
\end{theorem}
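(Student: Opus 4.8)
The plan is to reduce the whole statement to computing the single family of overlaps $\bra{\Psi}\Pi_{\vec s}\ket{\Psi}$, where $\Pi_{\vec s} \coloneqq \prod_{j\in[n]}\l(\frac{\I + s_j D_j}{2}\r)$ is the projector onto the joint eigenspace $\subspace{s_1,\ldots,s_n}$, and then to \emph{define} $\ket{\psi(s_1,\ldots,s_n)} \coloneqq \sqrt{2^{n-1}}\,\Pi_{\vec s}\ket{\Psi}$. Since the $D_j$ have pairwise disjoint support and even weight they mutually commute, so $\{\Pi_{\vec s}\}_{\vec s\in\pmone^n}$ is a complete family of orthogonal projectors and $\ket\Psi = \sum_{\vec s}\Pi_{\vec s}\ket\Psi$. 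Each term lies in $\subspace{s_1,\ldots,s_n}$ by construction, so it only remains to show $\lVert\Pi_{\vec s}\ket\Psi\rVert^2 = \tfrac{1}{2^{n-1}}$ when $\prod_j s_j = 1$ and $0$ otherwise; this single computation simultaneously yields the normalization, the restriction of the sum to even-parity sign patterns, and the fact that every surviving term carries the same coefficient $\tfrac{1}{\sqrt{2^{n-1}}}$.

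The core computation is to expand $\Pi_{\vec s} = \frac{1}{2^n}\sum_{S\subseteq[n]}\l(\prod_{j\in S}s_j\r) D_S$ with $D_S \coloneqq \prod_{j\in S}D_j$, so that
\[
\lVert\Pi_{\vec s}\ket\Psi\rVert^2 = \bra\Psi\Pi_{\vec s}\ket\Psi = \frac{1}{2^n}\sum_{S\subseteq[n]}\l(\prod_{j\in S}s_j\r)\bra\Psi D_S\ket\Psi.
\]
Because $\Psi$ is a stabilizer state with stabilizer group $\langle\edgeops\rangle$ and each $D_S$ is (up to phase) a Majorana monomial, $\bra\Psi D_S\ket\Psi\in\pmone$ exactly when $\pm D_S\in\langle\edgeops\rangle$ and vanishes otherwise. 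The key combinatorial claim is that $\pm D_S\in\langle\edgeops\rangle$ if and only if $S=\emptyset$ or $S=[n]$. This is where the real work lies: since the edge operators have pairwise disjoint support (they form a perfect matching of the $4n$ half-edges), every element of $\langle\edgeops\rangle$ is $\pm$ a product of edge operators over some $E'\subseteq E(F)$, with support equal to the set of half-edges incident to $E'$. But $D_S$ has support equal to all four half-edges of each vertex in $S$ and no others, which forces $E'$ to contain all four edges at every $v\in S$ while meeting no vertex of $[n]\setminus S$; an edge from $S$ to its complement would violate this, so connectivity of $F$ rules out every proper nonempty $S$. Combined with $\bra\Psi D_{[n]}\ket\Psi = +1$ (\cref{claim:prod_of_gauges}) and $\bra\Psi D_\emptyset\ket\Psi = 1$, the sum collapses to $\lVert\Pi_{\vec s}\ket\Psi\rVert^2 = \frac{1}{2^n}\l(1 + \prod_j s_j\r)$, as required.

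For the final assertion $\ket{\psi(1,\ldots,1)}\cong\ket G$, I would use the cycle stabilizers $\tilde S_v^2$ from \cref{sec:stabilizers_from_cycles}. Each $\tilde S_v^2$ is a product of edge operators, hence stabilizes $\ket\Psi$; moreover, by the form derived in the proof of \cref{thm:cycles_to_stabilizers}, it is a product of factors of type $-ic_w^1 c_w^2$ or $-ic_w^3 c_w^4$, each supported on two half-edges of a single vertex and therefore commuting with every $D_j$. Hence $\tilde S_v^2$ commutes with all $\Pi_{\vec s}$, giving $\tilde S_v^2\ket{\psi(\vec s)} = \Pi_{\vec s}\tilde S_v^2\ket\Psi = \Pi_{\vec s}\ket\Psi = \ket{\psi(\vec s)}$; so each cycle stabilizer stabilizes every component, in particular $\ket{\psi(1,\ldots,1)}$. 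By \cref{thm:cycles_to_stabilizers}, within $\subspace{1,\ldots,1}$ we have $\tilde S_v^2\cong S_v = X_v\prod_{w\in\mathcal{N}_G(v)}Z_w$. Thus, viewed as an $n$-qubit state through the mapping of \cref{claim:pauli_subspace_mapping}, $\ket{\psi(1,\ldots,1)}$ is stabilized by all $n$ independent graph-state generators $S_v$, which pins it down (up to global phase) to $\ket G$.

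The main obstacle I anticipate is the combinatorial claim that $\pm D_S\in\langle\edgeops\rangle$ forces $S\in\{\emptyset,[n]\}$: this is precisely where connectedness of $F$ is essential, and where one must reason carefully about the supports of products of edge operators (using the perfect-matching structure) rather than tracking signs. Everything else should be routine — the eigenspace decomposition is automatic from commutativity of the $D_j$, and the identification of $\ket{\psi(1,\ldots,1)}$ with $\ket G$ reduces to the already-established action of the cycle stabilizers together with the observation that they commute with the gauge operators.
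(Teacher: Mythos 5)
Your proposal is correct and follows essentially the same route as the paper: both proofs reduce the amplitudes to expectations $\ev{\prod_{j\in S}D_j}{\Psi}$, use the perfect-matching structure of the edge operators together with connectivity/loop-freeness of $F$ to show only $S=\varnothing$ and $S=[n]$ survive, and identify $\ket{\psi(1,\ldots,1)}$ with $\ket{G}$ via the cycle stabilizers of \cref{thm:cycles_to_stabilizers}. The only (cosmetic) difference is that you compute $\lVert\Pi_{\vec s}\ket{\Psi}\rVert^2$ for all $2^n$ sign patterns at once, obtaining the even-parity restriction as a byproduct, whereas the paper invokes \cref{claim:prod_of_gauges} up front and then applies only $n-1$ projectors.
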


\begin{proof}
    Fix a subspace $\subspace{s_1, \ldots s_n}$. Let $\tilde{S}_j^a$ be the cycle stabilizers of $\ket{\Psi}$ defined in \cref{eq:cycle_stabilizer}. By \cref{thm:cycles_to_stabilizers}, $\tilde{S}_j^1 \cong \pm \tilde{S}_j^2$ within the subspace, leaving us with only $n$ independent generators. Since the dimension of this subspace is $2^n$, there is exactly one stabilizer state per subspace, which we write as $\ket{\psi(s_1, \ldots, s_n)}$. Indeed, we also saw from \cref{thm:cycles_to_stabilizers} that the stabilizers $\tilde{S}_j^a$ act as $S_j$ up to signs, so $\ket{\psi(s_1, \ldots, s_n)}$ are all locally Clifford equivalent to the same CGS, which we can choose to be $\ket{\psi(1, \ldots, 1)}$.

    To see that all states have equal amplitudes in $\ket{\Psi}$ (the phases can be absorbed into the definition of $\ket{\psi(s_1, \ldots, s_n)}$), first note that because $D_1 \cdots D_n \ket{\Psi} = \ket{\Psi}$, it lies within the union of subspaces for which the product of gauges is $+1$. There are only $2^{n-1}$ such subspaces, and we can write
    \begin{equation}
        \ket{\Psi} = \sum_{\substack{s_1, \ldots, s_n \in \{\pm 1\} \\ \prod_{j \in [n]} s_j = 1}} \ip{\psi(s_1, \ldots, s_n)}{\Psi} \ket{\psi(s_1, \ldots, s_n)}.
    \end{equation}
    Because of the global parity constraint, it suffices to apply only $n-1$ of the local projectors to extract one of the subspaces:
    \begin{equation}
       \prod_{j=1}^{n-1} \l( \frac{\I + s_j D_j}{2} \r) \ket{\Psi} = \ip{\psi(s_1, \ldots, s_n)}{\Psi} \ket{\psi(s_1, \ldots, s_n)}.
    \end{equation}
    The squared overlap is therefore
    \begin{equation}
    \begin{split}
        \abs{\ip{\psi(s_1, \ldots, s_n)}{\Psi}}^2 &= \ev{\prod_{j=1}^{n-1} \l( \frac{\I + s_j D_j}{2} \r)}{\Psi}\\
        &= \frac{1}{2^{n-1}} \sum_{A \subseteq [n-1]} \ev{\prod_{j \in A} s_j D_j}{\Psi}.
    \end{split}
    \end{equation}
    Because we assume that $F$ has no self-loops, $\ev{\prod_{j \in A} s_j D_j}{\Psi} = 0$ for every $A \neq \varnothing$. To see why this holds, recall that $\tr(c_Q c_R) = 0$ if and only if $Q \neq R$. Let $Q$ be the indices from $\prod_{j \in A} D_j$, i.e., $Q = \{4j-3, 4j-2, 4j-1, 4j : j \in A\}$, and let $R$ be the set of vertices incident on any subset of edges (i.e., expand $\Psi = \sum_R \Psi_R c_R$). Since $F$ has no self-loops, every half-edge on a vertex $j$ must connect to a half-edge on a different vertex $k$. So there is no $R$ that contains all half-edges from one vertex without also containing some partial collection of half-edges from another, other than $R = \varnothing$ and $R = [4n]$. Because $A \subseteq [n-1]$, we cannot have $Q = [4n]$, so the only nonzero term in the sum is when $Q = R = \varnothing$. Thus
    \begin{equation}
        \abs{\ip{\psi(s_1, \ldots, s_n)}{\Psi}}^2 = \frac{1}{2^{n-1}},
    \end{equation}
    as claimed.
\end{proof}

\subsection{All illustrative example}

\begin{figure}[t]
  \centering
  \begin{subfigure}[b]{0.33\textwidth}
    \centering
    \includegraphics[width=\textwidth]{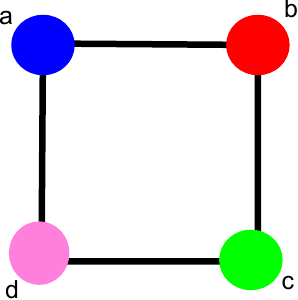}
    \caption{Graph $G$ corresponding to graph state with stabilizer group $\langle X_a Z_b Z_d, X_b Z_a Z_c, X_c Z_b Z_d, X_d Z_a Z_c \rangle$.}
    \label{fig:explicit_G}
  \end{subfigure}
  \hspace{2.5cm}
  \begin{subfigure}[b]{0.33\textwidth}
    \centering
    \includegraphics[width=\textwidth]{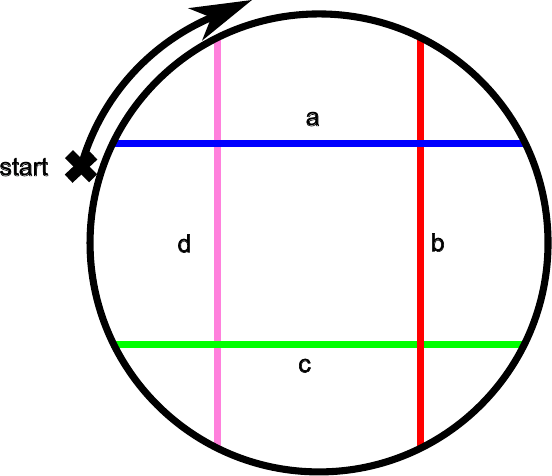}
    \caption{Chord diagram for graph $G$.  Eulerian tour $T$ corresponds to traversing the circle and ``recording'' chords which are seen: $\VS(T)=adbacbdc$.}
    \label{fig:explicit_chord_diagram}
  \end{subfigure}

  \vspace{1em}  

  \begin{subfigure}[b]{0.33\textwidth}
    \centering
    \includegraphics[width=\textwidth]{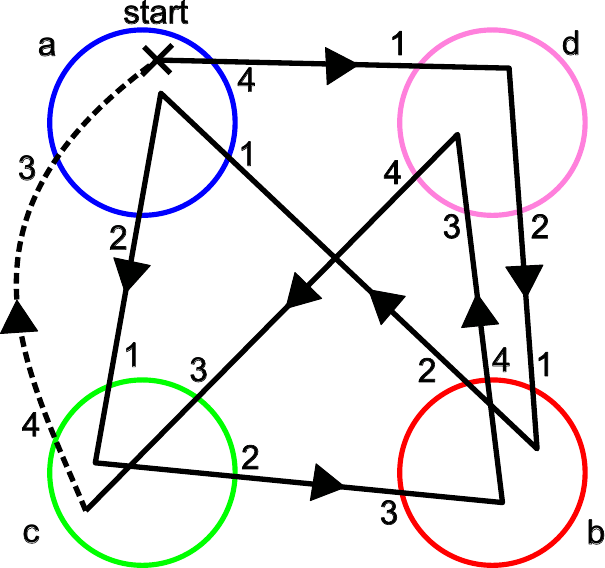}
    \caption{Eulerian tour traced through the vertices with start position matching \Cref{fig:explicit_chord_diagram}.  Half-edges are labeled according to \Cref{def:half_edge_labels}.}
    \label{fig:explicit_tour}
  \end{subfigure}
  \hspace{2.5cm}
  \begin{subfigure}[b]{0.33\textwidth}
    \centering
    \includegraphics[width=\textwidth]{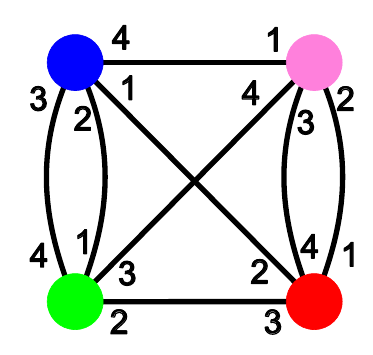}
    \caption{$4$-regular multi-graph $F$ for this example.  Half-edges are labeled according to tour $T$ from \Cref{fig:explicit_tour} and \Cref{def:half_edge_labels}.}
    \label{fig:explicit_F}
  \end{subfigure}

  \caption{Relevant graphs for construction of fermionic matching state corresponding to $C_4$.}
  \label{fig:four_subfigures}
\end{figure}

In this section we detail a small example, explicitly showing how to convert a circle graph state into a fermionic matching state.  The initial graph state here corresponds to the $4$-cycle $C_4$ (see \Cref{fig:explicit_G}).  Graph state stabilizers correspond to Pauli operators with $X$ at vertex $v$ and $Z$ on all neighbors, $\mathcal{N}_G(v)$.  Hence for this case the stabilizer group is $\langle X_a Z_b Z_d, X_b Z_a Z_c, X_c Z_b Z_d, X_d Z_a Z_c \rangle$.  This graph is a circle graph, and has a simple chord diagram given in \Cref{fig:explicit_chord_diagram}.  Vertices of the graph $G$ correspond to the labels of the chords. An Eulerian tour $T$ can be found by fixing some arbitrary starting location and traversing the circle:~$\VS(T)=adbacbdc$.  The order of the chord labels determines the vertex sequence of the tour $T$.  Note that, for instance, vertex $a$ is connected to $b$ and $d$ given the vertex sequence.  

In \Cref{fig:explicit_tour} we depict the Eulerian tour $T$ from \Cref{fig:explicit_chord_diagram} along with arrows indicating the direction of travel.  Half-edges are labeled using \Cref{def:half_edge_labels}.  The half-edge sequence of this tour is given by $\HS(T)=h_a^4 h_d^1 h_d^2 h_b^1 h_b^2 h_a^1 h_a^2 h_c^1 h_c^2 h_b^3 h_b^4 h_d^3 h_d^4 h_c^3 h_c^4 h_a^3$.  

The matching state $\Psi$ in this example has stabilizer group 
\begin{equation}
    \mathcal{S}=\langle i c_a^4 c_d^1, -ic_d^2 c_b^1, -ic_b^2 c_a^1, -ic_a^2 c_c^1, -i c_c^2 c_b^3, -i c_b^4 c_d^3, -i c_d^4 c_c^3, -i c_c^4 c_a^3 \rangle.
\end{equation}
When we multiply all the stabilizers together, we get 
\begin{align}
    \nonumber A&=(i c_a^4 c_d^1)(-ic_d^2 c_b^1)(-ic_b^2 c_a^1)(-ic_a^2 c_c^1)(-i c_c^2 c_b^3)(-i c_b^4 c_d^3)(-i c_d^4 c_c^3)(-i c_c^4 c_a^3)\\
    &=- c_a^4 (c_d^1 c_d^2) (c_b^1 c_b^2) (c_a^1 c_a^2) (c_c^1 c_c^2) (c_b^3 c_b^4) (c_d^3 c_d^4) (c_c^3 c_c^4) c_a^3\\
    \nonumber &= - c_a^4 c_a^1 c_a^2  c_a^3 (c_b^1 c_b^2 c_b^3 c_b^4) (c_c^1 c_c^2 c_c^3 c_c^4) (c_d^1 c_d^2 c_d^3 c_d^4)= D_a D_b D_c D_d.
\end{align}
We can compute the splits as 
\begin{align}
    C_a^2&=h_a^2 h_c^1 h_c^2 h_b^3 h_b^4 h_d^3 h_d^4 h_c^3 h_c^4 h_a^3,\\
    C_b^2&=h_b^2 h_a^1 h_a^2 h_c^1 h_c^2 h_b^3,\\
    C_c^2&=h_c^2 h_b^3 h_b^4 h_d^3 h_d^4 h_c^3,\\
    \text{and } C_d^2&=h_d^2 h_b^1 h_b^2 h_a^1 h_a^2 h_c^1 h_c^2 h_b^3 h_b^4 h_d^3.
\end{align}
Replacing half-edges $h$ with Majorana operators $c$, these correspond to operators 
\begin{align}
\tilde{S}_a^2&=(-i c_a^2 c_c^1)( -i c_c^2 c_b^3) (-i c_b^4 c_d^3) (-i c_d^4 c_c^3) (-i c_c^4 c_a^3), \\
\tilde{S}_b^2&=(-i c_b^2 c_a^1) (-i c_a^2 c_c^1) (-i c_c^2 c_b^3),\\
\tilde{S}_c^2&=(-i c_c^2 c_b^3) (-i c_b^4 c_d^3) (-i c_d^4 c_c^3),\\
\text{and  } \tilde{S}_d^2&=(-i c_d^2 c_b^1) (-i c_b^2 c_a^1) (-i c_a^2 c_c^1) (-i  c_c^2 c_b^3) (-i c_b^4 c_d^3).
\end{align}
These operators simplify to 
\begin{align}
    \tilde{S}_a^2&=-i c_a^2 (-i c_c^1 c_c^2) (-i c_b^3 c_b^4) (-i c_d^3 c_d^4) (-i c_c^3 c_c^4) c_a^3= X_a Z_b Z_d D_c,\\
    \tilde{S}_b^2&=-i c_b^2 (-i c_a^1 c_a^2) (-i c_c^1 c_c^2) c_b^3=X_b Z_a Z_c,\\
    \tilde{S}_c^2&=-i c_c^2 (-i c_b^3 c_b^4) (-i c_d^3 c_d^4) c_c^3=X_c Z_b Z_d,\\
    \text{and } \tilde{S}_d^2&=-i c_d^2 (-i c_b^1 c_b^2)  (-i c_a^1 c_a^2) (-i c_c^1 c_c^2) (-i c_b^3 c_b^4) c_d^3=X_d D_b Z_a Z_c.
\end{align}
Using the fact that $D_j \cong \I$ on the appropriate subspace, we see that we have recovered the desired stabilizers.

\section{Fermionic Representation of Product States}\label{sec:product_states}

In this section, we describe how product states are encoded into the fermionic space. Let $\ket{\phi}$ be a single-qubit pure state. Denote its Bloch vector coordinates as $x = \ev{X}{\phi}, y = \ev{Y}{\phi}, z = \ev{Z}{\phi}$. Taking the equivalences of \cref{eq:qubit_mapping,eq:pauli_equiv,eq:qubit_mapping_aux} with gauge value $s = 1$, we are motivated to define the fermionic two-mode density operator $\tilde{\phi}$ whose covariance matrix is
\begin{equation}\label{eq:single-qubit_covmat}
    \Gamma_{\phi} = \begin{pmatrix}
        0 & z & -y & x\\
        -z & 0 & x & y\\
        y & -x & 0 & z\\
        -x & -y & -z & 0
    \end{pmatrix}
\end{equation}
(to ease notation, we drop tildes on these covariance matrices). By construction, $\tilde{\phi}$ recovers the correct Pauli expectations under the fermion-to-qubit mapping. Furthermore the columns of $\Gamma_\phi$ are orthonormal, so by \cref{fact:cov_mat_decomp} we know it is Gaussian. Observe that $\pf(\Gamma_{\phi}) = x^2 + y^2 + z^2 = 1$, because $\ket{\phi}$ is pure. Wick's theorem then tells us that $\pf(\Gamma_{\phi}) = \ev{(-c^1 c^2 c^3 c^4)}{\tilde{\phi}}$, which is the expectation of the gauge operator $D$. Thus $\ket{\tilde{\phi}}$ is a $+1$-eigenstate of $D$, as expected by our choice of taking $s = 1$ at the beginning.

An $n$-qubit product state $\ket{\Phi} = \bigotimes_{j \in [n]} \ket{\phi_j}$ can then be mapped to the $2n$-mode Gaussian state $\ket{\tilde{\Phi}}$ with covariance matrix
\begin{equation}\label{eq:blockdiag_covmat}
    \Gamma_\Phi \coloneqq \bigoplus_{j \in [n]} \Gamma_{\phi_j}.
\end{equation}
Our construction naturally ensures that the projection of $\ket{\tilde{\Phi}}$ to $\subspace{1, \ldots, 1}$ is precisely $\ket{\Phi}$ (up to some unimportant global phase).

\subsection{Product unitaries and matchgates}\label{sec:matchgates}

Single-qubit unitaries correspond to Gaussian unitaries under the gauge $D=+1$.  First note that under $D=+1$, $X \cong -i c^1 c^4$ and $Z \cong -i c^3 c^4$ are quadratic in the Majorana operators $\{c^1, c^2, c^3, c^4\}$.  Hence unitaries of the form $e^{i \theta X}$ and $e^{i \theta Z}$ for $\theta \in \mathbb{R}$ are Gaussian unitaries~\cite{bravyi2004lagrangian}.  Since rotations in $X$ and $Z$ generate $\SU(2)$, it must be that arbitrary single-qubit unitaries can be expressed as Gaussian unitaries under the $+1$ gauge.  We note that $e^{i \theta X}$ and $e^{i\theta Z}$ commute with $D$ and hence the action of Gaussian unitaries corresponding to single-qubit Pauli rotations maintain the gauge degree of freedom.  

Although we will not need it in our proofs of MBQC simulability, it is worth describing these unitaries explicitly. Let
\begin{equation}
    U = \begin{pmatrix}
        U_{00} & U_{01}\\
        U_{10} & U_{11}
    \end{pmatrix}
\end{equation}
be a unitary such that $\ket{\phi} = U \ket{0}$. Lifting this to the fermionic space, one obtains the matchgate
\begin{equation}\label{eq:matchgate}
    M = \begin{pmatrix}
        U_{00} & 0 & 0 & U_{01}\\
        0 & U_{00} & U_{01} & 0\\
        0 & U_{10} & U_{11} & 0\\
        U_{10} & 0 & 0 & U_{11}
    \end{pmatrix}.
\end{equation}
This can be seen by the fact that in the $+1$-gauge specified by \cref{eq:single-qubit_covmat}, we have the mapping from qubit to fermion states as $\ket{0}_{\mathrm{q}} \mapsto \ket{00}_{\mathrm{f}}$, $\ket{1}_{\mathrm{q}} \mapsto \ket{11}_{\mathrm{f}}$. In turn, the matchgate $M$ possesses an $\SO(4)$-representation~\cite{knill2001fermionic,jozsa2008matchgates} such that
\begin{equation}
    \Gamma_\phi = O \begin{pmatrix}
        0 & 1 & 0 & 0\\
        -1 & 0 & 0 & 0\\
        0 & 0 & 0 & 1\\
        0 & 0 & -1 & 0
    \end{pmatrix} O^\T.
\end{equation}
The fermionic Gaussian unitary is then the Bogoliubov transformation according $O$, as given by \cref{def:FGU}.

The product unitary over $n$ qubits is then simply the tensor product of $n$ such matchgates, which is represented in the covariance formalism by the $4n \times 4n$ block-diagonal orthogonal matrix $\bigoplus_{j \in [n]} O_j$. Note that this also gives us a way to explicitly write down the fermionic operators $\tilde{\phi}_j$ using \cref{eq:gaussian_state_defn}.

We summarize the main results of this entire section in the following statement.

\begin{lemma}\label{lem:prod_state_fermions}
    Let $\ket{\Phi} = \bigotimes_{j \in [n]} \ket{\phi_j}$ be a product state on $n$ qubits. To each $\ket{\phi_j}$, we define
    \begin{equation}
        \tilde{\phi}_j \coloneqq M_j \l( \frac{\I - i c_j^1 c_j^2}{2} \r) \l( \frac{\I - i c_j^3 c_j^4}{2} \r) M_j^\dagger,
    \end{equation}
    where $M_j$ is the matchgate from $\cref{eq:matchgate}$ acting on Majorana modes $4j - 3, \ldots, 4j$. Then
    \begin{equation}\label{eq:Phitilde_operator}
        \tilde{\Phi} \coloneqq \prod_{j \in [n]} \tilde{\phi}_j
    \end{equation}
    is a pure FGS with covariance matrix given by \cref{eq:blockdiag_covmat}. Furthermore, $\ket{\tilde{\Phi}}$ is a $+1$-eigenstate of all $D_1, \ldots, D_n$ (where $\ket{\tilde{\Phi}}$ is such that $\tilde{\Phi} = \op{\tilde{\Phi}}{\tilde{\Phi}}$).
\end{lemma}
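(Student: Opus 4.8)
The plan is to verify three claims in sequence: that $\tilde{\Phi}$ is a pure Gaussian state, that its covariance matrix is the block-diagonal matrix of \cref{eq:blockdiag_covmat}, and that $\ket{\tilde{\Phi}}$ is a simultaneous $+1$-eigenstate of all $D_j$. The key structural observation that makes all three tractable is that the construction is \emph{local}: each factor $\tilde{\phi}_j$ acts only on the four Majorana modes $4j-3,\ldots,4j$ associated with vertex $j$, so operators on distinct vertices commute and the global object factorizes as an honest tensor product across the $n$ blocks. This reduces everything to the single-block analysis already carried out in the preceding discussion around \cref{eq:single-qubit_covmat}.

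First I would establish the single-block case. The reference state $\bigl( \frac{\I - i c_j^1 c_j^2}{2} \bigr)\bigl( \frac{\I - i c_j^3 c_j^4}{2} \bigr)$ is manifestly Gaussian in the canonical form of \cref{eq:gaussian_state_defn} with $\lambda = 1$ (hence pure), and by the matchgate's $\SO(4)$ representation recalled above, conjugation by $M_j$ implements the Bogoliubov transformation $O_j$ of \cref{def:FGU}. By the covariance transformation rule $\Gamma_{U_O \rho U_O^\dagger} = O \Gamma_\rho O^\T$ together with the explicit $O_j$ chosen precisely so that $\Gamma_{\phi_j} = O_j \,(\cdots)\, O_j^\T$ equals \cref{eq:single-qubit_covmat}, each $\tilde{\phi}_j$ is a pure two-mode Gaussian state with the stated local covariance matrix. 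Purity also follows from $\pf(\Gamma_{\phi_j}) = x_j^2 + y_j^2 + z_j^2 = 1$ and \cref{fact:cov_mat_decomp}.

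Next I would lift to the full tensor product. Because the supports are disjoint, $\tilde{\Phi} = \prod_j \tilde{\phi}_j$ is a product of commuting pure Gaussian states, so it is itself pure; its covariance matrix is the direct sum $\bigoplus_j \Gamma_{\phi_j}$, which is exactly \cref{eq:blockdiag_covmat}. I would confirm Gaussianity either by noting that a tensor product of Gaussian states is Gaussian, or more directly by checking Wick's theorem (\cref{fact:wick}) block by block: since $\Gamma_\Phi$ is block-diagonal, any Pfaffian $\pf(\Gamma_\Phi[Q])$ vanishes unless $Q$ restricts to a union of complete pairs within blocks, matching the factorized expectations $\tr(c_Q \tilde{\Phi})$.

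Finally, for the eigenvalue claim, I would invoke the argument already given for a single block: $\pf(\Gamma_{\phi_j}) = 1$ together with Wick's theorem gives $\ev{(-c_j^1 c_j^2 c_j^3 c_j^4)}{\tilde{\phi}_j} = \ev{D_j}{\tilde{\phi}_j} = 1$, and since $D_j$ has eigenvalues $\pm 1$, an expectation of $+1$ forces $\ket{\tilde{\phi}_j}$ into the $+1$-eigenspace of $D_j$. Because $D_j$ acts only on block $j$ and $\ket{\tilde{\Phi}}$ is the product of the $\ket{\tilde{\phi}_j}$, it is a simultaneous $+1$-eigenstate of all $D_1,\ldots,D_n$. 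The main obstacle, such as it is, lies in verifying that the specific matchgate $M_j$ of \cref{eq:matchgate} induces precisely the orthogonal $O_j$ conjugating the canonical covariance matrix to \cref{eq:single-qubit_covmat}; this is the one place where the explicit $\SO(4)$ correspondence between $\SU(2)$ and two-mode matchgates must be pinned down, and I would either cite \cite{knill2001fermionic,jozsa2008matchgates} or check it on the generators $e^{i\theta X}, e^{i\theta Z}$ and extend by the group homomorphism property.
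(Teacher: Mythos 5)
Your proposal is correct and follows essentially the same route as the paper, which presents this lemma as a summary of the Section 5 discussion: block-local analysis (canonical Gaussian form conjugated by the matchgate's $\SO(4)$ representation to obtain \cref{eq:single-qubit_covmat}, hence purity and Gaussianity via \cref{fact:cov_mat_decomp}), direct-sum assembly of the covariance matrix, and the observation that $\pf(\Gamma_{\phi_j})=x_j^2+y_j^2+z_j^2=1$ combined with Wick's theorem forces $\ev{D_j}{\tilde{\phi}_j}=1$ and therefore membership in the joint $+1$-eigenspace.
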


\section{Efficient Classical Simulation of MBQC under Circle Graph States}\label{sec:classical_sim}

In this section, we accumulate the technical results developed above to demonstrate our main result:~how to simulate an MBQC protocol using circle graph resource states in polynomial time.

\subsection{Product-state overlaps}\label{sec:product_overlap}

First we show how to compute the overlap between a CGS and an arbitrary product state on all $n$ qubits. This is the necessary ingredient for executing the gate-by-gate algorithm of \cite{bravyi2022simulate}. Because $\Psi$ and $\tilde{\Phi}$ are both Gaussian, their overlap is efficiently computed via a Pfaffian of their covariance matrices. The formula is given by the following well-known result.

\begin{proposition}[{\cite[Equation 22]{bravyi2017complexity}}]\label{prop:pure_gaussian_overlap}
    Let $\ket{\varphi_1}, \ket{\varphi_2}$ be two $N$-mode FGSs of the same parity $\sigma \in \pmone$. If $\Gamma_{\varphi_1}, \Gamma_{\varphi_2}$ are their covariance matrices, then
    \begin{equation}\label{eq:gaussian_overlap}
        \abs{\ip{\varphi_1}{\varphi_2}}^2 = \frac{\sigma}{2^N} \pf(\Gamma_{\varphi_1} + \Gamma_{\varphi_2}).
    \end{equation}
\end{proposition}

Recall that the Pfaffian can be computed in $\O(N^3)$ time. Equivalently, one can also use the identity $\pf(X)^2 = \det(X)$ for skew-symmetric $X$ and the fact that \cref{eq:gaussian_overlap} is non-negative. Clearly if the states have different parities, then they are automatically orthogonal (one can also verify this by checking that the Pfaffian above vanishes in that case). Note that \cite{bravyi2017complexity} also shows that the phase of the overlap, with respect to some fixed reference state, can be determined in cubic time as well. For our purposes, the magnitude is sufficient.

The overlap that we want is given by the following formula.

\begin{theorem}\label{thm:prod_state_overlap}
    Let $\ket{G}$ be an $n$-qubit CGS and $\ket{\Phi}$ a product state. Then
    \begin{equation}
        \abs{\ip{\Phi}{G}}^2 = \frac{1}{2^{n+1}} \pf(\Gamma_\Phi + \Gamma_\Psi),
    \end{equation}
    where $\Psi$ is the matching state associated with $G$.
\end{theorem}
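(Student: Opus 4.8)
The plan is to transfer the qubit-space quantity $\abs{\ip{\Phi}{G}}^2$ into the fermionic space, where both relevant states are Gaussian and \cref{prop:pure_gaussian_overlap} applies. The bridge is \cref{thm:embedded_graph_state}, which writes $\ket{\Psi}$ as an equal-amplitude superposition of stabilizer states $\ket{\psi(s_1,\ldots,s_n)}$, one per joint gauge eigenspace $\subspace{s_1,\ldots,s_n}$ with $\prod_j s_j = 1$ and with $\ket{\psi(1,\ldots,1)} \cong \ket{G}$, together with \cref{lem:prod_state_fermions}, which guarantees that $\ket{\tilde{\Phi}}$ is a pure Gaussian state lying entirely in the $+1$-eigenspace of every $D_j$, that is, in $\subspace{1,\ldots,1}$.

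First I would compute the fermionic overlap $\ip{\tilde{\Phi}}{\Psi}$. Expanding $\ket{\Psi}$ via \cref{thm:embedded_graph_state} and using that each $D_j$ is Hermitian with $D_j \ket{\tilde{\Phi}} = \ket{\tilde{\Phi}}$, one finds $\ip{\tilde{\Phi}}{\psi(s_1,\ldots,s_n)} = s_j \ip{\tilde{\Phi}}{\psi(s_1,\ldots,s_n)}$ for every $j$, so the term vanishes unless all $s_j = 1$. Hence
\begin{equation}
    \ip{\tilde{\Phi}}{\Psi} = \frac{1}{\sqrt{2^{n-1}}}\, \ip{\tilde{\Phi}}{\psi(1,\ldots,1)}.
\end{equation}
Since $\ket{\tilde{\Phi}} \cong \ket{\Phi}$ and $\ket{\psi(1,\ldots,1)} \cong \ket{G}$ both live in $\subspace{1,\ldots,1}$, and the fermion-to-qubit mapping of \cref{claim:pauli_subspace_mapping} identifies this $2^n$-dimensional subspace isometrically with the $n$-qubit space, the fermionic inner product equals the qubit inner product up to a global phase, so $\abs{\ip{\tilde{\Phi}}{\psi(1,\ldots,1)}} = \abs{\ip{\Phi}{G}}$. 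Squaring then gives $\abs{\ip{\Phi}{G}}^2 = 2^{n-1}\,\abs{\ip{\tilde{\Phi}}{\Psi}}^2$.

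Next I would apply \cref{prop:pure_gaussian_overlap} on $N = 2n$ modes. Both states are pure Gaussian, so by \cref{fact:cov_mat_decomp} their covariance matrices are skew-symmetric and orthogonal, giving $\pf(\Gamma) = \pm 1$ equal to the parity $\sigma$. For $\Psi$ we have $\pf(\Gamma_\Psi) = 1$ by \cref{claim:prod_of_gauges}, and for $\tilde{\Phi}$ the block-diagonal form $\Gamma_\Phi = \bigoplus_j \Gamma_{\phi_j}$ with each $\pf(\Gamma_{\phi_j}) = x_j^2 + y_j^2 + z_j^2 = 1$ yields $\pf(\Gamma_\Phi) = 1$; thus both states share parity $\sigma = +1$ and the formula gives
\begin{equation}
    \abs{\ip{\tilde{\Phi}}{\Psi}}^2 = \frac{1}{2^{2n}}\, \pf(\Gamma_\Phi + \Gamma_\Psi).
\end{equation}
Combining with the previous step, $\abs{\ip{\Phi}{G}}^2 = 2^{n-1}\cdot 2^{-2n}\,\pf(\Gamma_\Phi + \Gamma_\Psi) = \frac{1}{2^{n+1}}\,\pf(\Gamma_\Phi + \Gamma_\Psi)$, as claimed.

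The step I expect to be most delicate is the isometric identification in the second paragraph: I must argue carefully that restricting to the common $+1$-gauge subspace carries the two fermionic states to $\ket{\Phi}$ and $\ket{G}$ with inner products preserved, so that the $\cong$ equivalences (defined only up to global phase and modulo the gauge projector) can legitimately be promoted to an equality of overlap magnitudes. The parity bookkeeping needed to fix $\sigma = +1$ is routine but must be verified, since an incorrect sign would flip the Pfaffian and spoil the constant $\tfrac{1}{2^{n+1}}$.
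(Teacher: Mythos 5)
Your proposal is correct and follows essentially the same route as the paper: embed both states in the fermionic space, use \cref{thm:embedded_graph_state} to reduce $\ip{\tilde{\Phi}}{\Psi}$ to $\frac{1}{\sqrt{2^{n-1}}}\ip{\tilde{\Phi}}{\psi(1,\ldots,1)}$, identify that overlap with $\ip{\Phi}{G}$ via the gauge subspace, and apply \cref{prop:pure_gaussian_overlap} with $N=2n$ modes. Your explicit verification of the parity $\sigma=+1$ (via $\pf(\Gamma_\Psi)=1$ from \cref{claim:prod_of_gauges} and $\pf(\Gamma_{\phi_j})=1$ for each block) is slightly more detailed than the paper, which simply asserts it by convention.
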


\begin{proof}
    Let $\tilde{\Phi}$ be the fermionic representation of the product state $\ket{\Phi}$ (\cref{lem:prod_state_fermions}). By our convention, both $\tilde{\Phi}$ and $\Psi$ have parity $\sigma = 1$. Thus by \cref{prop:pure_gaussian_overlap} we have
    \begin{equation}\label{eq:fermion_product_overlap}
        \abs*{\ip{\tilde{\Phi}}{\Psi}}^2 = \frac{1}{2^{2n}} \pf(\Gamma_\Phi + \Gamma_\Psi).
    \end{equation}
    On the other hand, by \cref{thm:embedded_graph_state} we get
    \begin{equation}\label{eq:gauge_overlap}
        \ip{\tilde{\Phi}}{\Psi} = \frac{1}{\sqrt{2^{n-1}}} \ip{\tilde{\Phi}}{\psi(1, \ldots, 1)},
    \end{equation}
    since $\ket{\tilde{\Phi}}$ lies in the joint $+1$-eigenspace of all $D_1, \ldots, D_n$. But since $\ket{\psi(1, \ldots, 1)} \cong \ket{G}$ and $\ket{\tilde{\Phi}} \cong \ket{\Phi}$, we have $\abs*{\ip{\tilde{\Phi}}{\psi(1, \ldots, 1)}} = \abs{\ip{\Phi}{G}}$. Combining \cref{eq:fermion_product_overlap,eq:gauge_overlap} yields the claim.
\end{proof}

\subsection{Weak simulation from strong simulation}

The above result shows that the overlap between CGSs and arbitrary product states can be computed in polynomial time. However, in MBQC the unitary applied to a given qubit may depend on the outcomes of earlier measurements, so this overlap computation alone is not generally sufficient to sample from the output distribution of an MBQC protocol. A natural approach is to use the probability chain rule and sample the measurement outcomes one qubit at a time from the corresponding conditional marginals. Unfortunately, evaluating such marginals can be $\sharpP$-hard \cite{bravyi2022simulate}, and this hardness is known to hold even for circle graphs \cite{hahn2026structure}. Fortunately, the gate-by-gate algorithm of \cite{bravyi2022simulate} allows one to bypass explicit marginal computations via a clever sampling algorithm that requires only global product-state overlap evaluations.

We consider an MBQC protocol in which each qubit is first acted on by a unitary and then measured in the computational basis. Without loss of generality, the qubits are measured sequentially from $1$ to $n$ (otherwise relabel the qubits). The unitary applied to qubit $i$ may depend on previous outcomes, so we write it as $U_i(x_1,\ldots,x_{i-1})$.  For an arbitrary resource state $\ket{\psi}$, define 
\begin{equation}\label{eq:bravyi_partial}
    P_t(x) \coloneqq \l|\langle x|U_1 \otimes U_2(x_2)\otimes ...\otimes U_t(x_1, x_2, ..., x_{t-1}) \otimes \mathbb{I}_{2^{n-t}} |\psi \rangle \r{|^2}.
\end{equation}
The goal is to construct a polynomial-time randomized algorithm which samples a bit-string $x$ with probability to $P_n(x)$.

\begin{proposition}[{\cite[Algorithm 3]{bravyi2022simulate}}]\label{thm:no_marg}
    Given a randomized algorithm that samples $x$ according to $P_0(x)=|\langle x | \psi\rangle|^2$, together with a polynomial-time procedure for computing $P_t(x)$ for all $x\in\{0,1\}^n$ and $t\in[n]$, one can construct a polynomial-time randomized algorithm that samples $x$ according to $P_n(x)$.
\end{proposition}

As a consequence, we can efficiently simulate MBQC with circle graph resource states.

\begin{corollary}
    Let $\ket{G}$ be an $n$-qubit circle graph state. There exists a randomized polynomial-time algorithm that samples $x\in\{0,1\}^n$ according to $P_n(x)$, where $P_n(x)$ denotes the probability of obtaining outcome string $x$ at the end of an MBQC protocol that uses $\ket{G}$ as its resource state.   
\end{corollary}
\begin{proof}
    It is clear that $P_t(x)$ may be computed for all $x$ and $t$ in $\O(n^3)$ time using \Cref{thm:prod_state_overlap}.  In order to use \Cref{thm:no_marg} we must also demonstrate a randomized algorithm for sampling $x$ with probability $|\langle x|G\rangle|^2$. But this is easily accomplished in $\O(n^3)$ time by the Gottesman--Knill theorem~\cite{AaronsonGottesman2004}, since $\ket{G}$ is a stabilizer state.
\end{proof}

\section{Acknowledgments}

We thank Adrian Chapman, Nathan Claudet, David Gosset, and Robert Raussendorf for insightful discussions. We also thank an anonymous reviewer for bringing to our attention the algorithm of \cite{bravyi2022simulate}. This material is based upon work supported by the U.S.\ Department of Energy, Office of Science, Accelerated Research in Quantum Computing, Fundamental Algorithmic Research toward Quantum Utility (FAR-Qu). BH is supported by the US NSF grants PHYS-1820747 and NSF (EPSCoR-1921199) as well as the Office of Science, Office of Advanced Scientific Computing Research under program Fundamental Algorithmic Research for Quantum Computing. BH is supported by the ``Quantum Chemistry for Quantum Computers'' project sponsored by the DOE, Award DE-SC0019374. This work was completed in part while BH was a visiting scholar at the Simons Institute for the Theory of Computing. VI is supported by an NSF Graduate Research Fellowship. AZ is supported by the Laboratory Directed Research and Development program at Sandia National Laboratories, under the Gil Herrera Fellowship in Quantum Information Science.

This article has been authored by an employee of National Technology \& Engineering Solutions of Sandia, LLC under Contract No.\ DE-NA0003525 with the U.S. Department of Energy (DOE). The employee owns all right, title and interest in and to the article and is solely responsible for its contents. The United States Government retains and the publisher, by accepting the article for publication, acknowledges that the United States Government retains a non-exclusive, paid-up, irrevocable, world-wide license to publish or reproduce the published form of this article or allow others to do so, for United States Government purposes. The DOE will provide public access to these results of federally sponsored research in accordance with the DOE Public Access Plan \url{https://www.energy.gov/downloads/doe-public-access-plan}.

\printbibliography
\addcontentsline{toc}{section}{References}

\appendix

\section{Definition and Properties of Entanglement Width}\label{sec:EntanglementWidth}

A necessary condition for a family of states to be universal for MBQC is that it exhibit unbounded entanglement width \cite{van_den_nest_universal_2006}. In this appendix we will define this concept, starting with some prerequisites.

\subsection{Schmidt rank}

Let $\mathcal{H}_A$ and $\mathcal{H}_B$ be Hilbert spaces of dimensions $n$ and $m$ respectively. WLOG, we may take $n \geq m$. Then for any state $\ket{\psi} \in \mathcal{H}_A \otimes \mathcal{H}_B$, there exist orthonormal sets $\{\ket{u_1}, \ket{u_2}, \ldots, \ket{u_m}\} \in \mathcal{H}_A$, $\{\ket{v_1}, \ket{v_2}, \ldots, \ket{v_m}\} \in \mathcal{H}_B$ such that
\begin{equation}
\ket{\psi} = \sum_{i=1}^{m}\alpha_i\ket{u_i}\ket{v_i},
\end{equation}
where the $\alpha_i$ are real, non-negative and unique up to reordering. This is the well-known Schmidt decomposition.

The \emph{Schmidt rank} $\chi_{AB}$ of a state with respect to a bipartition of its Hilbert space is the smallest possible number of non-zero coefficients in the associated Schmidt decomposition. The state is entangled with respect to this bipartition if its Schmidt rank is greater than 1; otherwise it is a separable state.

\subsection{Cut rank}

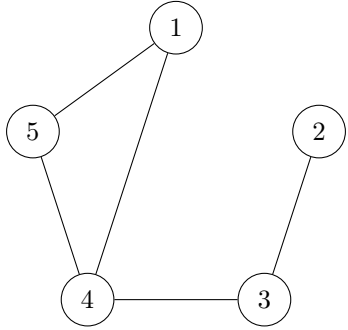
\begin{figure}[H]

\centering

\begin{tikzpicture}[scale=2, every node/.style={circle, draw, minimum size=7mm}]
  \node (1) at (90:1) {1};
  \node (2) at (18:1) {2};
  \node (3) at (-54:1) {3};
  \node (4) at (-126:1) {4};
  \node (5) at (162:1) {5};

  \draw (1) -- (4);
  \draw (2) -- (3);
  \draw (3) -- (4);
  \draw (4) -- (5);
  \draw (5) -- (1);
\end{tikzpicture}

\caption{An example 5-node graph}
\label{fig:5_node_graph}

\end{figure}

Consider a graph $G = (V, E)$ with adjacency matrix $\Gamma$. Then for vertex sets $A, B \subseteq V$, we will define $\Gamma[A,B]$ to be the $\abs{A} \times \abs{B}$ matrix that describes the connectivity between sets $A$ and $B$. We associate each row of $\Gamma[A,B]$ with an element $a \in A$ and each column with an element $b \in B$. Then

\begin{equation}
\Gamma[A,B]_{ab} = \begin{cases}
1, &(a,b)\in E, \\
0, &\text{otherwise}.
\end{cases}
\end{equation}
As an example, consider the graph in \cref{fig:5_node_graph}. Let $A = \{1, 2\}$ and $B = \{3,4,5\}$. Then 

\begin{equation}\label{eq:GammaAB}
\Gamma[A,B] = \begin{pmatrix}
0 & 1 & 1 \\
1 & 0 & 0
\end{pmatrix}.
\end{equation}

We can now define the cut-rank function with respect to a bipartition of V into sets $A$, $V\setminus A$.
\begin{equation}
\text{cutrk}_A(G) \equiv \text{rank}_{\mathbb{F}_2}\Gamma[A, V \setminus A].
\end{equation}
For the example in \cref{eq:GammaAB}, the cut rank would be 2.

In Proposition 10 of \cite{hein2006entanglement} it is shown that the Schmidt rank of a graph state bipartition is equivalent to the cut rank of the corresponding bipartition of its associated graph. 

\subsection{Rank width}\label{sec:rank_width}

We will now discuss the concept of rank width.

A \emph{subcubic tree} is a connected acyclic graph whose vertices have maximum degree 3. See \cref{fig:subcubic} for an example. Notice that deleting any edge from such a tree disconnects it. In particular, it induces a bipartition of the leaves of the tree.

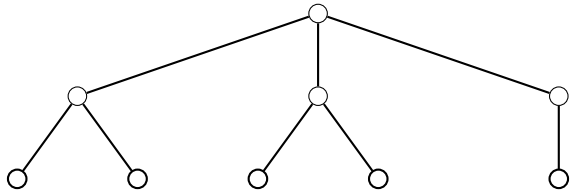
\begin{figure}[H]
\centering    
\begin{tikzpicture}[
  every node/.style={circle, draw, inner sep=1.2pt, minimum size=7pt},
  edge from parent/.style={draw, thick},
  level distance=11mm,
  level 1/.style={sibling distance=32mm},
  level 2/.style={sibling distance=16mm},
  level 3/.style={sibling distance=10mm}
]

\node (r) {} 
  child { node {}
    child { node {} }
    child { node {} }
  }
  child { node {}
    child { node {} }
    child { node {} }
  }
  child { node {}
    child { node {} }
  };

\end{tikzpicture}
\caption{An example of a subcubic tree graph.}
\label{fig:subcubic}
\end{figure}

Let us now consider bipartitions of the vertices of a graph $G = (V,E)$. Specifically, we will associate these vertices one-to-one with the leaves of some subcubic tree and consider the bipartitions induced by removing individual edges of the tree. Note that any bipartition of the vertices can be represented by an edge of \emph{some} subcubic tree, but the converse does not hold. In other words, a given subcubic tree does not in general have an edge corresponding to every bipartition of the associated vertices.

Formally, a \emph{rank decomposition} of a graph G is a pair $R = (T, \mu)$, where $T$ is a subcubic tree and $\mu$ is a bijection $\mu: V(G) \rightarrow \{l : l \text{ is a leaf of } T \}$. Removing any edge $e$ of $T$ induces a bipartition of its leaves and hence of $V(G)$. We say that the \emph{width} of the edge $e$ is equal to the cut rank of this bipartition, which we denote $\mathrm{cutrk}_e(G)$.

In \cref{fig:subcubic2}, we give an example of a rank decomposition of a 5-node graph. Notice that deleting the dashed edge induces a bipartition of the vertices of the graph into sets $\{1,2\}$ and $\{3,4,5\}$. Suppose that the 5-node graph associated with this rank decomposition is the graph in \cref{fig:5_node_graph}. Then the width of the dashed edge would be 2.

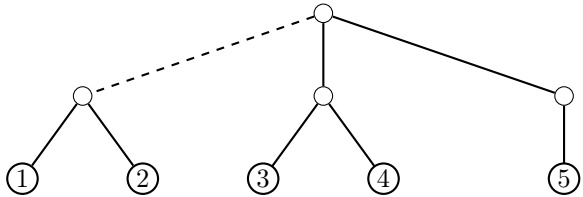
\begin{figure}[H]
\centering    
\begin{tikzpicture}[
  every node/.style={circle, draw, inner sep=1.2pt, minimum size=7pt},
  edge from parent/.style={draw, thick},
  level distance=11mm,
  level 1/.style={sibling distance=32mm},
  level 2/.style={sibling distance=16mm},
  level 3/.style={sibling distance=10mm}
]

\node (r) {} 
  child { node {}
    child { node[thick, solid] {1} [thick, solid]}
    child { node [thick, solid] {2} [thick, solid]}
    edge from parent[draw, thick, dashed]
  }
  child { node {}
    child { node {3} }
    child { node {4} }
  }
  child { node {}
    child { node {5} }
  };

\end{tikzpicture}
\caption{An example of a rank-decomposition of a 5-node graph.}
\label{fig:subcubic2}
\end{figure}

We can now define rank width as the maximum width of the rank decomposition of a graph, minimized with respect to all possible rank decompositions:
\begin{equation}
\mathrm{rwd}(G) \equiv \min\limits_{T} \max_{e\in E(T)}\mathrm{cutrk}_e(G),
\end{equation}

\subsection{Entanglement width}

We can now define the entanglement width of a quantum state $\ket{\psi}$ by a similar procedure. Again consider a subcubic tree $T$, and associate each of its leaves with one of the qubits of state $\ket{\psi}$. Any edge $e$ of $T$ induces a bipartition of these qubits, with an associated Schmidt rank $\chi_e(\ket{\psi})$. The entanglement width of a state $\ket{\psi}$ is then given by 

\begin{equation}
\mathrm{ewd}(\ket{\psi}) \equiv \min\limits_T \max_{e\in E(T)} \chi_e(\ket{\psi}).
\end{equation}
As claimed, the entanglement width of the graph state $\ket{G}$ is equivalent to the rank width of the graph $G$.

\section{Vertex Minors of Graphs}

\subsection{Local complementation}

Local complementation 
on the vertex $v$ of a graph $G$ is an operation that replaces the subgraph induced by the neighborhood of $v$ by its complement. We denote the resultant graph $G * v$. See Fig.~\ref{fig:localComplementation} for an example.

In other words, two vertices $x$ and $y$ are adjacent in $G*v$ if and only if \emph{exactly} one of the following holds~\cite{hein2006entanglement}:

\begin{itemize}
\item $x$ and $y$ are adjacent in $G$, or
\item Both $x$ and $y$ are neighbors of $v$ in $G$.
\end{itemize}

Remarkably, two graph states $\ket{G_1}, \ket{G_2}$ are Local Clifford equivalent if and only if the graphs $G_1$ and $G_2$ are equivalent up to some sequence of local complementations~\cite{hein2006entanglement}.

\begin{definition}
A graph $H$ is said to be a \emph{vertex minor} of a graph $G$ if it can be obtained from $G$ by some sequence of local complementation and vertex deletion operations.
    
\end{definition}

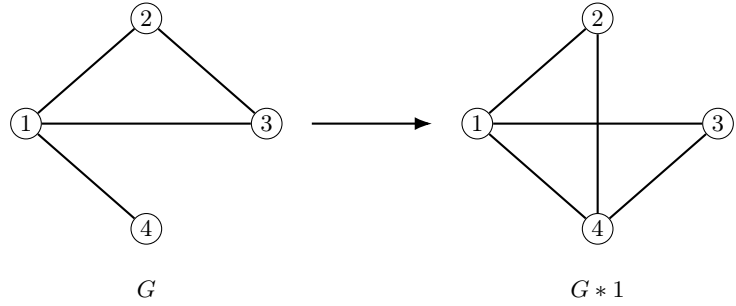
\begin{figure}
    \centering
    \begin{tikzpicture}[every node/.style={circle,draw,inner sep=1.5pt,font=\small}, >=Latex]

\node (A1) at (-1.6,0) {1};
\node (A2) at (0,1.4) {2};
\node (A3) at (1.6,0) {3};
\node (A4) at (0,-1.4) {4};

\draw[thick] (A1)--(A2);
\draw[thick] (A1)--(A3);
\draw[thick] (A1)--(A4);
\draw[thick] (A2)--(A3);

\node[draw=none] at (0,-2.2) {$G$};

\draw[->,thick] (2.2,0) -- (3.8,0);

\node (B1) at (4.4,0) {1};
\node (B2) at (6,1.4) {2};
\node (B3) at (7.6,0) {3};
\node (B4) at (6,-1.4) {4};

\draw[thick] (B1)--(B2);
\draw[thick] (B1)--(B3);
\draw[thick] (B1)--(B4);
\draw[thick] (B2)--(B4);
\draw[thick] (B3)--(B4);

\node[draw=none] at (6,-2.2) {$G * 1$};

\end{tikzpicture}
    \vspace{-10mm}
    \caption{An example of local complementation. On the left, we have a graph $G$, and on the right the graph $G * 1$, which is the result of local complementation with respect to vertex 1 of graph $G$.}
    \label{fig:localComplementation}
\end{figure}

\section{Comparability Grids}\label{sec:comp_grid}

The $n \times n$ \emph{comparability grid} is a graph with vertices labeled by ordered pairs $V = \{(i,j)\mid i,j \in \{1,2,\ldots,n\}\}$. It has an edge between vertices $(i,j)$ and $(i',j')$ if one of the following holds:
\begin{itemize}
\item $i \leq i'$ and $j \leq j'$, or
\item $i \geq i'$ and $j \geq j'$. 
\end{itemize}
See Fig.~\ref{fig:comparability_grid} for an example.
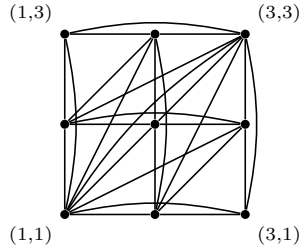
\begin{figure}[H]
    \centering
    \begin{tikzpicture}[x=1.2cm,y=1.2cm,
  vertex/.style={circle,fill,inner sep=1.2pt},
  edge/.style={line width=0.6pt}
]

\foreach \i in {1,2,3}{
  \foreach \j in {1,2,3}{
    \node[vertex] (v-\i-\j) at (\i,\j) {};
  }
}

\node[below left=1pt]  at (v-1-1) {$\scriptstyle(1,1)$};
\node[below right=1pt] at (v-3-1) {$\scriptstyle(3,1)$};
\node[above left=1pt]  at (v-1-3) {$\scriptstyle(1,3)$};
\node[above right=1pt] at (v-3-3) {$\scriptstyle(3,3)$};

\foreach \i in {1,2,3}{
  \foreach \j in {1,2,3}{
    \foreach \ip in {\i,...,3}{
      \foreach \jp in {\j,...,3}{
        \ifnum\i=\ip\relax
          \ifnum\j=\jp\relax
          \else
            \pgfmathtruncatemacro{\dj}{\jp-\j}
            \ifnum\dj=2\relax
              \draw[edge] (v-\i-\j) to[bend right=12] (v-\ip-\jp);
            \else
              \draw[edge] (v-\i-\j)--(v-\ip-\jp);
            \fi
          \fi
        \else
          \pgfmathtruncatemacro{\di}{\ip-\i}
          \pgfmathtruncatemacro{\dj}{\jp-\j}
          \ifnum\di=2\relax
            \ifnum\dj=0\relax
              \draw[edge] (v-\i-\j) to[bend left=12] (v-\ip-\jp);
            \else
              \ifnum\dj=2\relax
                \draw[edge] (v-\i-\j) to[bend left=12] (v-\ip-\jp);
              \else
                \draw[edge] (v-\i-\j)--(v-\ip-\jp);
              \fi
            \fi
          \else
            \draw[edge] (v-\i-\j)--(v-\ip-\jp);
          \fi
        \fi
}}}}

\end{tikzpicture}
    \vspace{-10mm}
    \caption{The $3 \times 3$ comparability grid}
    \label{fig:comparability_grid}
\end{figure}
The comparability grid is itself a circle graph, with the useful property of having all (sufficiently smaller) circle graphs as vertex minors~\cite{geelen_grid_2023}.

\begin{fact}\label{fact:comparabilityGrid}
    Every circle graph on $n$ vertices is isomorphic to a vertex minor of the $3n \times 3n$ comparability grid.
\end{fact}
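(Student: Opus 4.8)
The plan is to transport the whole statement into the combinatorics of double-occurrence words (DOWs), where both the objects and the two vertex-minor operations have clean descriptions. Recall that a graph is a circle graph exactly when it is the alternance graph of some DOW, and that the operations generating vertex minors act on a DOW $w$ as follows: deleting a vertex $v$ erases both copies of the letter $v$, while local complementation at $v$ reverses the factor of $w$ lying strictly between the two occurrences of $v$ (the chord-diagram form of local complementation, cf.\ \cite{mccarty2021local}; one verifies it directly against the definition of $G * v$). Since circle graphs are closed under both operations \cite{bouchet1988graphic}, the reachable set never leaves the class, so the claim reduces to a purely word-combinatorial one: the length-$2n$ DOW representing $G$ can be obtained from a DOW of the $3n \times 3n$ comparability grid by a sequence of factor-reversals and letter-deletions.

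Next I would pin down an explicit DOW for the grid. The $m \times m$ comparability grid (with $m = 3n$) is the comparability graph of the two-dimensional product order on $[m]^2$, and by the classical Dushnik--Miller correspondence between two-dimensional posets and permutations it is a permutation graph, hence a circle graph with a transparent chord diagram (cf.\ \Cref{fig:comparability_grid}). Concretely, the DOW $W_m$ obtained by recording the vertices $(i,j)$ in lexicographic $(i,j)$-order and then again in lexicographic $(j,i)$-order has the comparability grid as its alternance graph: two distinct vertices alternate in $W_m$ iff they keep the same relative order in both listings, which happens iff they are comparable in the product order. This gives a concrete, highly structured word of length $2m^2$ to carve from, and simultaneously explains why the grid is a circle graph at all.

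With the framework in place, the realization is the heart of the argument. Given the target DOW $w = a_1 \cdots a_{2n}$, I would build it inside $W_{3n}$ incrementally, committing the chords of $w$ one at a time while maintaining the invariant that the committed chords already form the correct sub-pattern and the uncommitted letters remain in a disjoint ``reservoir'' region of the grid; all unused chords are deleted at the end. The factor of three is what should supply the slack: for each of the $n$ chords one allocates a block of grid rows/columns rich enough both to toggle its crossing pattern against the previously committed chords (via local complementations whose reversal factors are confined to that block) and to leave the committed prefix untouched. A useful consistency check is that rank width cannot increase under vertex minors (local complementation preserves the cut-rank function and deletion cannot raise it), so one needs $\mathrm{rwd}(\Gamma_{3n}) \ge \mathrm{rwd}(G)$; since $\mathrm{rwd}(\Gamma_m) = \Theta(m)$ this holds with room to spare, indicating that the true bottleneck is the combinatorial placement rather than a shortage of rank width.

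The main obstacle is precisely the non-locality of local complementation: a single factor-reversal flips the crossing status of many chord pairs at once, so greedily correcting one adjacency tends to corrupt others. The real content is therefore the scheduling argument---an ordering of the committed chords, presumably aligned with the grid's poset structure, together with a confinement guarantee ensuring each reversal's factor meets only ``fresh'' territory---and the accompanying verification that $3n$ rows and columns are simultaneously sufficient for this confinement. This bookkeeping is the delicate step; by comparison the reduction to DOWs and the permutation-graph description of the grid are routine, and the full argument is carried out in \cite{geelen_grid_2023}.
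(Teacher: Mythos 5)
The paper does not prove this statement at all: it is imported verbatim as a known result of Geelen, Kwon, McCarty, and Wollan \cite{geelen_grid_2023}, so there is no in-paper argument to compare yours against. Your framing is sound as far as it goes --- the translation to double-occurrence words is correct (vertex deletion erases both occurrences of a letter; local complementation reverses one of the two factors between the occurrences, which is the standard chord-diagram description), and your explicit word $W_m$ for the comparability grid (lex $(i,j)$-order followed by lex $(j,i)$-order, with alternance iff the two listings agree on relative order) is a correct permutation-graph presentation of the grid.

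However, the heart of the claim is exactly the step you do not carry out: showing that an arbitrary length-$2n$ double-occurrence word can actually be produced from $W_{3n}$ by factor-reversals and deletions, and in particular that side length $3n$ suffices. Your ``incremental commitment with a reservoir'' plan is only a plan --- you yourself identify the non-locality of factor-reversal as the obstacle, propose no concrete ordering or confinement scheme that overcomes it, and give no accounting that would justify the constant $3$. The rank-width comparison $\mathrm{rwd}(\Gamma_{3n}) \ge \mathrm{rwd}(G)$ is a necessary condition, not evidence that the embedding exists. So as written this is a reduction of the statement to an equivalent combinatorial statement plus an appeal to the same external reference the paper cites; the genuine content of \cref{fact:comparabilityGrid} remains unproved in your proposal.
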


Recall that we are interested in families of graphs which contain all circle graphs as vertex minors. Due to Fact~\ref{fact:comparabilityGrid}, it is equivalent to consider families of graphs which contain all $n \times n$ comparability grids as vertex minors. The minimal example of such a family would be the family of $n \times n$ comparability grids itself. Indeed, the statement that any family of graphs with unbounded rank width must contain all circle graphs as vertex minors can be phrased with reference to comparability grids.

\begin{theorem}[{\cite[Theorem 1.3]{geelen_grid_2023}}]
    There exists a function $f: \mathbb{Z}\rightarrow\mathbb{Z}$ such that for all positive integers $n$, every graph of rank width at least $f(n)$ has a vertex minor isomorphic to the $n \times n$ comparability grid. 
\end{theorem}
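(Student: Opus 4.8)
The plan is to follow the template of the Robertson--Seymour grid theorem for tree width, but with every notion replaced by its rank-analytic counterpart: tree width becomes rank width, minors become vertex minors, and the planar grid becomes the comparability grid. The first reduction is to reformulate rank width as the branch width of the cut-rank connectivity function $\lambda_G(X) = \mathrm{cutrk}_X(G) = \mathrm{rank}_{\mathbb{F}_2}\Gamma[X, V\setminus X]$, which is symmetric and submodular. Asking that $\mathrm{rwd}(G) \geq f(n)$ then says precisely that $\lambda_G$ admits no branch decomposition of small width, so by the standard branch-width/tangle duality (extended from graphs to symmetric submodular functions) there is a tangle of large order in $\lambda_G$. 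This tangle is the structural certificate we will mine.

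Second, from a tangle of large order I would extract a large \emph{well-linked set} $W \subseteq V(G)$: a set such that every roughly balanced bipartition of $W$ has large cut-rank. This is the rank-width analogue of the well-linked vertex set used in the graph-minor argument, and the extraction is a fairly routine consequence of tangle theory once submodularity of $\lambda_G$ is in hand. The size of $W$ can be driven to grow with $f(n)$, so it suffices to convert a sufficiently large well-linked set into an $n\times n$ comparability-grid vertex minor.

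Third---and this is the crux---I would perform that conversion. In the graph-minor world one routes many vertex-disjoint paths between the two sides of a well-linked set and weaves them into a grid; here the analogous object is a collection of linearly independent ``rank connections,'' and the weaving must be realized using only local complementation and vertex deletion. The natural route is to pass through Bouchet's isotropic-system picture (equivalently an $\mathbb{F}_4$-representation), under which vertex minors of $G$ correspond to minors of an associated combinatorial object whose branch width matches $\mathrm{rwd}(G)$. One then invokes a grid theorem for representable structures to obtain a large generic grid-like minor, and finally argues that this minor can be reduced, by further local complementations, to the comparability grid itself. The identification with the comparability grid \emph{specifically} (rather than some arbitrary high-rank-width pattern) is exactly where its role as a \emph{universal} circle graph---every smaller circle graph appears as its vertex minor, per \cref{fact:comparabilityGrid}---enters the argument.

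The main obstacle is this third step: controlling $\mathbb{F}_2$-rank under the nonlinear operation of local complementation. Unlike disjoint paths, which compose transparently into grids, a well-linked set in the rank sense carries only linear-algebraic data, and forcing that data into the rigid staircase pattern of the comparability grid requires a delicate inductive ``peeling'' that strips off one layer of the grid while preserving enough well-linkedness to continue. Making the quantitative dependence $f(n)$ explicit, and certifying that the extracted substructure is genuinely isomorphic to the comparability grid, is the technical heart of the argument and the part I expect to be hardest.
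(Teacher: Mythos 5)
First, note that the paper does not prove this statement at all: it is imported verbatim from Geelen, Kwon, McCarty, and Wollan \cite{geelen_grid_2023} (their Theorem~1.3) and used as a black box, so there is no in-paper proof to compare against. Judged on its own terms, your proposal is a plausible high-level roadmap whose first two stages are genuinely standard: cut-rank is a symmetric submodular connectivity function, rank width is the branch width of that function, the tangle/branch-width duality extends from graphs to such functions (Oum--Seymour), and extracting a large well-linked set from a high-order tangle is routine. None of that is in dispute, and it does mirror the opening moves of the known argument.

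The gap is that your third stage --- which you yourself identify as the crux --- is not an argument but a wish list, and its central move is circular. There is no off-the-shelf ``grid theorem for representable structures'' that one can invoke to produce a comparability-grid vertex minor: the grid theorem for $\mathbb{F}_q$-representable matroid minors does not transfer to vertex minors of graphs or to isotropic systems, and establishing precisely such a transfer is the content of \cite{geelen_grid_2023}, a long and technically demanding argument that remained open for years exactly because well-linkedness in the cut-rank sense carries only $\mathbb{F}_2$-linear data that does not compose under local complementation the way disjoint paths compose under contraction. The ``delicate inductive peeling'' that you say would strip off grid layers while preserving well-linkedness is exactly the missing theorem, not a step toward it; likewise, the final reduction of ``some generic grid-like minor'' to the comparability grid specifically is asserted rather than argued, with the universality of the comparability grid among circle graphs (\cref{fact:comparabilityGrid}) invoked only as motivation. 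As written, the proposal reduces the statement to an unproven claim of at least equal difficulty, so it does not constitute a proof.
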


\end{document}